\documentclass[a4paper,11pt]{article}
\usepackage{amsmath,amssymb}
\usepackage{bm,dsfont,mathtools,stmaryrd}
\usepackage{xparse,xstring,xspace}
\usepackage{amsthm,thmtools}
\usepackage{mathtools}
\usepackage[top=80pt, bottom=90pt, left=70pt, right=70pt]{geometry}
\usepackage{upref}
\usepackage{comment}
\usepackage{enumitem}
\usepackage[dvipdfmx]{graphicx}
\usepackage[%
    bookmarksnumbered,
    hypertexnames=false,
    pdfdisplaydoctitle,
    pdfusetitle,
    unicode,
    colorlinks=true,
    linkcolor=blue,
    citecolor=blue
]{hyperref}
\usepackage{color}
\usepackage[capitalise,noabbrev]{cleveref}

\crefname{line}{Line}{Lines}
\Crefname{line}{Line}{Lines}

\usepackage{autonum}
\makeatletter
\autonum@generatePatchedReferenceCSL{Cref}
\makeatother

\usepackage{mleftright}
\mleftright

\usepackage{algorithm}
\usepackage{algpseudocode}
\usepackage{bbm}
\usepackage{setspace}

\declaretheorem[style=plain,numberwithin=section,name=Theorem]{theorem}
\declaretheorem[style=plain,sibling=theorem,name=Lemma]{lemma}
\declaretheorem[style=plain,sibling=theorem,name=Proposition]{proposition}

\declaretheorem[style=plain,sibling=theorem,name=Claim]{claim}

\crefname{theorem}{Theorem}{Theorems}
\crefname{proposition}{Proposition}{Propositions}
\crefname{lemma}{Lemma}{Lemmas}
\crefname{exmp}{Example}{Examples}
\crefname{corollary}{Corollary}{Corollaries}
\crefname{claim}{Claim}{Claims}
\crefname{remark}{Remark}{Remarks}
\crefname{section}{Section}{Sections}

\newcommand{\Z}{\mathbb{Z}}
\newcommand{\R}{\mathbb{R}}

\newcommand{\cI}{\mathcal{I}}
\newcommand{\cR}{\mathcal{R}}
\newcommand{\cS}{\mathcal{S}}
\newcommand{\cL}{\mathcal{L}}
\newcommand{\cE}{\mathcal{E}}
\newcommand{\dcov}{d_\mathrm{cov}}
\newcommand{\dsum}{d_\mathrm{sum}}
\newcommand{\ddiv}{d_\mathrm{div}}
\newcommand{\SumkDiverse}{{\sc Sum-$k$-Diverse}\xspace}
\newcommand{\CovkDiverse}{{\sc Cov-$k$-Diverse}\xspace}
\newcommand{\Mincut}{{\sc Unweighted Minimum\ $s$-$t$ Cut}\xspace}
\newcommand{\StableMatching}{{\sc Stable Matching}\xspace}

\newcommand{\condS}{{\rm S}}
\newcommand{\condR}{{\rm R}}
\newcommand{\condT}{{\rm T}}

\title{
A general framework for finding diverse solutions via network flow\\
and its applications
}
\author{Yuni Iwamasa\thanks{Graduate School of Informatics, Kyoto University, Kyoto 606-8501, Japan. Email: \texttt{iwamasa@i.kyoto-u.ac.jp}, \texttt{morihira.shunya.27e@st.kyoto-u.ac.jp}} 
\and 
Tomoki Matsuda\thanks{School of Computing, Institute of Science Tokyo, Tokyo 152-8550, Japan. Email: \texttt{sumita@comp.isct.ac.jp}} 
\and 
Shunya Morihira\footnotemark[1]
\and 
Hanna Sumita\footnotemark[2]}

\date{}

\begin{document}

\maketitle

\begin{abstract}
In this paper, we present a general framework for efficiently computing diverse solutions to combinatorial optimization problems.
Given a problem instance, the goal is to find $k$ solutions that maximize a specified diversity measure; the sum of pairwise Hamming distances or the size of the union of the $k$ solutions.
Our framework applies to problems satisfying two structural properties: (i) All solutions are of equal size and (ii) the family of all solutions can be represented by a surjection from the family of ideals of some finite poset.
Under these conditions, we show that the problem of computing $k$ diverse solutions can be reduced to the minimum cost flow problem and the maximum $s$-$t$ flow problem.
As applications, we demonstrate that both the unweighted minimum $s$-$t$ cut problem and the stable matching problem satisfy the requirements of our framework.
By utilizing the recent advances in network flows algorithms, we improve the previously known time complexities of the diverse problems, which were based on submodular function minimization.
\end{abstract}

\section{Introduction}

Modeling real-world problems as optimization problems requires balancing accuracy and tractability, and many factors are inevitably omitted from the model.
Consequently, finding a single optimal solution to the optimization problem may not be enough because the solution may not be suitable for the real-world problem.
Motivated by this issue, developing algorithms to find \emph{diverse} multiple solutions has recently gain more attention in the field of combinatorial optimization~\cite{baste2022, baste2019,deBerg2023,deBerg2025, eiben2024, fomin2024a, fomin2024b, Hanaka2023, hanaka2021}. 

We formally define the \emph{$k$-diverse problem} for a positive integer $k$,
which is our target problem.
Let $\ddiv$ be a function that measures the ``diversity'' for a $k$-tuple of subsets of some arbitrary fixed ground set $E$.
For a combinatorial problem {\sc Prob},
the \emph{$k$-diverse problem} of {\sc Prob} asks, given an instance $\mathbf{I}$ of {\sc Prob},
to find a $k$-tuple $(S_1, S_2, \dots, S_k)$ of solutions of $\mathbf{I}$
that maximizes the fixed measure $\ddiv$.
Typical examples of the measure of diversity include the following:
\begin{align}
    \dsum(S_1, S_2, \dots, S_k)  \coloneqq \sum_{1\leq i<j\leq k}\left|S_i\triangle S_j\right|, \qquad
    \dcov(S_1, S_2, \dots, S_k)  \coloneqq \left|\bigcup_{1\leq i\leq k}S_i\right|.
\end{align}
If we use $\dsum$ (resp.~$\dcov$) as the measure,
then we refer to the corresponding $k$-diverse problem as
\SumkDiverse (resp.~\CovkDiverse).

In this paper, we present a general framework for efficiently finding diverse solutions to combinatorial optimization problems.
In particular, we focus on \SumkDiverse/\CovkDiverse of a combinatorial problem {\sc Prob} having two properties~\eqref{cond:S} and~\eqref{cond:R} introduced below.
The first property~\eqref{cond:S} states that all solutions of each instance of {\sc Prob} have the same size:
\begin{enumerate}[label=(\condS),ref=\condS]
\item \label{cond:S}
For any instance $\mathbf{I}$ of {\sc Prob}, there exist a finite ground set $E$ and a positive integer $q$ such that
the family $\cS(\mathbf{I})$ of solutions of $\mathbf{I}$ consists of subsets of $E$ with the same size $q$.
\end{enumerate}
Suppose that the combinatorial problem {\sc Prob} has the property~\eqref{cond:S}.
Let $(P, \preceq)$ be a finite poset with minimum element $\bot$ and maximum element $\top$.
For such a poset $P$, let $P^*$ denote the poset obtained from $P$ by removing $\bot$ and $\top$.
For a map $r \colon E \to P^2$ given by $e \mapsto (e^+, e^-)$ with $e^+ \preceq e^-$,
we define a set ${\sup}_r(I)$ by
\begin{align}
    {\sup}_r(I) \coloneqq \{ e \in E \mid e^+ \in I \cup \{ \bot \} \not\ni e^- \}
\end{align}
for any ideal $I \in \cI(P^*)$ of $P^*$.
The second property~\eqref{cond:R} states that every instance $\mathbf{I}$ of {\sc Prob} admits a map
$r \colon E \to P^2$
such that ${\sup}_r$ forms a surjection from the family $\cI(P^*)$ of ideals of $P^*$ to the family $\cS(\mathbf{I})$ of solutions of $\mathbf{I}$.
\begin{enumerate}[label=(\condR),ref=\condR]
\item \label{cond:R}
For any instance $\mathbf{I}$ of {\sc Prob}, there exist a finite poset
$(P, \preceq)$
with different minimum element $\bot$ and maximum element $\top$ and a map $r \colon E \to P^2$ given by $e \mapsto (e^+, e^-)$ with $e^+ \preceq e
^-$ such that
\begin{align}
    \cS(\mathbf{I}) = \{ {\sup}_r(I) \mid I \in \cI(P^*) \}.
\end{align}
\end{enumerate}
We refer to a map $r$ in the property~\eqref{cond:R} as a \emph{reduction map}.

Our framework exploits network flow techniques to efficiently compute diverse solutions.
To utilize them, we will construct a directed acyclic graph (DAG) to represent a poset.
Here, we say that a DAG $G$ \emph{represents} a poset $(P,\preceq)$ with different minimum element $\bot$ and maximum element $\top$
if the vertex set of $G$ is $P$ with $\top$ and $\bot$ being a source and sink, respectively,
and
for any $u,v \in P^*$, there is a $u$-$v$ path in $G$ if and only if $u \succeq v$.

Our main result is stated as follows.
We remark that we focus on deterministic algorithms in this paper.

\begin{theorem}\label{thm:general}
    Suppose that a combinatorial problem {\sc Prob} has the properties~\eqref{cond:S} and~\eqref{cond:R}.
    Then,
    for any instance
    $\mathbf{I}$ of {\sc Prob},
    the problems
    \SumkDiverse and \CovkDiverse of {\sc Prob} can be solved in
    $O(T_P(\mathbf{I}) + T_r(\mathbf{I}) + (|A| + k|E|)^{1+o(1)})$ time
    and in
    $O(T_P(\mathbf{I}) + T_r(\mathbf{I}) + (|A| + |E|)^{1+o(1)} \log^2 k + kq)$ time,
respectively.
Here,
$T_P(\mathbf{I})$ is the time required to construct a DAG representing the poset $P$ in~\eqref{cond:R},
$T_r(\mathbf{I})$ is the time required to construct a reduction map $r$ in~\eqref{cond:R},
and
$A$ is the set of arcs of a DAG $G$ that represents $P$.
\end{theorem}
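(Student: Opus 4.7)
The strategy is to reduce both diverse problems to flow problems on a single augmented DAG built from $G$ and $r$, and then to invoke recent near-linear-time flow algorithms. First, build a capacitated network $G'$ by retaining every arc of $G$ with infinite capacity and adding, for each $e\in E$ with $r(e)=(e^+,e^-)$, a unit-capacity \emph{element arc} from $e^-$ to $e^+$. Any finite $s$-$t$ cut of $G'$ with source $\top$ and sink $\bot$ cannot sever a $G$-arc; since $G$-arcs run from greater to smaller poset elements, the sink side must be downward closed in $P$ and hence equal to $I\cup\{\bot\}$ for a unique ideal $I\in\cI(P^*)$. Conversely every ideal yields such a cut, and the element arcs it crosses are precisely ${\sup}_r(I)$. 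By~\eqref{cond:R} this is a bijection between finite $s$-$t$ cuts of $G'$ and $\cS(\mathbf{I})$, and by~\eqref{cond:S} every such cut has capacity $q$; hence these are exactly the minimum $s$-$t$ cuts of $G'$. Both diverse problems thus reduce to picking $k$ minimum $s$-$t$ cuts of $G'$ maximizing $\dsum$ (resp.\ $\dcov$) over their element arcs.

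For \SumkDiverse, parametrize a $k$-tuple of ideals by a monotone function $z\colon P\to\{0,\dots,k\}$ with $z_\top=0$, $z_\bot=k$, and $z_u\le z_v$ along every $G$-arc $(u,v)$; the ideals are recovered as the level sets $I_j=\{v:z_v\ge j\}$, and the multiplicity $a_e\coloneqq|\{i:e\in S_i\}|$ equals $z_{e^+}-z_{e^-}$. Because $\sum_e a_e=kq$ is fixed by~\eqref{cond:S},
\[
\dsum \;=\;\sum_{e\in E}a_e(k-a_e)\;=\;k^2 q-\sum_{e\in E}a_e^2,
\]
so maximizing $\dsum$ is equivalent to minimizing the separable convex cost $\sum_e\binom{z_{e^+}-z_{e^-}}{2}$ over the monotone integer $z$'s. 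Using the standard reduction of convex tension to min-cost flow—replace each element arc by $k$ parallel unit-capacity arcs of costs $0,1,\dots,k-1$, encoding the marginal increments of $\binom{\cdot}{2}$—this becomes one integer min-cost flow on a network of $O(|A|+k|E|)$ arcs, and the recent near-linear-time min-cost flow algorithm delivers the first claimed time bound.

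For \CovkDiverse we instead want $k$ minimum cuts of $G'$ whose union of element arcs is of maximum size. The plan is an $O(\log^2 k)$-round scheme built on top of a max-flow oracle for $G'$: iteratively enlarge the pool of selected cuts by locating a new minimum cut that maximally augments the current union $U$—realized by one max-flow on an auxiliary graph that penalizes arcs in $U$ (e.g., by saturating them in a preflow), combined with a binary search over the optimum increment—and interleave this with a doubling/divide-and-conquer strategy on $k$ to amortize the work. Each of the $O(\log^2 k)$ stages runs in $(|A|+|E|)^{1+o(1)}$ time, and emitting the $k$ solution sets costs an additional $O(kq)$. The main obstacle is verifying the two structural reductions: that the integer min-cost flow of Step~2 indeed encodes an optimal monotone $z$ whose level-set decomposition yields $k$ valid ideals realizing $\max\dsum$, and that the augmentation scheme of Step~3 attains $\max\dcov$ within $O(\log^2 k)$ max-flow calls. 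Both rest on the distributive-lattice structure of $\cI(P^*)$ inherited via~\eqref{cond:R}; once these structural lemmas are in place, the stated running times follow directly from the recent flow algorithms.
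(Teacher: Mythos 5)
Your handling of \SumkDiverse is, in substance, the paper's own route: your monotone assignment $z\colon P\to\{0,\dots,k\}$ with $z(\bot)=k$, $z(\top)=0$ is exactly the paper's $k$-potential, the identity $\mu_e(\mathbf{S})=z_{e^+}-z_{e^-}$ together with the level-set recovery $I_j=\{v: z_v\ge j\}$ is the content of the paper's \cref{lem:d=H} (note that ${\sup}_r$ is only a surjection, not a bijection, onto $\cS(\mathbf{I})$, but surjectivity is all that is needed), and the parallel-arc encoding of the separable convex tension cost is the Ahuja--Hochbaum--Orlin-style dual reduction to minimum cost flow that the paper carries out in \cref{subsec:mincostflow}. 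What you leave implicit --- enforcing monotonicity along the poset arcs and the boundary values with big-$M$ capacities and demands, and recovering an optimal $z$ from the minimum cost flow via a feasible potential in the residual graph --- is exactly what the paper's \cref{lem:min-cost-potential} and the surrounding construction supply; these are deferrals, not errors, and with them your first bound $O(T_P+T_r+(|A|+k|E|)^{1+o(1)})$ goes through.

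The genuine gap is \CovkDiverse. Your proposed ``$O(\log^2 k)$-round'' scheme --- greedily adding a minimum cut that maximally augments the current union, accelerated by binary search and doubling --- is unsubstantiated on both counts, as you yourself concede: (i) exactness of greedy is not obvious (maximum coverage greedy is in general only a $(1-1/e)$-approximation, so you would need a lattice-specific exchange argument that you do not give), and (ii) there is no mechanism described by which doubling or divide-and-conquer compresses $k$ augmentation steps into $O(\log^2 k)$ max-flow calls. You appear to have reverse-engineered the $\log^2 k$ in the target bound as an iteration count, but in the paper it has a completely different origin: \CovkDiverse is solved by the \emph{same} min-cost-flow reduction as \SumkDiverse, only with $\varphi(x)=\max\{0,x-1\}$, whose breakpoint set $B_k(\varphi)$ is just $\{0,1,k\}$; hence the reduced network has $O(|A|+|E|)$ arcs independently of $k$, the factor $\log^2 k$ is the $\log(\text{capacity})\cdot\log(\text{cost})$ dependence of the near-linear minimum cost flow algorithm (costs at most $k$, capacities of order $|E|\varphi(k)$), and the $+kq$ term is merely the time to write out the $k$ solutions. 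As written, your Step~3 does not establish the claimed running time, so the \CovkDiverse half of the theorem remains unproven in your proposal; replacing it by the observation that $\max\{0,x-1\}$ is a discrete convex function with $O(1)$ breakpoints and reusing your Step~2 machinery would close the gap.
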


We obtain \Cref{thm:general} via the reduction of \SumkDiverse/\CovkDiverse of {\sc Prob} to a classical network flow problem, called the \emph{minimum cost flow problem}.
This enables us to utilize the state-of-the-art algorithm~\cite{brand2023} for the minimum cost flow problem in solving \SumkDiverse/\CovkDiverse of {\sc Prob}.
To this end, we introduce an intermediate problem called the \emph{minimum $k$-potential problem}, and reduce the $k$-diverse problems to the minimum $k$-potential problem.
Then we further reduce the minimum $k$-potential problem to the minimum cost flow by utilizing the idea of Ahuja, Hochbaum, and Orlin~\cite{ahuja2003}, who dealt with a more general problem.
Our novelty is to identify the properties~\eqref{cond:S} and~\eqref{cond:R} as a sufficient condition to reduce the $k$-diverse problem to the minimum $k$-potential problem.

We also provide the reduction of
\SumkDiverse/\CovkDiverse of {\sc Prob} to another classical network flow problem, called the \emph{maximum $s$-$t$ flow problem} (or the \emph{minimum $s$-$t$ cut problem}).
The construction is simple: Make $k$ copies of the digraph representing a poset, and reverse and add arcs so that any minimum $s$-$t$ cut corresponds to an optimal set of diverse solutions.
The drawback is that, even when we use the state-of-the-art algorithm~\cite{brand2023} for the maximum $s$-$t$ flow problem, the running-time of this reduction is slightly worse than that in~\cref{thm:general}.
However,
the actual running-time depends on the implementation of network flow algorithms.
Since the maximum $s$-$t$ flow problem is simpler than the minimum cost flow, it is possible that this reduction performs faster in practice.

As applications of our framework, we demonstrate that two classical combinatorial problems, \Mincut and \StableMatching, have properties~\eqref{cond:S} and~\eqref{cond:R}.
Here, \Mincut is the problem of finding an $s$-$t$ cut of a given digraph $G$ having the minimum size,
and \StableMatching is the problem of finding a stable matching of given preference lists;
formal definitions are given in \Cref{subsec:k-diverse-mincut,subsec:k-diverse-stable-matching}.
By applying~\cref{thm:general} to \SumkDiverse/\CovkDiverse of the two problems, we obtain the following results.

\begin{theorem}\label{thm:mincut-stablematching}
    \begin{enumerate}[{label={\textup{(\arabic*)}}}]
        \item The problems \SumkDiverse and \CovkDiverse of \Mincut can be solved
        in $O(n + (km)^{1 + o(1)})$ time
        and
        in $O(n + m^{1 + o(1)} \log^2 k + kq)$ time,
        respectively,
        where $n$ denotes the number of vertices of the input digraph $G$,
        $m$ the number of arcs of $G$,
        and $q$ the size of a minimum $s$-$t$ cut of $G$.
        \item The problems \SumkDiverse and \CovkDiverse of \StableMatching can be solved
        in $O((kn^2)^{1 + o(1)})$ time
        and
        in $O(n^{2 + o(1)}\log^2 k + kn)$ time,
        respectively,
        where $n$ denotes the size of the ground set $U$ (or $V$) of the input instance $(U, V; (\leq_u)_{u \in U}, (\leq_v)_{v \in V})$.
    \end{enumerate}
\end{theorem}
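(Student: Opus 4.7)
The plan is to apply \Cref{thm:general} to both \Mincut and \StableMatching by verifying the structural properties~\eqref{cond:S} and~\eqref{cond:R}, and then by bounding $T_P$, $T_r$, $|A|$, and $|E|$ in each case.

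For \Mincut, let $G=(V,A)$ be the input digraph with $|V|=n$, $|A|=m$, take $E \coloneqq A$ as the ground set, and observe that every minimum $s$-$t$ cut has exactly $q$ arcs, so~\eqref{cond:S} is immediate. For~\eqref{cond:R}, I would invoke the classical Picard--Queyranne theorem: after computing a maximum $s$-$t$ flow $f$, let $G_f$ denote the residual digraph and let $P^*$ be the set of strongly connected components of $G_f$ other than those containing $s$ or $t$, ordered by reverse reachability in $G_f$. Adding the component of $s$ as $\top$ and that of $t$ as $\bot$ yields a poset $P$, and the source sides of minimum $s$-$t$ cuts are in bijection with the ideals of $P^*$. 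Defining $r(e) \coloneqq (C_u, C_v)$ for each arc $e=(u,v) \in E$, where $C_x$ denotes the component containing $x$, one checks that $e^+ \preceq e^-$ holds and that ${\sup}_r(I)$ is exactly the arc set of the cut induced by $I$. The maximum flow algorithm of~\cite{brand2023} yields $T_P = O(n + m^{1+o(1)})$ (including SCC decomposition), the reduction map is built in $T_r = O(n + m)$, and the representing DAG has $|A| = O(n + m)$. Plugging into \Cref{thm:general} gives $n + m^{1+o(1)} + (km)^{1+o(1)} = O(n + (km)^{1+o(1)})$ for \SumkDiverse, and analogously the bound for \CovkDiverse.

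For \StableMatching, take $E$ to be the set of mutually acceptable pairs $(u,v) \in U \times V$, so $|E| \leq n^2$; every stable matching is perfect of size $q = n$, giving~\eqref{cond:S}. For~\eqref{cond:R}, I would use the rotation-poset description of the stable matching lattice due to Irving--Leather and Gusfield--Irving: there is a poset $P^*$ of \emph{rotations} whose ideals are in bijection with the stable matchings, the empty ideal corresponding to the man-optimal matching $M_\top$ and the full ideal to the woman-optimal matching $M_\bot$, and exposing a rotation swaps a specific set of pairs in the current matching for another. For each $e \in E$, let $\rho^-(e)$ be the unique rotation (if any) whose exposure removes $e$ and $\rho^+(e)$ the unique rotation whose exposure introduces $e$; set $r(e) \coloneqq (\rho^+(e), \rho^-(e))$, substituting $\bot$ or $\top$ when the relevant rotation does not exist (i.e.\ $e \in M_\top$ or $e$ appears in no stable matching). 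A standard argument then shows $\rho^+(e) \preceq \rho^-(e)$ and ${\sup}_r(I) = M_I$ for every ideal $I$. The Gusfield--Irving algorithm produces the rotation poset together with all of $\rho^\pm$ in $O(n^2)$ time, and the DAG representing $P$ has $|A| = O(n^2)$; substitution into \Cref{thm:general} yields the claimed complexities $O((kn^2)^{1+o(1)})$ and $O(n^{2+o(1)}\log^2 k + kn)$.

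The main obstacle will be the careful verification that $r$ is a valid reduction map in each case: confirming $e^+ \preceq e^-$ and that ${\sup}_r$ is exactly the desired surjection onto $\cS(\mathbf{I})$. For \Mincut this reduces to a clean restatement of Picard--Queyranne, whereas for \StableMatching it requires aligning the combinatorics of rotation exposure with the formal definition of ${\sup}_r$ and handling boundary pairs via $\bot$ and $\top$.
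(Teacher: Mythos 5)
Your overall route is the paper's: verify \eqref{cond:S} and \eqref{cond:R} via the Picard--Queyranne structure for \Mincut and via the rotation poset for \StableMatching, then plug the size bounds into \Cref{thm:general}. The \StableMatching half is essentially the paper's construction in different clothing: the paper's pre-reduction map $(u,v)\mapsto\bigl((u,v),(u,v_u^-)\bigr)$ on P-sets becomes, under the bijection between rotations and blocks of the partition, exactly your pair (producing rotation, eliminating rotation), and the complexity bookkeeping matches. Do pin down the boundary assignment, though: a pair contained in \emph{no} stable matching must receive $e^+=e^-$ (e.g.\ both equal to $\top$), since assigning it $(\bot,\top)$ would place it in every ${\sup}_r(I)$; your phrase ``substituting $\bot$ or $\top$'' leaves this open.

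The genuine gap is in the \Mincut half, precisely at the step you wave through as ``one checks that $e^+\preceq e^-$.'' Setting $r(e)=(C_u,C_v)$ for \emph{every} arc $e=(u,v)$ fails for unsaturated arcs: if $f(e)=0$ then $G_f$ contains $e$ itself, so the head's component is reachable from the tail's, which is the \emph{opposite} comparability to that of saturated arcs (whose reverse arc lies in $G_f$); hence no uniform formula $(C_u,C_v)$ can satisfy $e^+\preceq e^-$ for both kinds of arcs, whichever orientation of the component poset you choose. This is not cosmetic: in the construction of $G_{\mathbf I}$ the arc $(e^-,e^+)$ is added with positive weight, creating a directed cycle with $A_P$ and forcing $p(C_u)=p(C_v)$ in every $k$-potential, which can erase legitimate minimum cuts. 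Concretely, take the two arc-disjoint paths $s\to a\to t$ and $s\to b\to t$ plus the unsaturated arc $(b,a)$: the minimum cuts correspond to the source sides $\{s\},\{s,a\},\{s,a,b\}$, but the forced equality $p(a)=p(b)$ removes the ideal $\{a\}$ and with it the cut $\{(a,t),(s,b)\}$ from the correspondence of \Cref{lem:d=H}, so \Cref{thm:general} cannot be invoked on the instance your map produces. The repair is exactly the paper's pre-reduction map: apply $a\mapsto(C_u,C_v)$ only to arcs with $f(a)=1$ and send every arc with $f(a)=0$ to $(\top,\top)$ (the paper uses $(t,t)$ before taking blocks), which is legitimate because unsaturated arcs lie in no minimum cut. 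Relatedly, your roles of $\bot$ and $\top$ are reversed relative to what ${\sup}_r$ requires: for $I\cup\{\bot\}$ to encode the source side $X$ so that ${\sup}_r(I)=\Delta^+(X)$ with $e^+$ the tail's block, $\bot$ must be the block containing $s$ and $\top$ the block containing $t$.
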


The polynomial-time solvability of \SumkDiverse/\CovkDiverse of \Mincut has already been shown by De Berg, Mart\'{i}nez, and Spieksma~\cite{deBerg2023}.
Very recently, independently of our work\footnotemark, in~\cite{deBerg2025}, the same authors develop a framework for solving \SumkDiverse/\CovkDiverse of a combinatorial problem having a certain property (we mention it as the property~\eqref{cond:T} in \Cref{subsec:k-diverse-totalorder}).
\footnotetext{The present paper is based on our graduation thesis~\cite{morihira2024} and master thesis~\cite{matsuda2025}.}
They show by using their framework that \SumkDiverse/\CovkDiverse of \StableMatching can be solved in polynomial time.
Both of their tractability results in~\cite{deBerg2023,deBerg2025} are based on the polynomial-time solvability of the \emph{submodular function minimization (SFM)},
which is a general and abstract problem appearing in the fields of combinatorial optimization and theoretical computer science~\cite{Bach2013-ti,Fujishige2005-sp}.
Even if we use the state-of-the-art algorithm for SFM given in~\cite{Jiang2021},
the running-time of their algorithms is (polynomial but) not very fast; it takes $O(k^5n^5)$ time for \SumkDiverse/\CovkDiverse of \Mincut, and $O(k^5n^9)$ time for \SumkDiverse/\CovkDiverse of \StableMatching.
As in \Cref{thm:mincut-stablematching}, our proposed algorithms for \SumkDiverse/\CovkDiverse of \Mincut/\StableMatching are,
thanks to exploiting network flow structure,
much faster than the previous ones.
Moreover, we show that the framework of De Berg, Mart\'{i}nez, and Spieksma for \SumkDiverse and \CovkDiverse can be captured by our framework,
which allows us to improve the time complexity.

\paragraph{Related work.}
The problem of finding diverse solutions dates back at least 1970's~\cite{glover1977}.
Nowadays, there exists a vast body of literature on the problem of finding diverse solutions, and we mention just a few papers here.
Finding diverse solutions is generally harder than finding a single one.
In early work, finding diverse points with the maximum norm distances was shown to be NP-hard~\cite{kuo1993}.
The $k$-diverse problems with respect to $\dsum$ of some polynomial-time solvable problems, such as the maximum matching problem~\cite{fomin2024a} and the minimum cut problem~\cite{Hanaka2023}, are known to be NP-hard.
Thus, in recent years, there has been a lot of papers on fixed-parameter tractable (FPT) algorithms for NP-hard diverse problems~\cite{baste2022, baste2019,eiben2024, fomin2024a, fomin2024b, hanaka2021}.
Meanwhile, polynomial-time algorithms are known for some other $k$-diverse problems with respect to the diversity measure $\dsum$ or $\dcov$,
including
those of the spanning-tree problem~\cite{hanaka2021} and the shortest $s$-$t$ path problem~\cite{hanaka2022} as well as \Mincut~\cite{deBerg2023} and \StableMatching~\cite{deBerg2025}, which have been already mentioned.

\paragraph{Organization.}
In \Cref{sec:preliminaries}, we introduce basic definitions and notation.
In \Cref{sec:reduction},
we reduce \SumkDiverse/\CovkDiverse of a combinatorial problem having the properties~\eqref{cond:S} and~\eqref{cond:R} to the minimum $k$-potential problem.
In \Cref{sec:reduction-to-mcf},
we solve the minimum $k$-potential problem by using network flow algorithms; the minimum cost flow (\Cref{subsec:mincostflow}) and the maximum flow (\Cref{subsec:maxflow}).
In \Cref{sec:applications}, we describe application of our framework.
We first present a common strategy for constructing a reduction map (\Cref{subsec:strategy}), and then demonstrate the construction for the $k$-diverse problems of \Mincut (\Cref{subsec:k-diverse-mincut}) and \StableMatching (\Cref{subsec:k-diverse-stable-matching}).
We also show in \Cref{subsec:k-diverse-totalorder} that our framework can capture the framework of De Berg, Mart\'{i}nez, and Spieksma~\cite{deBerg2025} in the case of \SumkDiverse/\CovkDiverse.

\section{Preliminaries}\label{sec:preliminaries}
Let $\Z$, $\Z_+$, $\R$, and $\R_+$ denote the set of integers, nonnegative integers,
real numbers,
and nonnegative real numbers, respectively.  
For a nonnegative integer $k$, let $[k] \coloneqq \{1, 2, \dots, k\}$
and $[0,k] \coloneqq \{0,1,\dots, k\}$.
The symmetric difference $(S \setminus T) \cup (T \setminus S)$ of sets $S$ and $T$ is denoted by $S \triangle T$. 
For a finite set $E$ and a nonnegative integer $q \in \Z_+$,
let $\binom{E}{q}$ denote the family of subsets of $E$ with size $q$, i.e.,
$\binom{E}{q} \coloneqq \{ S \subseteq E \mid |S| = q \}$.
For any finite set $E$, element $e \in E$, and $k$-tuple $\mathbf{S} = (S_1, S_2, \dots, S_k)$ of subsets $S_1, S_2, \dots, S_k$ of $E$,
the multiplicity of $e$ with respect to $\mathbf{S}$, denoted by $\mu_e(\mathbf{S})$, is the number of subsets $S_i$ that contains $e$,
i.e.,
$\mu_e(\mathbf{S}) \coloneqq | \{ i \in [k] \mid e \in S_i \} |$.

A \emph{partially ordered set} (or \emph{poset}) is a pair $(P, \preceq)$ of a set $P$ and a binary relation $\preceq$ over $P$ satisfying, for $x,y,z \in P$, that $x \preceq x$ (reflexivity), $x \preceq y$ and $y \preceq x$ imply $x = y$ (antisymmetry),
and $x \preceq y$ and $y \preceq z$ imply $x \preceq z$ (transitivity).
By $x \prec y$ we mean $x \preceq y$ and $x \neq y$.
We refer to the binary relation $\preceq$ as the \emph{partial order}.
If a partial order $\preceq$ on $P$ is clear from the context, we denote by $P$ a poset and its underlying set interchangeably.
In this paper, we only consider a finite poset, i.e.,
a poset whose underlying set is finite.
Hence,
by a poset we mean a finite poset.
A partial order $\preceq$ on $P$ is called a \emph{total order}
if $x \preceq y$ or $y \preceq x$ holds for any $x, y \in P$.
For
a poset $(P, \preceq)$,
a subset $I \subseteq P$ is called an \emph{ideal}
if $I$ is closed under $\preceq$,
i.e.,
for any $v \in P$ and $u \preceq v$,
we have $u \in P$.
Let $\cI(P)$ denote the set of all ideals of $P$.
If $P$ has the minimum element $\bot$ and maximum element $\top$,
then we denote by $P^*$ the poset obtained from $P$ by removing $\bot$ and $\top$.

Let $G = (V, A)$ be a digraph.
For a vertex subset $X \subseteq V$,
let $\Delta^+_G(X)$ (resp.~$\Delta^-_G(X)$) denote the set of outgoing arcs from (resp.~incoming arcs to) $X$.
If no confusion arises, we omit the subscript $G$ from $\Delta^+_G(X)$ and $\Delta^-_G(X)$.
If $X$ consists of a single vertex $v$, i.e., $X = \{v\}$,
we simply write $\Delta^+(v)$ and $\Delta^-(v)$ instead of $\Delta^+(\{v\})$ and $\Delta^-(\{v\})$,
respectively.
A function $f \colon A \to \Z_+$ from the arc set $A$ to the nonnegative integers is called a \emph{flow} of $G$.
In the case where the digraph $G$ has an arc capacity $c \colon A \to \Z_+$,
a flow $f \colon A \to \Z_+$ is said to be \emph{feasible} (with respect to $c$)
if $f(a) \leq c(a)$ for all $a \in A$.
For a feasible flow $f$ with respect to an arc capacity $c$,
its \emph{residual graph}, denoted by $G_f$, is the digraph whose vertex set is $V$
and arc set is $\{ a \mid a \in A,\ f(a) < c(a) \} \cup \{ a \mid \bar{a} \in A,\ 0 < f(\bar{a}) \}$,
where $\bar{a}$ denotes the reverse arc $(v, u)$ of $a = (u,v)$.
This plays an important role in algorithms for network flow problems (and our algorithms).
For a flow $f \colon A \to \Z_+$,
its \emph{boundary} $\partial f \colon V \to \Z$ is defined by
$\partial f(v) \coloneqq \sum_{a \in \Delta^+(v)} f(a) - \sum_{a \in \Delta^-(v)} f(a)$
for each $v \in V$.

In the following (except for \Cref{subsec:k-diverse-totalorder}), we assume that {\sc Prob} is a combinatorial problem having the properties~\eqref{cond:S} and~\eqref{cond:R}.
For an instance $\mathbf{I}$ of {\sc Prob},
we denote by
$\cS(\mathbf{I})$ the family of solutions of $\mathbf{I}$, 
$T_P(\mathbf{I})$ the time required to construct a DAG representing the poset $P$ in~\eqref{cond:R}, and 
$T_r(\mathbf{I})$ the time required to construct a reduction map $r$ in~\eqref{cond:R}.

\section{Reduction to the minimum \texorpdfstring{$k$}{k}-potential problem}
\label{sec:reduction}
In this section, we provide a reduction from the $k$-diverse problem of
{\sc Prob}
to the \emph{minimum $k$-potential problem},
which we introduce later.

In \SumkDiverse and \CovkDiverse of {\sc Prob},
we can regard the diversity measures $\dsum$ and $\dcov$ as the functions over $(\cS(\mathbf{I}))^k$ for each instance $\mathbf{I}$ of {\sc Prob}.
Since $\cS(\mathbf{I}) \subseteq \binom{E}{q}$ holds by the property~\eqref{cond:S},
we have $\sum_{e \in E} \mu_e(\mathbf{S}) = \sum_{i = 1}^k|S_i| = kq$ for any $\mathbf{S} = (S_1, S_2, \dots, S_k) \in (\cS(\mathbf{I}))^k$,
which is a constant.
Hence,
the functions $\dsum$ and $\dcov$ are representable as
\begin{align}
    \dsum(\mathbf{S})
    &=\sum_{e\in E} \mu_e(\mathbf{S})(k-\mu_e(\mathbf{S}))
  =\mathrm{Const.}-\sum_{e\in E}\mu_e(\mathbf{S})^2,\\
\dcov(\mathbf{S}) &= \sum_{e \in E} \min \{ 1, \mu_e(\mathbf{S}) \}
    = \mathrm{Const.} -\sum_{e \in E} \max\{0, \mu_e(\mathbf{S})-1\}
\end{align}
for each $\mathbf{S} \in (\cS(\mathbf{I}))^k$.
Thus, the problems \SumkDiverse and \CovkDiverse of {\sc Prob},
namely,
the problems of maximizing the functions $\dsum$ and $\dcov$ over $(\cS(\mathbf{I}))^k$,
are equivalent to those of minimizing 
\begin{align}
\dsum^*(\mathbf{S}) \coloneqq \sum_{e \in E}\mu_e(\mathbf{S})^2, \qquad
\dcov^*(\mathbf{S}) \coloneqq \sum_{e\in E} \max\{0, \mu_e(\mathbf{S})-1\},
\end{align}
respectively.

By using the concept of discrete convex functions,
we can uniformly handle these functions $\dsum^*$ and $\dcov^*$ as follows.
A function $\varphi \colon \Z \to \Z$
is said to be \emph{discrete convex}~\cite[Chapter~3.4]{Murota2003-yb}
if
$\varphi(x - 1) + \varphi(x + 1) \geq 2 \varphi(x)$ for all $x \in \Z$,
and said to be \emph{non-decreasing on $\Z_+$}
if $\varphi(x) \leq \varphi(x+1)$ for all $x \in \Z_+$.
For a discrete convex function $\varphi$ with $\varphi(0) = 0$ that is non-decreasing on $\Z_+$,
we define
\begin{align}
    d_\varphi^*(\mathbf{S}) \coloneqq \sum_{e \in E} \varphi(\mu_e(\mathbf{S}))
\end{align}
for $\mathbf{S} \in (\cS(\mathbf{I}))^k$.
Since the functions $x \mapsto x^2$ and $x \mapsto \max \{ 0, x -1 \}$ are discrete convex functions that is non-decreasing on $\Z_+$ and satisfy $0 \mapsto 0$,
both of $\dsum^*$ and $\dcov^*$ admit such representations.

Our framework can be applied to the $k$-diverse problem with respect to the
measure of diversity of the form
\begin{align}
    d_\varphi(\mathbf{S}) \coloneqq \mathrm{Const.} - d_\varphi^*(\mathbf{S})
\end{align}
for $\mathbf{S} \in (\cS(\mathbf{I}))^k$,
where $\varphi$ is a discrete convex function with $\varphi(0) = 0$ that is non-decreasing on $\Z_+$.
In the following (except for \Cref{subsec:maxflow}),
we consider the $k$-diverse problem with respect to $d_\varphi$ of {\sc Prob},
or equivalently,
the problem of minimizing $d_\varphi^*$ over $(\cS(\mathbf{I}))^k$
for a discrete convex function $\varphi$.

We then introduce the \emph{minimum $k$-potential problem}, to which we reduce the problem of minimizing $d_\varphi^*$ later.
Let $G = (V, A)$ be a DAG having unique source vertex ${\top}$ and sink vertex ${\bot}$ with $\top \neq \bot$.
We refer to an assignment $p\colon V \to \mathbb{Z}_+$ of integers to vertices as a \emph{$k$-potential} if $p$ satisfies the following conditions:
\begin{enumerate}[label=(P\arabic*), ref=P\arabic*]
  \item \label{cond:P1}
  $p(\bot)=k$ and $p(\top)=0$.
  \item \label{cond:P2}
  $0\leq p(v)\leq k$ for each $v \in V$.
  \item \label{cond:P3}
  $p$ is monotone non-increasing with respect to $A$, i.e., $p(u)\leq p(v)$ for each $(u,v)\in A$.
\end{enumerate}
In the \emph{minimum $k$-potential problem},
we are given a DAG $G = (V, A)$ having unique source ${\top}$ and sink ${\bot}$, an arc weight $w \colon A \to \Z_+$, and a discrete convex function $\varphi \colon \Z \to \Z$ with $\varphi(0)=0$ that is non-decreasing on $\Z_+$.
The problem asks to find a $k$-potential $p$ of $G$ that minimizes
\begin{align}\label{eq:k-optimal-potential}
    H(p) \coloneqq \sum_{a = (u,v) \in A} w(a) \varphi(p(v)-p(u)).
\end{align}
We note that the function $H$ does not change even if we remove vertices $v \in V \setminus \{ \bot, \top \}$ with $\Delta^+(v) = \Delta^-(v) = \emptyset$,
called \emph{isolated vertices}, from $G$.

Finally, we reduce,
for an instance $\mathbf{I}$ of {\sc Prob},
the problem of minimizing $d_\varphi^*$ over $(\cS(\mathbf{I}))^k$ to the minimum $k$-potential problem by utilizing the property~\eqref{cond:R} as follows.
Let
$(P, \preceq)$
be a poset having minimum element $\bot$ and maximum element $\top$,
and let $r \colon E \to P^2$ be a reduction map as in the property~\eqref{cond:R}.
Then, we construct a DAG $G_\mathbf{I}$ whose vertex set is $V = P$
and whose arc set is $A = A_P \cup A_E$,
where $A_P$ is an arc set such that a DAG $(P, A_P)$ represents the poset $(P, \preceq)$,
and
$A_E \coloneqq \{ (e^-, e^+) \mid e \in E \}$.
We can easily see that the resulting $G_{\mathbf{I}}$ is still a DAG that represents $P$
and has unique source $\top$ and sink $\bot$;
each arc $(e^-, e^+) \in A_E$ is compatible with the partial order $\preceq$ of $P$, since $e^- \succeq e^+$.
An arc weight $w_\mathbf{I} \colon A \to \Z_+$ is defined by
$w_\mathbf{I}(a) \coloneqq |\{ e \in E \mid a = (e^-, e^+) \}|$
for $a \in A$.
Note that $w_\mathbf{I}(a) = 0$ for each $a \in A \setminus A_E$.
We set $\varphi$, which satisfies the non-decreasing property on $\Z_+$ and $\varphi(0) = 0$,
as the input discrete convex function of the minimum $k$-potential problem.
Then, the triple $(G_\mathbf{I}, w_\mathbf{I}, \varphi)$ is an instance of the minimum $k$-potential problem;
its construction time is $|E|$.

Intuitively, any $k$-potential represents the direct sum of $k$ ideals $I_1, I_2, \dots, I_k \in \cI(P^*)$ of $P^*$ as a multiset.
Conversely, for any $k$ ideals in $\cI(P^*)$, there exists a $k$-potential that represents their direct sum as a multiset.
Furthermore, by the property~\eqref{cond:R},
each ideal of $P^*$ corresponds to a solution of $\mathbf{I}$ via ${\sup}_r$.
The following lemma verifies this intuition.

\begin{lemma}\label{lem:d=H}
    For each $k$-tuple $\mathbf{S} \in (\cS(\mathbf{I}))^k$,
    there is a $k$-potential $p_{\mathbf{S}}$ of $G_\mathbf{I}$
    such that
    $H(p_{\mathbf{S}}) = d_\varphi^*(\mathbf{S})$.
    Conversely,
    for each $k$-potential $p$ of $G_\mathbf{I}$, there is a $k$-tuple $\mathbf{S}_p \in (\cS(\mathbf{I}))^k$ of solutions of $\mathbf{I}$ such that
    $H(p) = d_\varphi^*(\mathbf{S}_p)$.
    In particular,
    we can construct $\mathbf{S}_p$ from $p$
    in $O(|E| + kq)$ time.
\end{lemma}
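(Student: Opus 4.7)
The plan is to exhibit, for each direction, an explicit construction between $k$-potentials of $G_\mathbf{I}$ and $k$-tuples of ideals of $P^*$, and to invoke the surjection ${\sup}_r$ from~\eqref{cond:R} to translate ideals into solutions. Everything is driven by the pointwise identity
\[
    p(e^+) - p(e^-) = \mu_e(\mathbf{S}) \qquad (e \in E),
\]
from which $H(p) = d_\varphi^*(\mathbf{S})$ follows immediately: only arcs in $A_E$ have nonzero weight, and $w_\mathbf{I}$ records exactly the multiplicities of elements of $E$ mapped to a common arc $(e^-,e^+)$, so $H(p) = \sum_{e \in E} \varphi(p(e^+) - p(e^-))$.

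For the forward direction, pick via~\eqref{cond:R} ideals $I_1,\ldots,I_k \in \cI(P^*)$ with $S_i = {\sup}_r(I_i)$, and define
\[
    p_{\mathbf{S}}(v) \coloneqq \bigl|\{\, i \in [k] \mid v \in I_i \cup \{\bot\} \,\}\bigr|.
\]
Conditions~\eqref{cond:P1} and~\eqref{cond:P2} are immediate from $\bot \in I_i \cup \{\bot\}$ and $\top \notin I_i \cup \{\bot\}$ for every $i$. For~\eqref{cond:P3}, any arc $(u,v) \in A_P \cup A_E$ satisfies $u \succeq v$, and since each $I_i$ is a downset of $P^*$, the membership $u \in I_i \cup \{\bot\}$ implies $v \in I_i \cup \{\bot\}$ (the boundary cases $u = \top$ and $v = \bot$ are vacuous or trivial), giving $p_{\mathbf{S}}(u) \leq p_{\mathbf{S}}(v)$.

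For the converse, given $p$, define the level sets $I_\ell^* \coloneqq \{\, v \in P^* \mid p(v) \geq \ell \,\}$ for $\ell \in [k]$. Each $I_\ell^*$ is an ideal of $P^*$: if $v \preceq u$ with $u, v \in P^*$, the DAG representation furnishes a $u$-$v$ path in $G_\mathbf{I}$, along which~\eqref{cond:P3} forces $p(v) \geq p(u)$, so $u \in I_\ell^*$ entails $v \in I_\ell^*$. Set $\mathbf{S}_p \coloneqq ({\sup}_r(I_1^*),\ldots,{\sup}_r(I_k^*))$, which lies in $(\cS(\mathbf{I}))^k$ by~\eqref{cond:R}. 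Thanks to $p(\bot) = k$ and $p(\top) = 0$, the uniform characterization ``$v \in I_\ell^* \cup \{\bot\}$ iff $\ell \leq p(v)$'' holds for every $v \in P$; unwinding the definition of ${\sup}_r$ then shows that $e \in {\sup}_r(I_\ell^*)$ iff $p(e^-) < \ell \leq p(e^+)$, so $\mu_e(\mathbf{S}_p) = p(e^+) - p(e^-)$. The same characterization verifies the forward identity symmetrically.

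For the running-time bound, the identity above tells us that each $e \in E$ belongs to exactly the sets $S_\ell$ with $\ell \in \{p(e^-)+1,\ldots,p(e^+)\}$. Inserting $e$ into these sets costs $O(1 + (p(e^+) - p(e^-)))$ per element, for a total of $O(|E| + \sum_e \mu_e(\mathbf{S}_p)) = O(|E| + kq)$, using $\sum_e \mu_e(\mathbf{S}_p) = \sum_i |S_i| = kq$. The only delicate point I anticipate is the uniform treatment of $\bot$ and $\top$, which live in $P$ but not in $P^*$; the characterization via $p$-values collapses this bookkeeping into a single formula.
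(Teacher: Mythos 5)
Your proposal is correct and follows essentially the same route as the paper's proof: the multiplicity-counting potential $p_{\mathbf{S}}(v)=|\{i : v\in I_i\cup\{\bot\}\}|$ in one direction, the level sets $\{v : p(v)\geq \ell\}$ (equivalently the paper's $I_i\setminus\{\bot\}$) in the other, the key identity $\mu_e = p(e^+)-p(e^-)$ in both, and the same element-by-element insertion argument for the $O(|E|+kq)$ bound.
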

\begin{proof}
    Take any $k$-tuple $\mathbf{S} = (S_1, S_2, \dots, S_k) \in (\cS(\mathbf{I}))^k$.
    By the property~\eqref{cond:R},
    i.e., the surjectivity of ${\sup}_r \colon \cI(P^*) \to \cS(\mathbf{I})$,
    for each $i \in [k]$
    there is an ideal $I_i \in\cI(P^*)$ of $P^*$
    such that ${\sup}_r(I_i) = S_i$.
    We define $p_{\mathbf{S}} \colon P \to \Z_+$ by
    $p_{\mathbf{S}}(v) \coloneqq |\{ i \in [k] \mid v \in I_i \cup \{ \bot \} \}|$
    for $v \in P$.
    This $p_{\mathbf{S}}$ is a $k$-potential.
    Indeed, \eqref{cond:P1} and~\eqref{cond:P2} clearly hold.
    Since each $I_i \cup \{ \bot \}$ is an ideal of $P$ and $(u, v) \in A$ implies $u \succeq v$,
    if $(u, v) \in A$ and $u \in I_i \cup \{ \bot \}$,
    then $v \in I_i \cup \{ \bot \}$.
    Hence, we obtain $p(u) \leq p(v)$ for each $(u,v) \in A$; \eqref{cond:P3} holds.

    Since
    $\mu_e(\mathbf{S}) = |\{ i \in [k] \mid e^+ \in I_i \cup \{\bot\} \not\ni e^- \}|
    = |\{ i \in [k] \mid e^+ \in I_i \cup \{\bot\} \}| - |\{ i \in [k] \mid e^- \in I_i \cup \{\bot\} \}|
    = p_{\mathbf{S}}(e^+) - p_{\mathbf{S}}(e^-)$,
    we obtain
    \begin{align}
        d_\varphi^*(\mathbf{S}) &= \sum_{e \in E} \varphi(\mu_e(\mathbf{S}))\\
        &= \sum_{e \in E} \varphi(p_{\mathbf{S}}(e^+) - p_{\mathbf{S}}(e^-))\\
        &= \sum_{a = (u,v) \in A} w_\mathbf{I}(a) \varphi(p_{\mathbf{S}}(v)-p_{\mathbf{S}}(u))\\
        &= H(p_{\mathbf{S}}),\label{eq:d=H}
    \end{align}
    where the third equality follows from the definition of the arc weight $w_\mathbf{I}(a) \coloneqq |\{ e \in E \mid a = (e^-, e^+) \}|$.
    
    Conversely, take any $k$-potential $p$ of $G$.
    We define $k$ subsets $I_1, I_2, \dots, I_k$ of $P$ by
    $I_i \coloneqq \{ v \in P \mid p(v) \geq i \}$
    for each $i \in [k]$.
    Then, each $I_i$ is an ideal of $P$.
    Indeed, take any $v \in I_i$ (i.e., $p(v) \geq i$) and $u \preceq v$.
    By the construction of $G$,
    there is a $v$-$u$ path in $G$,
    which implies that $p(u) \geq p(v)$ by~\eqref{cond:P3}.
    Hence, we obtain $p(u) (\geq p(v)) \geq i$,
    i.e., $u \in I_i$.
    We can easily observe that
    $p(v) = |\{ i \in [k] \mid v \in I_i \}|$
    for each $v \in P$.
    Since $p(\bot) = k$ and $p(\top) = 0$ by~\eqref{cond:P1},
    we obtain $\bot \in I_i \not\ni \top$
    for each $i \in [k]$.

    Let $\mathbf{S}_p \coloneqq ({\sup}_r(I_1 \setminus \{ \bot \}), {\sup}_r(I_2 \setminus \{ \bot \}), \dots, {\sup}_r(I_k \setminus \{ \bot \})) \in (\cS(\mathbf{I}))^k$.
    Since, for each $e \in E$, we have
    $\mu_e(\mathbf{S}_p) = |\{ i \in [k] \mid e^+ \in I_i \not\ni e^- \}|
    = |\{ i \in [k] \mid e^+ \in I_i \}| - |\{ i \in [k] \mid e^- \in I_i \}|
    = p(e^+) - p(e^-)$,
    we obtain $d_\varphi^*(\mathbf{S}_p) = H(p)$
    from the reverse argument of~\eqref{eq:d=H} by replacing $\mathbf{S}$ and $p_{\mathbf{S}}$ with $\mathbf{S}_p$ and $p$, respectively.
    
    Since ${\sup}_r(I_i \setminus \{ \bot \}) = \{ 
e \in E \mid p(e^+) \geq i > p(e^-) \}$,
we can directly construct $\mathbf{S}_p$ from $p$ as follows.
We prepare $k$ emptysets $S_1, S_2, \dots, S_k$,
and then, for each $e \in E$,
we add $e$ to $S_i$ for all $i$ with $p(e^-) < i \leq p(e^+)$;
the resulting $k$-tuple $(S_1, S_2, \dots, S_k)$ coincides with $\mathbf{S}_p$.
The above construction of the $k$-tuple $(S_1, S_2, \dots, S_k)$ takes $O(|E| + kq)$ time,
since $\sum_{i = 1}^k|S_i| = kq$.
\end{proof}

\Cref{lem:d=H} immediately implies that
we can construct in $O(|E| + kq)$ time a minimizer $\mathbf{S}_p$ of $d_\varphi^*$ over $(\cS(\mathbf{I}))^k$
from a minimum $k$-potential $p$ for the instance $(G_\mathbf{I}, w_\mathbf{I}, \varphi)$ of the minimum $k$-potential problem.
Therefore, we obtain the following.
\begin{theorem}\label{thm:sec2:k-diverse}
We can solve the $k$-diverse problem with respect to $d_\varphi$ of {\sc Prob}
    in $O(T_P(\mathbf{I}) + T_r(\mathbf{I}) + T_\textup{mp}(\mathbf{I}) + |E| + kq)$ time,
    where $\mathbf{I}$ is a given instance of {\sc Prob}
    and $T_\mathrm{mp}(\mathbf{I})$ is
    the time of solving the instance of the minimum $k$-potential problem reduced from $\mathbf{I}$.
\end{theorem}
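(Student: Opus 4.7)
The plan is to assemble the algorithm by stitching together the reduction described just before Lemma~\ref{lem:d=H} with a black-box minimum $k$-potential solver and the second half of Lemma~\ref{lem:d=H}. The key observation is that maximizing $d_\varphi$ over $(\cS(\mathbf{I}))^k$ is equivalent to minimizing $d_\varphi^*$ on the same domain (they differ by a constant depending only on $k$, $q$, and $|E|$), so it suffices to produce a minimizer of $d_\varphi^*$.

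First I would build the instance $(G_\mathbf{I}, w_\mathbf{I}, \varphi)$ of the minimum $k$-potential problem: construct a DAG representing the poset $P$ in time $T_P(\mathbf{I})$, construct the reduction map $r$ in time $T_r(\mathbf{I})$, and then add the arc set $A_E = \{(e^-, e^+) \mid e \in E\}$ together with the weights $w_\mathbf{I}$ recording multiplicities. A single pass over $E$ suffices for the last step, costing $O(|E|)$. Next I would invoke the minimum $k$-potential solver on $(G_\mathbf{I}, w_\mathbf{I}, \varphi)$ as a black box, obtaining in time $T_\mathrm{mp}(\mathbf{I})$ a $k$-potential $p^*$ that minimizes $H$. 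Finally, applying the second half of Lemma~\ref{lem:d=H} to $p^*$ produces, in $O(|E| + kq)$ time, a $k$-tuple $\mathbf{S}_{p^*} \in (\cS(\mathbf{I}))^k$ with $d_\varphi^*(\mathbf{S}_{p^*}) = H(p^*)$.

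The one piece of actual content is verifying that $\mathbf{S}_{p^*}$ is genuinely optimal, which I would establish by a two-line sandwich. Let $\mathbf{S}^*$ be any minimizer of $d_\varphi^*$ over $(\cS(\mathbf{I}))^k$. The first half of Lemma~\ref{lem:d=H} supplies a $k$-potential $p_{\mathbf{S}^*}$ with $H(p_{\mathbf{S}^*}) = d_\varphi^*(\mathbf{S}^*)$, and then
\begin{align}
H(p^*) \leq H(p_{\mathbf{S}^*}) = d_\varphi^*(\mathbf{S}^*) \leq d_\varphi^*(\mathbf{S}_{p^*}) = H(p^*),
\end{align}
so equality holds throughout and $\mathbf{S}_{p^*}$ attains the minimum of $d_\varphi^*$. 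Summing the four cost contributions $T_P(\mathbf{I})$, $T_r(\mathbf{I})$, $O(|E|)$ for building $w_\mathbf{I}$, $T_\mathrm{mp}(\mathbf{I})$ for solving the potential problem, and $O(|E| + kq)$ for decoding $p^*$ into $\mathbf{S}_{p^*}$, yields exactly the claimed time bound.

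I do not anticipate any real obstacle: Lemma~\ref{lem:d=H} already supplies both directions of a value-preserving correspondence between $k$-potentials of $G_\mathbf{I}$ and $k$-tuples in $(\cS(\mathbf{I}))^k$, and the solver for the minimum $k$-potential problem is treated abstractly. The argument is essentially bookkeeping of running times together with the short sandwich above to propagate optimality across the correspondence.
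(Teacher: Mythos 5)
Your proposal is correct and follows exactly the paper's route: build the instance $(G_\mathbf{I}, w_\mathbf{I}, \varphi)$ in $T_P(\mathbf{I}) + T_r(\mathbf{I}) + O(|E|)$ time, solve the minimum $k$-potential problem as a black box, and decode via the second half of \Cref{lem:d=H}, with optimality following from the value-preserving correspondence in both directions of that lemma (the paper treats this as immediate, while you spell out the sandwich explicitly). No differences of substance.
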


Since isolated vertices in $G_{\mathbf{I}}$ except for $\bot$ and $\top$ do not affect the value of function $H$, we can remove them before solving the instance of the minimum $k$-potential problem reduced from $\mathbf{I}$.
This elimination can be performed in linear time in the size of $G$, which is upper-bounded by $T_P(\mathbf{I})$.
Therefore, we may assume that $G_{\mathbf{I}}$ has no such isolated vertices, implying that $|P| = O(|A_P| + |E|)$.
This assumption will be used in estimating the time complexity of $T_\mathrm{mp}(\mathbf{I})$ in \Cref{subsec:mincostflow}.

\section{Solving the minimum \texorpdfstring{$k$}{k}-potential problem via network flow}\label{sec:reduction-to-mcf}
In this section, we describe reductions of the minimum $k$-potential problem to network flow problems, which enables us to use the state-of-the-art network flow algorithms for solving the minimum $k$-potential problem, or the original $k$-diverse problem.
We provide two reductions: One is to the minimum cost flow problem given in \Cref{subsec:mincostflow};
the other to the maximum $s$-$t$ flow problem (or minimum $s$-$t$ cut) given in \Cref{subsec:maxflow}.
We need to remark here that the latter reduction is valid only for the case where the input discrete convex function $\varphi$ of the minimum $k$-potential problem is either $x \mapsto x^2$ or $x \mapsto \max\{ 0, x-1 \}$,
which correspond to \SumkDiverse or \CovkDiverse, respectively.

\subsection{Via minimum cost flow}\label{subsec:mincostflow}
This subsection is devoted to the reduction of the minimum $k$-potential problem to the minimum cost flow problem.
Let us first recall the \emph{minimum cost flow problem} (see e.g.,~\cite[Chapter~9]{Korte2018-ii} and~\cite[Chapter~12]{schrijver2003} for details).
In this problem,
we are given a digraph $G = (V, A)$, arc cost $\gamma \colon A \to \Z$,
arc capacity $c \colon A \to \Z_+$,
and vertex demand $d \colon V \to \Z$,
and
asked to find a feasible flow $f \colon A \to \Z_+$
that minimizes $\sum_{a \in A} \gamma(a) f(a)$ subject to
$\partial f(v) = d(v)$ for all $v \in V$.
An optimal solution of the minimum cost flow problem is called a \emph{minimum cost flow}.
We denote by $T_\mathrm{mcf}(n, m, \Gamma, C, D)$ the time required to solve the minimum cost flow problem for a network of $n$ vertices and $m$ arcs with cost at most $\Gamma$ in absolute values, capacity at most $C$, and a demand vector with values at most $D$ in absolute values.
Using the state-of-the-art algorithm for the minimum cost flow problem given in~\cite{brand2023},
we have $T_\mathrm{mcf}(n, m, \Gamma,C, D) = O(m^{1+o(1)} \log (\max\{ C, D \}) \log \Gamma)$.

Our reduction follows the work
by Ahuja, Hochbaum, and Orlin~\cite{ahuja2003}, who dealt with a more general problem called the \emph{convex cost integer dual network flow problem}.
They showed that
the Lagrangian dual
of their problem is reduced to the minimum cost flow problem.
For the complexity analysis, we provide the actual reduction in our case and the construction of a minimum $k$-potential from a minimum cost flow.
Here, we introduce some terminology.
For a discrete convex function $\psi \colon \Z \to \Z$,
an integer $x \in \Z$ is called a \emph{breakpoint} of $\psi$
if $\psi(x+1) +\psi(x-1) > 2\psi(x)$, i.e.,
the \emph{left slope} $\psi(x) - \psi(x-1)$ and \emph{right slope} $\psi(x+1) - \psi(x)$ of $\psi$ at $x$ are different.
We can observe that,
for each $x \in \Z$,
the left slope of $\psi$ at $x$ is at most the right slope of $\psi$ at $x$.
Let $B(\psi)$ denote the set of breakpoints of $\psi$.

Let $(G = (V, A), w \colon A \to \Z_+, \varphi \colon \Z \to \Z)$ be an instance of the minimum $k$-potential problem.
We define $B_k(\varphi) \coloneqq \left(B(\varphi) \cap [0,k]\right) \cup \{ 0,k \}$
and suppose that $B_k(\varphi) = \{ b_0, b_1, \dots, b_z \}$ with $(0 =) b_0 < b_1 < \cdots < b_z (= k)$.
Let $s_i^-$ (resp.\ $s_i^+$) denote the left (resp.\ right) slope of $\varphi$ at $b_i \in B(\varphi) \cap [k-1]$;
note that $s_i^+ = s_{i+1}^- < s_{i+1}^+$.
We set $M \in \Z_+$ as a sufficiently large integer satisfying $M > H(p)$ for any $k$-potential $p$ of $G$,
e.g., $M = \sum_{a\in A} w(a) \varphi(k) + 1$.

We construct an instance of the minimum cost flow problem as follows.
The vertex set of the input digraph $\bar{G}$ is $\bar{V} \coloneqq V \cup \{0\}$.
We set the arc set $\bar{A}$ of $\bar{G}$, arc cost $\bar{\gamma} \colon \bar{A} \to \Z$, and capacity $\bar{c} \colon \bar{A} \to \Z_+$ by creating
\begin{itemize}
    \item $|B_k(\varphi)| = z + 1$ copies of each $a \in A$ satisfying $w(a) > 0$ with costs $b_0, b_1, \dots, b_{z-1}, b_z$ and capacities
    $w(a)s_1^-+M, w(a)(s_1^+ - s_1^-), w(a)(s_2^+ - s_2^-), \dots, w(a)(s_{z-1}^+ - s_{z-1}^-),M-w(a)s_{z-1}^+$, respectively,
    \item two copies of each $a \in A$ satisfying $w(a) = 0$ or each $a = (0,u)$ for $u \in V \setminus \{ \bot, \top\}$ with costs $0, k$ and capacities $M, M$, respectively,
    \item an arc $(0, \bot)$ with cost $k$ and capacity $2M$, and
    \item an arc $(0, \top)$ with cost $0$ and capacity $2M$.
\end{itemize}
The vertex demand $\bar{d} \colon \bar{V} \to \Z$ is set as
$\bar{d}(v) \coloneqq M(|\Delta^+_{G'}(v)| - |\Delta^-_{G'}(v)|)$ for each $v\in \bar{V}$,
where $G' \coloneqq (\bar{V}, A')$ and $A' \coloneqq A \cup \{ (0,v) \mid v \in V \}$.

Suppose that we obtain a minimum cost flow $f^*$ of the resulting instance $(\bar{G}, \bar{\gamma}, \bar{c}, \bar{d})$.
Then we construct the residual graph $\bar{G}_{f^*}$ of $\bar{G}$ with respect to $f^*$,
and set the arc length $\ell$ of $\bar{G}_{f^*}$
as $\ell(a) \coloneqq \gamma(a)$ if $a \in \bar{A}$, and $\ell(a) \coloneqq -\gamma(a)$ if $\bar{a} \in \bar{A}$.
Let us define $\bar{p}^* \colon \bar{V} \to \Z$ as a feasible potential with $\bar{p}^*(0) = 0$ in $\bar{G}_{f^*}$ with respect to arc length $\ell$, i.e.,
an assignment $\bar{p}^* \colon \bar{V} \to \Z$ satisfying
$\bar{p}^*(0) = 0$ and
$\ell(a) \geq \bar{p}^*(v) - \bar{p}^*(u)$
for each arc $a = (u,v) \in \bar{A}$.
The following lemma justifies our reduction.

\begin{lemma}\label{lem:min-cost-potential}
    The restriction $p^* \colon V \to \Z$ of $\bar{p}^*$ to $V$ forms
    a minimum $k$-potential of $(G, w, \varphi)$.
\end{lemma}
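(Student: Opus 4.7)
The plan is to exploit linear programming duality for the minimum cost flow problem, following the Ahuja--Hochbaum--Orlin framework. Since $f^*$ is a minimum cost flow, the residual graph $\bar{G}_{f^*}$ contains no negative-length cycle with respect to $\ell$, so a feasible potential $\bar{p}^*$ exists and is exactly an optimal dual solution of the minimum cost flow LP. My proof has two stages: (i) verify that the restriction $p^* \coloneqq \bar{p}^*|_V$ always satisfies the three $k$-potential conditions~\eqref{cond:P1}--\eqref{cond:P3}; (ii) identify the LP dual objective, restricted to those $\bar{p}$ with $\bar{p}|_V$ a $k$-potential, with $-H(p)$ up to a $p$-independent constant. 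Strong LP duality then immediately yields $H(p^*) = \min_p H(p)$.

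For step (i), I plan to exploit the auxiliary arcs out of vertex $0$ together with the fact that $M$ is chosen larger than $H(p)$ for every $k$-potential. The cost-$0$ arc $(0,\top)$ and cost-$k$ arc $(0,\bot)$, both of capacity $2M$, pin down $\bar{p}^*(\top)=0$ and $\bar{p}^*(\bot)=k$ by complementary slackness, since for $M$ sufficiently large these arcs carry flow strictly between $0$ and $2M$ so that both residual directions exist, giving~\eqref{cond:P1}. For $v \in V \setminus \{\bot,\top\}$, the two parallel copies of $(0,v)$ with costs $0,k$ and capacities $M$ are used cheapest-first in any minimum cost flow; a three-case analysis on the total flow they carry then forces $0 \leq \bar{p}^*(v) \leq k$, giving~\eqref{cond:P2}. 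Monotonicity~\eqref{cond:P3} along each $(u,v) \in A$ follows from the cheapest (cost-$0$) copy of that arc: since $f^*$ is optimal and that copy has capacity at least $M$, it must carry strictly positive flow, placing its reverse in $\bar{G}_{f^*}$ and yielding $\bar{p}^*(u) \leq \bar{p}^*(v)$.

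For step (ii), I will write out the LP dual of the minimum cost flow problem and eliminate the slack variables using $y(a) = \max\{0, \bar{p}(v) - \bar{p}(u) - \gamma(a)\}$. The resulting dual objective contains, for each $a \in A$ with $w(a) > 0$, a sum of $|B_k(\varphi)|$ truncated affine contributions $\bar{c}(a)\max\{0, \bar{p}(v) - \bar{p}(u) - b_j\}$, whose capacities $\bar{c}(a) = w(a)(s_j^+ - s_j^-)$ are precisely the slope increments of $w(a)\varphi$ at its breakpoints. By the standard representation of a piecewise linear convex function as a positive combination of such truncated pieces, these collapse into $-w(a)\varphi(\bar{p}(v)-\bar{p}(u))$, reproducing $-H$. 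The boundary capacities $w(a)s_1^- + M$ and $M - w(a)s_{z-1}^+$ at costs $b_0 = 0$ and $b_z = k$, together with the vertex demand terms $\sum_v \bar{d}(v)\bar{p}(v)$ and the contributions of the auxiliary arcs out of $0$, contribute only a constant on the feasible region $\bar{p}(0)=0$, $0 \leq \bar{p}(v) \leq k$. Hence the dual value restricted to $k$-potentials equals $-H(p) + \text{const}$.

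The main obstacle is the bookkeeping in step (ii): matching each parallel copy's cost $b_j$ and capacity $w(a)(s_j^+ - s_j^-)$ to the correct slope increment of $\varphi$ so that the truncated affine pieces assemble into $w(a)\varphi$, and verifying that the boundary capacities involving $M$, together with the demand terms $\bar{d}(v)\bar{p}(v)$ coming from $G'$, cancel all $\bar{p}$-linear contributions outside of this reconstruction. Once these algebraic identities are settled, the lemma follows from strong LP duality applied to the minimum cost flow LP.
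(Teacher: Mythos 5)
Your step (ii) — identifying the LP dual objective with $H(p)$ via the piecewise-linear reconstruction of $w(a)\varphi$ from the parallel copies — is essentially the computation the paper performs with the functions $\Phi_{a'}$, and is fine in outline. The genuine gap is in step (i). There you claim structural facts about \emph{every} minimum cost flow $f^*$: that $(0,\bot)$ and $(0,\top)$ carry flow strictly between $0$ and $2M$, that the cost-$0$ copy of each $a\in A$ carries strictly positive flow, and that a case analysis on the $(0,v)$ copies forces $0\le\bar{p}^*(v)\le k$. None of these is proved, and the first two are false in general. Take the instance with a single arc $(\top,\bot)$, $w\equiv 1$, $\varphi(x)=x^2$, $k=2$, so the copies of $(\top,\bot)$ have costs $0,1,2$ and capacities $M+1,2,M-3$. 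The flow that saturates all three copies and the arc $(0,\top)$ (value $2M$) and puts $0$ on $(0,\bot)$ is feasible, has cost $2M-4=kM-\min_p H(p)$, and is certified optimal by the potential $\bar{p}(0)=\bar{p}(\top)=0$, $\bar{p}(\bot)=2$; yet $f^*(0,\bot)=0$ and $f^*(0,\top)=2M$, so neither arc has both residual directions and your complementary-slackness argument for~\eqref{cond:P1} yields nothing (for $(0,\bot)$ it only gives $\bar{p}^*(\bot)\le k$). The same issue defeats the arguments for~\eqref{cond:P2} (if the $(0,v)$ copies carry zero flow you only get $\bar{p}^*(v)\le 0$, not $\ge 0$) and for~\eqref{cond:P3} (optimality plus large capacity does not imply positive flow on the cost-$0$ copy; when $\bar{p}^*(v)=\bar{p}^*(u)$ that copy may carry zero flow in some optimum). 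In the example the lemma still holds, but only because the \emph{saturated} copies of $(\top,\bot)$ force $\bar{p}^*(\bot)\ge\bar{p}^*(\top)+2$ — a global effect your arc-by-arc analysis does not capture.

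The paper avoids any such claims about $f^*$. It uses that a feasible potential of $\bar{G}_{f^*}$ with $\bar{p}^*(0)=0$ is automatically an \emph{optimal} dual solution, writes the dual objective as $\bar{H}(\bar{p})=\sum_{a'\in A'}\Phi_{a'}(\bar{p})$, and shows each $\Phi_{a'}\ge 0$, that $\Phi_{a'}\ge M$ whenever the corresponding condition among \eqref{cond:P1}--\eqref{cond:P3} is violated, and that $\bar{H}(\bar{p})=H(p)$ when the restriction $p$ is a $k$-potential. Since some $k$-potential has $H(p)<M$, dual optimality alone forces the restriction of $\bar{p}^*$ to satisfy \eqref{cond:P1}--\eqref{cond:P3} and to minimize $H$. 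To repair your proof, replace step (i) by this penalty/optimality argument (which your step (ii) bookkeeping already sets up), rather than trying to extract \eqref{cond:P1}--\eqref{cond:P3} from complementary slackness on individual arcs.
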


The validity of \Cref{lem:min-cost-potential} follows from exactly the same argument given in~\cite{ahuja2003}.
For the sake of completeness,
we provide a direct proof of \Cref{lem:min-cost-potential}.
\begin{proof}
    Taking the linear programming (LP) dual of the negative of (the LP relaxation of) the minimum cost flow problem for $(\bar{G}, \bar{\gamma}, \bar{c}, \bar{d})$ and adding a constant $kM$,
we obtain the following (see also \cite[Chapter~9.2]{Korte2018-ii}):
\begin{align}\label{prob:min-cost-flow-dual}
\begin{array}{ll}
\text{Minimize} & \displaystyle \sum_{a' = (u,v) \in A'} M(\bar{p}(u) - \bar{p}(v)) + \sum_{a \in \bar{A}} \bar{c}(a) \bar{z}(a) + kM \\
\text{subject to} & \bar{z}(a) \geq \max\{\bar{p}(v) - \bar{p}(u) - \bar{\gamma}(a), 0 \}\quad (a \in \bar{A}),
\end{array}
\end{align}
where $\bar{p}$ and $\bar{z}$ are variables of the above dual LP, and $A' = A \cup \{ (0,v) \mid v \in V \}$.
Since adding all-one vector $\mathbf{1}$ to $\bar{p}$ (that produces the assignment $v \mapsto \bar{p}(v) + 1$) does not change the feasibility of $\bar{p}$ and the objective value,
we may assume that $\bar{p}(0) = 0$.
Moreover, since this dual LP problem has an integral optimal solution (see e.g.,~\cite[Theorem~12.8]{schrijver2003}),
we may assume that $\bar{p}$ and $\bar{z}$ are integral variables, i.e.,
$\bar{p} \colon \bar{V} \to \Z$ with $\bar{p}(0) = 0$ and $\bar{z} \colon \bar{A} \to \Z$.
In addition,
by the nonnegativity of $\bar{c}(a)$ for $a \in \bar{A}$,
we can replace $\bar{z}(a)$ with $\max\{\bar{p}(v) - \bar{p}(u) - \bar{\gamma}(a), 0 \}$ in the above formulation.
Let
\begin{align}
    \Phi_{a'}(\bar{p}) \coloneqq
    \begin{cases}
    \displaystyle
        M(\bar{p}(0) - \bar{p}(\bot)) + \bar{c}(a') \max\{\bar{p}(\bot) - \bar{p}(0) - \bar{\gamma}(a'), 0 \} + kM & \text{if $a' = (0, \bot)$},\\
        M(\bar{p}(0)-\bar{p}(\top)) + \bar{c}(a') \max\{\bar{p}(\top) - \bar{p}(0) - \bar{\gamma}(a'), 0 \} & \text{if $a' = (0, \top)$},\\
        \displaystyle
        M(\bar{p}(u) - \bar{p}(v)) + \sum_{\text{$a \in \bar{A}:$ copy of $a'$ }} \bar{c}(a) \max\{\bar{p}(v) - \bar{p}(u) - \bar{\gamma}(a), 0 \} & \text{otherwise}
    \end{cases}
\end{align}
for each $a' = (u,v) \in A'$.
The dual LP problem is equivalent to the minimization of
\begin{align}\label{prob:bar-H}
    \bar{H}(\bar{p}) \coloneqq \sum_{a' \in A'} \Phi_{a'}(\bar{p})
\end{align}
for $\bar{p} \colon V \to \Z$ with $\bar{p}(0) = 0$.

It is known (see e.g.,~\cite[Second Proof of Corollary~9.8]{Korte2018-ii}) that any feasible potential $\bar{p}^*$ of $\bar{G}_{f^*}$ with respect to $\ell$ is an optimal solution of the dual LP, i.e., a minimizer of $\bar{H}$.
Hence, it suffices to show that the minimization of $\bar{H}$ is equivalent to the minimum $k$-potential problem for $(G, w, \varphi)$,
or more precisely, show the following claim:
If the restriction $p \colon V \to \Z$ of $\bar{p}$ to $V$ is a $k$-potential then
$\bar{H}(\bar{p}) = H(p) \ (< M)$, and
if $p$ is not a $k$-potential then $\bar{H}(\bar{p}) \geq M$.
Here, we recall that the definition of $H$ is given in~\eqref{eq:k-optimal-potential}.

In the following, we prove the above claim.
We first consider each of the functions $\Phi_{a'}(\bar{p})$.
By the definition of $\Phi_{a'}$,
for each $a' = (0,v)$ with $v \in V$, we have
\begin{align}
    \Phi_{(0,v)}(\bar{p}) =
    \begin{cases}
        M(k - \bar{p}(\bot)) + 2M \max \{ \bar{p}(\bot) - k, 0 \} & \text{if $v = \bot$},\\
        -M\bar{p}(\top) + 2M \max \{ \bar{p}(\top), 0 \} & \text{if $v = \top$},\\
        -M\bar{p}(v) + M \max \{ \bar{p}(v), 0 \} + M \max \{ \bar{p}(v) - k, 0 \} & \text{if $v \in V \setminus \{ \bot, \top \}$}.
    \end{cases}
\end{align}
We can easily see by the above formulation that
\begin{align}
    &\Phi_{(0,\bot)}(\bar{p})
    \begin{cases}
        = 0 & \text{if $\bar{p}(\bot) = k$},\\
        \geq M & \text{if $\bar{p}(\bot) \neq k$},
    \end{cases}
    \qquad
    \Phi_{(0,\top)}(\bar{p})
    \begin{cases}
        = 0 & \text{if $\bar{p}(\top) = 0$},\\
        \geq M & \text{if $\bar{p}(\top) \neq 0$},
    \end{cases}\\
    &\Phi_{(0,v)}(\bar{p})
    \begin{cases}
        = 0 & \text{if $0 \leq \bar{p}(v) \leq k$},\\
        \geq M & \text{otherwise},
    \end{cases}
    \quad (v \in V \setminus \{ \bot, \top \}).
\end{align}
For each $a \in A = A' \setminus \{ (0,v) \mid v \in V \}$,
we obtain the following by direct calculation:
\begin{align}
     \Phi_{a}(\bar{p}) =
     \begin{cases}
     -M(\bar{p}(v) - \bar{p}(u)) & \text{if $\bar{p}(v) - \bar{p}(u) < 0$},\\
     w(a) \varphi(\bar{p}(v) - \bar{p}(u)) & \text{if $0 \leq \bar{p}(v) - \bar{p}(u) \leq k$},\\
     w(a) \varphi(k) + M(\bar{p}(v) - \bar{p}(u) - k)& \text{if $\bar{p}(v) - \bar{p}(u) > k$},
     \end{cases}
\end{align}
which implies
\begin{align}
     \Phi_{a}(\bar{p})
     \begin{cases}
     = w(a) \varphi(\bar{p}(v) - \bar{p}(u)) \geq 0 & \text{if $0 \leq \bar{p}(v) - \bar{p}(u) \leq k$},\\
     \geq M& \text{otherwise}.
     \end{cases}
\end{align}
Thus, if the restriction $p$ of $\bar{p}$ to $V$ is a $k$-potential, i.e., satisfies~\eqref{cond:P1},~\eqref{cond:P2}, and~\eqref{cond:P3},
then
$\bar{H}(\bar{p}) = \sum_{a = (u,v) \in A} w(a) \varphi( \bar{p}(v) - \bar{p}(u) ) = H(p)$;
otherwise, there is $a' \in A'$ such that $\Phi_{a'}(\bar{p}) \geq M$.
In the latter case,
since $\Phi_{a'}(\bar{p}) \geq 0$ for any $\bar{p}$ and $a'$,
we have $\bar{H}(\bar{p}) \geq M$.
\end{proof}

We then show the time complexity of solving the minimum $k$-potential problem via our reduction.
The digraph $\bar{G}$ has $|V|+1 = O(|V|)$ vertices, 
and at most $2(|V|-1 + |A_{0}|) + |B_k(\varphi)||A_+| = O(|V| + |A_0| + |B_k(\varphi)||A_+|)$ arcs.
Here, $A_0$ and $A_+$ denote the set of arcs $a$ with $w(a) = 0$ and $w(a) > 0$,
respectively.
Therefore, we can construct the minimum cost flow instance $(\bar{G}, \bar{\gamma}, \bar{c}, \bar{d})$
in $O(|V| + |A_0| + |B_k(\varphi)||A_+|)$ time.
The costs and capacities are nonnegative integers at most $k$ and $\bar{C}\coloneqq \max\{2M, \max_{a\in A}w(a)\varphi(k)+M\}$, respectively.
The absolute values of the demands are at most $M|V|$.
Hence, we can obtain a minimum cost flow $f^* \colon \bar{A} \to \Z_+$ of $(\bar{G}, \bar{\gamma}, \bar{c}, \bar{d})$
in $O\left(T_\mathrm{mcf}(|V|, |V| + |A_0| + |B_k(\varphi)||A_+|, k, \bar{C}, M|V|)\right)$ time.
Moreover,
we can construct the residual graph $\bar{G}_{f^*}$ of $\bar{G}$ with respect to $f^*$ in
$O(|\bar{A}|) = O(|V| + |A_0| + |B_k(\varphi)||A_+|)$ time.
We can also obtain a feasible potential $\bar{p}^*$ in $\bar{G}_{f^*}$ with respect to $\ell$ by computing a minimum cost flow of $\bar{G}_{f^*}$ as follows.
We add a supernode $\bar{0}$ and arcs $(\bar{0}, v)$ for all $v \in \bar{V}$ to $\bar{G}_{f^*}$ (the resulting digraph is denoted as $\bar{G}_{f^*}'$).
We extend $\ell$ as $\ell((\bar{0}, v)) \coloneqq 0$ for all newly added arcs $(\bar{0}, v)$,
and set arc capacity $c'$ and vertex demand $d'$
as $c'(a) \coloneqq |V|$ for each arc $a$,
and $d'(\bar{0}) \coloneqq |\bar{V}|$
and $d'(v) \coloneqq -1$ for $v \in \bar{V}$.
Then we compute a minimum cost flow $f'$ of $(\bar{G}_{f^*}', \ell, c', d')$
and set $\bar{p}^*(v) \coloneqq f'(v) - f'(0)$ for each $v \in \bar{V}$;
the resulting $\bar{p}^* : \bar{V} \to \Z$ (with $\bar{p}^*(0) = 0$) is a feasible potential of $\bar{G}_{f^*}$ with respect to $\ell$.
This takes $O\left(T_\mathrm{mcf}(|V|, |V| + |A_0| + |B_k(\varphi)||A_+|, k, |V|, |V|)\right)$ time,
which is upper-bounded by the time to compute $f^*$.
Hence, by \Cref{lem:min-cost-potential},
we can obtain a minimum $k$-potential $p^* \colon V \to \Z_+$,
which is the restriction of $\bar{p}^*$ to $V$ by \Cref{lem:min-cost-potential},
in $O\left(T_\mathrm{mcf}(|V|, |V| + |A_0| + |B_k(\varphi)||A_+|, k, \bar{C}, M|V|)\right)$ time.

Finally, by applying the state-of-the-art algorithm~\cite{brand2023} for the minimum cost flow problem,
we obtain the following,
which implies \Cref{thm:general}.

\begin{theorem}\label{thm:optimal-potential-time-complexity}
We can solve the $k$-diverse problem with respect to $d_\varphi$ of {\sc Prob} for an instance $\mathbf{I}$
in $O(T_P(\mathbf{I}) + T_r(\mathbf{I}) + (|A_P| + |B_k(\varphi)||E|)^{1+o(1)} \log (\varphi(k))\log k + kq)$ time.
In particular,
the problems \SumkDiverse of {\sc Prob} and \CovkDiverse of {\sc Prob} can be solved in
$O(T_P(\mathbf{I}) + T_r(\mathbf{I}) + (|A_P| + k|E|)^{1+o(1)})$ time
and in
$O(T_P(\mathbf{I}) + T_r(\mathbf{I}) + (|A_P| + |E|)^{1+o(1)} \log^2 k + kq)$ time,
respectively.
\end{theorem}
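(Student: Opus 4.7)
The plan is to assemble the theorem from the ingredients already developed in the section. Specifically, I would combine \Cref{thm:sec2:k-diverse}, which reduces the $k$-diverse problem to the minimum $k$-potential problem at a cost of $O(T_P(\mathbf{I}) + T_r(\mathbf{I}) + T_\mathrm{mp}(\mathbf{I}) + |E| + kq)$, with the explicit reduction to minimum cost flow culminating in \Cref{lem:min-cost-potential} and the running-time bookkeeping immediately preceding the theorem. All that remains is to plug the state-of-the-art bound of~\cite{brand2023} into $T_\mathrm{mcf}$ and to absorb the lower-order contributions.

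First, I would recall the constructed instance $(\bar{G}, \bar{\gamma}, \bar{c}, \bar{d})$: the digraph $\bar{G}$ has $|V|+1$ vertices and $O(|V|+|A_0|+|B_k(\varphi)||A_+|)$ arcs, where $A_0$ (resp.\ $A_+$) is the set of arcs $a$ of $G_\mathbf{I}$ with $w_\mathbf{I}(a)=0$ (resp.\ $w_\mathbf{I}(a)>0$). By the remark after \Cref{thm:sec2:k-diverse}, we may assume $G_\mathbf{I}$ has no isolated vertices in $P^*$, so $|V|=O(|A_P|+|E|)$; also $|A_0|\leq |A_P|$ and $|A_+|\leq |E|$. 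The costs are bounded by $k$, the capacities by $\bar{C}=O(M)=O(|A|\cdot\varphi(k))$, and the demands in absolute value by $M|V|$. Plugging into $T_\mathrm{mcf}(n,m,\Gamma,C,D)=O(m^{1+o(1)}\log(\max\{C,D\})\log\Gamma)$ from~\cite{brand2023} gives
\begin{align}
T_\mathrm{mp}(\mathbf{I}) = O\bigl((|A_P|+|B_k(\varphi)||E|)^{1+o(1)}\log\bigl(\varphi(k)\bigr)\log k\bigr),
\end{align}
since $\log(\max\{\bar{C},M|V|\})=O(\log(|V||A|\varphi(k)))$ and the $\log(|V||A|)$ factor is swallowed by the $m^{o(1)}$ term. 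Adding the $O(T_P(\mathbf{I})+T_r(\mathbf{I})+|E|+kq)$ overhead of \Cref{thm:sec2:k-diverse} and noting $|E|\leq(|A_P|+|B_k(\varphi)||E|)^{1+o(1)}$ proves the first (general) bound.

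Next, I would specialize to the two cases of interest. For \SumkDiverse we have $\varphi(x)=x^2$, whose breakpoints are all of $\Z$, so $|B_k(\varphi)|=k+1$ and $\log\varphi(k)=2\log k$. The bound becomes $O(T_P+T_r+(|A_P|+k|E|)^{1+o(1)}\log^2 k + kq)$; here $\log^2 k$ is absorbed into the $(\cdot)^{o(1)}$ factor, and $kq\leq k|E|$ is likewise absorbed. For \CovkDiverse we have $\varphi(x)=\max\{0,x-1\}$, whose unique breakpoint is $x=1$, so $|B_k(\varphi)|=|\{0,1,k\}|=O(1)$ and $\log\varphi(k)=O(\log k)$, yielding $O(T_P+T_r+(|A_P|+|E|)^{1+o(1)}\log^2 k + kq)$. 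In this case $kq$ is in general not dominated by the polynomial-in-$|E|$ factor, so it must be kept as a separate additive term.

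The proof is almost entirely accounting; I do not anticipate a genuine mathematical obstacle, since correctness of the reduction is already encapsulated by \Cref{thm:sec2:k-diverse} and \Cref{lem:min-cost-potential}. The only subtle points are (i) justifying the absorption of the polylogarithmic $\log\varphi(k)\log k$ factor into $m^{o(1)}$ in the \SumkDiverse specialization, which is legitimate because $\log k\leq\log m$ when $m\geq k$, and (ii) explaining why the $kq$ term is absorbed for \SumkDiverse but must survive for \CovkDiverse, which follows from the multiplicative versus constant behavior of $|B_k(\varphi)|$ in $k$.
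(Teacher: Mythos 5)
Your proposal is correct and follows essentially the same route as the paper: combine \Cref{thm:sec2:k-diverse} with the cost-flow reduction of \Cref{lem:min-cost-potential} and its size/parameter bookkeeping, plug in the $T_\mathrm{mcf}$ bound of~\cite{brand2023}, and then specialize $|B_k(\varphi)|$ and $\varphi(k)$ for the two measures, absorbing the polylog and $kq$ terms exactly as the paper does. The only nitpick is that $M=|E|\varphi(k)+1$ need not be $O(|A|\varphi(k))$ (several elements of $E$ may map to the same arc), but since only $\log M$ enters and $\log|E|$ is absorbed into the $(|A_P|+|B_k(\varphi)||E|)^{o(1)}$ factor, the stated bounds are unaffected.
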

\begin{proof}
Recall the notation in \Cref{sec:reduction}.
The vertex set of the input digraph of the minimum $k$-potential problem is $P$
and arc set is $A = A_P \cup A_E$,
where $(P, A_P)$ represents the poset $(P, \preceq)$
and
$|A_E| \leq |E|$.
Furthermore,
$|P| = O(|A_P| + |E|)$
by the assumption in \Cref{sec:reduction}.
We also have $\sum_{a \in A} w(a) = |E|$,
$A_0 \subseteq A_P$,
and $A_+ = A_E$.
If we set $M \coloneqq \sum_{a\in A} w(a) \varphi(k) + 1 = |E|\varphi(k) + 1$,
then $\bar{C} = \max\{2M, \max_{a\in A}w(a)\varphi(k)+M\} = 2M \leq |P|M$.
Hence, by
\Cref{thm:sec2:k-diverse}
and
$T_\mathrm{mcf}(n, m, \Gamma,C, D) = O(m^{1+o(1)} \log (\max\{ C, D \}) \log \Gamma)$,
we obtain a bound of
$O(T_P(\mathbf{I}) + T_r(\mathbf{I}) + (|A_P| + |B_k(\varphi)||E|)^{1+o(1)} \log (\varphi(k))\log k + kq)$
time for the $k$-diverse problem with respect to $d_\varphi$.

In particular,
for $\varphi(x) = \varphi_{\textup{sum}}(x) \coloneqq x^2$ (that corresponds to \SumkDiverse) and $ \varphi(x) =\varphi_{\textup{cov}}(x) \coloneqq \max \{ 0, x-1 \}$ (that corresponds to \CovkDiverse),
we have $B_k(\varphi_{\textup{sum}}) = [0,k]$, $\varphi_{\textup{sum}}(k) = k^2$,
$B_k(\varphi_{\textup{cov}}) = \{ 0, 1, k\}$, and $\varphi_{\textup{cov}}(k) = k-1$.
Thus,
we obtain
$O(T_P(\mathbf{I}) + T_r(\mathbf{I}) + (|A_P| + k|E|)^{1+o(1)})$ time
for \SumkDiverse,
and
$O(T_P(\mathbf{I}) + T_r(\mathbf{I}) + (|A_P| + |E|)^{1+o(1)} \log^2 k + kq)$ time
for \CovkDiverse.
\end{proof}

\subsection{Via maximum \texorpdfstring{$s$-$t$}{s-t} flow}\label{subsec:maxflow}
In this subsection, we describe the reduction of the minimum $k$-potential problem to the maximum $s$-$t$ flow problem, or equivalently, the minimum $s$-$t$ cut problem, focusing on the cases where $\varphi(x)$ is either $x^2$ or $\max \{0,x-1\}$;
these correspond to \SumkDiverse and \CovkDiverse, respectively.

Here, we recall the basics on the maximum $s$-$t$ flow and minimum $s$-$t$ cut problems (see e.g.,~\cite[Chapter~10]{schrijver2003} for details).
Suppose that $G = (V, A)$ is a digraph with special vertices $s, t \in V$
and $c \colon A \to \Z_+$ is an arc capacity.
A feasible flow $f \colon A \to \Z_+$ is called an \emph{$s$-$t$ flow}
if it satisfies $\partial f(v) = 0$ for $v \in V \setminus \{s,t \}$;
we note that an $s$-$t$ flow $f$ satisfies $\partial f(s) = - \partial f(t)$.
An arc subset $C \subseteq A$ is called an \emph{$s$-$t$ cut}
if there is no $s$-$t$ path in the digraph obtained from $G$ by removing all arcs in $C$.
The \emph{maximum $s$-$t$ flow problem} asks to find an $s$-$t$ flow $f \colon A \to \Z_+$ that maximizes the value $\partial f(s) (= - \partial f(t))$,
and the \emph{minimum $s$-$t$ cut problem} asks to find an $s$-$t$ cut $C \subseteq A$ that minimizes the value $\sum_{a \in C} c(a)$.
An optimal solution of the maximum $s$-$t$ flow problem (resp.\ the minimum $s$-$t$ cut problem)
is called a \emph{maximum $s$-$t$ flow} (resp.\ a \emph{minimum $s$-$t$ cut}).
We can easily observe that every minimum $s$-$t$ cut is representable as $\Delta^+(X)$ for some vertex subset $X$ with $s \in X \not\ni t$.
It is well-known (see e.g.,~\cite[Theorem~10.3]{schrijver2003}) that the maximum $s$-$t$ flow and minimum $s$-$t$ cut problems are dual to each other,
i.e.,
the value $\partial f^*(s)$ of a maximum $s$-$t$ flow $f^*$ coincides with the value $\sum_{a \in C^*} c(a)$ of a minimum $s$-$t$ cut $C^*$.
Moreover, we can obtain a minimum $s$-$t$ cut from a maximum $s$-$t$ flow $f^*$ in linear time by using the residual graph $G_{f^*}$ (see e.g.,~\cite[Corollary~10.2a]{schrijver2003}
and \Cref{lem:mincut-characterization} below).

In our reduction,
we represent the function $H(p)$ (defined in~\eqref{eq:k-optimal-potential}) over $k$-potentials $p$ as the \emph{$s$-$t$ cut function} $X \mapsto \sum_{a \in \Delta^+(X)} c(a)$ over vertex subsets $X$ with $s \in X \not\ni t$
of some digraph,
which is so-called a \emph{network representation}
(see e.g.,~\cite{Iwamasa2018}).
This reduction enables us to efficiently solve the $k$-diverse problem using state-of-the-art algorithms for the maximum $s$-$t$ flow problem.
We denote by $T_{\textup{cut}}(n, m, C)$ the time required to solve the minimum $s$-$t$ cut problem for a network of $n$ vertices and $m$ arcs with capacity at most $C$.
Using the state-of-the-art algorithm for the maximum $s$-$t$ flow problem given in~\cite{brand2023} (and the duality between the maximum $s$-$t$ flow and minimum $s$-$t$ cut problems),
we have $T_\mathrm{cut}(n, m, C) = O(m^{1+o(1)} \log C)$.

We first note that we use $\varphi(x) = \binom{x}{2}$ (where $\binom{0}{2}=\binom{1}{2}=0$) instead of $\varphi(x) = x^2$.
Since $\dsum$ is expressed in \Cref{sec:reduction} as $\dsum(\mathbf{S}) = \mathrm{Const.} - \sum_{e \in E} \mu_e(\mathbf{S})^2$ and the sum $\sum_{e \in E} \mu_e(\mathbf{S})$ is a constant $kq$ for a given instance, we can further rewrite it as
\begin{equation}
  \dsum(\mathbf{S}) = \mathrm{Const.} - 2 \sum_{e \in E} \binom{\mu_e(\mathbf{S})}{2} 
\end{equation}
for $\mathbf{S} \in (\cS(\mathbf{I}))^k$.
Thus, this choice of $\varphi$ does not affect the optimal solution of \SumkDiverse.
Although this modification does not improve the theoretical time complexity of the algorithm,
it removes some edges from the graph representing an instance of the minimum $s$-$t$ cut problem,
which can lead to faster computations in practice.

Let us see our reduction to the minimum $s$-$t$ cut problem.
Let $(G=(V,A), w, \varphi)$ be an instance of the minimum $k$-potential problem.
The idea is to make $k$ copies of the digraph $G$ and combine them into an extended graph $\hat{G}=(\hat{V}, \hat{A})$ with a source vertex $s$ and a sink vertex $t$.
We remark that we reverse all the arcs in $G$ unlike the previous subsection.
Formally, for each $v\in V$, we denote by $v^1,v^2,\ldots,v^k$ the copies of $v$.
Define $\hat{V}\coloneqq\{ v^i \mid v\in V,\ i\in [k] \} \cup \{s,t\}$ with additional vertices $s$ and $t$.
We set $\hat{A}_1 \coloneqq \{ (v^i,v^{i+1}) \mid v \in V,\ i \in [k-1] \}$ to connect copied graphs.
Let $\hat{A}_2$ be the set of \emph{reversed} copy arcs together with arcs from copies of $\bot$ to $t$ and those from $s$ to copies of $\top$, i.e., 
$\hat{A}_2 \coloneqq \{ (v^i,u^i) \mid (u,v)\in A,\ i \in [k] \}
    \cup \{ (\bot^i,t) \mid i \in [k] \}
    \cup \{ (s,\top^i) \mid i \in [k] \}$.
Let $\hat{A}_3$ be a set of arcs representing $\varphi$.
We set $\hat{A}_3 \coloneqq \{ (u^i,v^j) \mid a=(u,v)\in A,\ w(a) > 0,\ \ 1\leq i < j \leq k \}$ when $\varphi(x)=\binom{x}{2}$, and set $\hat{A}_3=\{ (u^i,v^{i+1}) \mid a=(u,v)\in A,\ w(a) > 0,\ i \in [k-1] \}$ when $\varphi(x)=\max\{0,x-1\}$.
The arc set of the extended graph is $\hat{A} \coloneqq \hat{A}_1 \cup \hat{A}_2 \cup \hat{A}_3$.
Let $M \in \Z_+$ be a sufficiently large integer satisfying $M > H(p)$ for any $k$-potential $p$ of $G$.
We set the capacity of each arc in $\hat{A}_1\cup \hat{A}_2$ to $M$ and that of each arc $(u^i, v^j) \in \hat{A}_3$ to $w(a)$, where $a = (u,v) \in A$.

We explain the role of each arc set.
For each $0$-$1$ assignment to $\hat{V}$, we assume that we incur a cost from arc $(\hat{u},\hat{v}) \in \hat{A}$ when $\hat{u}$ is assigned $0$ and $\hat{v}$ is assigned $1$.
The arcs in $\hat{A}_1$ are set to have a sufficiently large capacity so that they ensure that 
each $0$-$1$ assignment to $\hat{V}$ with small cost follows a threshold pattern: For each $v\in V$, there exists an integer $i$ such that $v^1, v^2, \ldots, v^i$ are assigned $1$, while $v^{i+1}, v^{i+2}, \ldots, v^k$ are assigned $0$.
This construction encodes an integer assignment $p(v) \coloneqq i$ for each $v\in V$.
The arcs in $\hat{A}_2$ ensure that this integer assignment $p$ to $V$ satisfies the $k$-potential conditions.
Finally, the arcs in $\hat{A}_3$ represent a cost $w(a)\varphi(p(v) - p(u))$ for each $a=(u,v) \in A$.
Figure~\ref{fig:via-mincut} illustrates the arc sets $\hat{A}_1$ and $\hat{A}_2$ and the correspondence between a $0$-$1$ assignment to $\hat{V}$ and an integer assignment to $V$.

\begin{figure}
    \centering
    \includegraphics[scale=0.45]{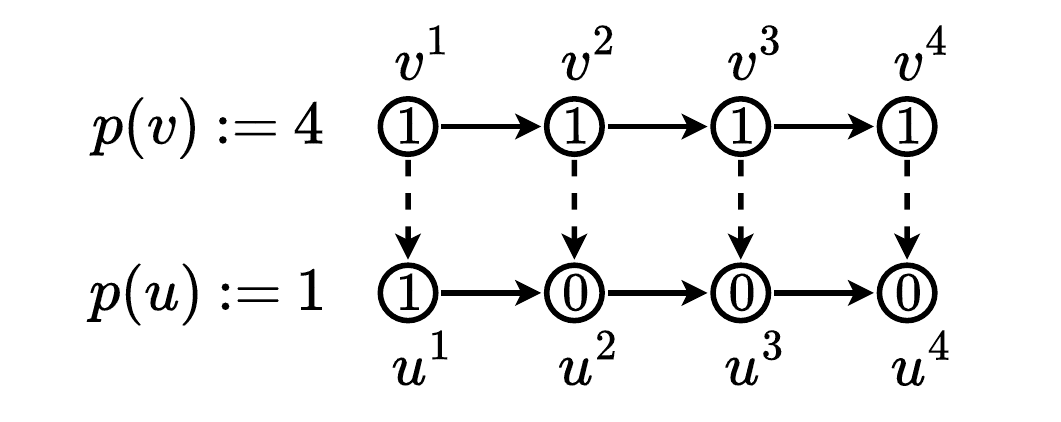}
    \caption{Illustration of the arcs in $\hat{A}_1$ (solid arrows) and ones in $\hat{A}_2$ (dotted arrows) along with the $0$-$1$ assignment for an arc $(u,v)\in A$. Each circle represents a vertex in $\hat{V}$, with the integer inside the circle indicating the $0$-$1$ assignment to it.}
    \label{fig:via-mincut}
\end{figure}

The following lemma verifies the above argument.
Here, let $\mathcal{X} \coloneqq \{ X \subseteq \hat{V} \mid s \in X \not\ni t \}$,
and let $\delta \colon \mathcal{X} \to \Z_+$ be the $s$-$t$ cut function of $\hat{G}$,
i.e.,
$\delta(X) \coloneqq \sum_{a \in \Delta^+(X)} c(a)$.

\begin{lemma}\label{lem:potential-to-mincut}
For any $X \in \mathcal{X}$ with $\delta(X) < M$,
there exists a $k$-potential $p$ such that $H(p) = \delta(X)$,
and this $p$ can be constructed in $O(k|V|)$ time.
  Conversely, for any $k$-potential $p$, there exists $X \in \mathcal{X}$ such that $H(p) = \delta(X)$.
\end{lemma}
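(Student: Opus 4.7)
The plan is to set up a correspondence between $k$-potentials $p$ of $G$ and low-cost $s$-$t$ cuts $X \in \mathcal{X}$ of $\hat{G}$, by encoding $p(v)$ as a \emph{threshold}: the copies $v^i$ with $i > p(v)$ lie on the $s$-side $X$, while those with $i \leq p(v)$ lie on the $t$-side. Under this identification, the three arc families $\hat{A}_1$, $\hat{A}_2$, $\hat{A}_3$ should enforce the three potential axioms~\eqref{cond:P1}--\eqref{cond:P3} and produce the correct objective value, respectively.

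For the forward direction, suppose $X \in \mathcal{X}$ satisfies $\delta(X) < M$. Since every arc in $\hat{A}_1 \cup \hat{A}_2$ has capacity $M$, none of these arcs can belong to $\Delta^+(X)$. The arcs $(v^i, v^{i+1}) \in \hat{A}_1$ not being cut yields a threshold pattern on each column of copies, so I would define $p(v) \coloneqq |\{ i \in [k] : v^i \notin X \}|$, which forces $v^i \in X$ iff $i > p(v)$; this gives~\eqref{cond:P2} immediately. The arcs $(s, \top^i)$ and $(\bot^i, t)$ not being cut force $\top^i \in X$ and $\bot^i \notin X$ for all $i$, giving~\eqref{cond:P1}. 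The reversed-copy arcs $(v^i, u^i) \in \hat{A}_2$ for $(u,v) \in A$ not being cut yield $v^i \in X \Rightarrow u^i \in X$, which is equivalent to $p(u) \leq p(v)$, giving~\eqref{cond:P3}. Thus $p$ is a $k$-potential, and can be read off $X$ in $O(k|V|)$ time by a single pass.

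It then remains to compute $\delta(X)$. Only arcs of $\hat{A}_3$ contribute, and for each $a = (u,v) \in A$, the crossing $\hat{A}_3$-arcs are exactly those $(u^i, v^j)$ with $i > p(u)$ and $j \leq p(v)$. A direct count gives $\binom{p(v) - p(u)}{2}$ such pairs $(i,j)$ with $i < j$ when $\varphi(x) = \binom{x}{2}$, and $\max\{0, p(v) - p(u) - 1\}$ admissible indices $i$ with $j = i+1$ when $\varphi(x) = \max\{0, x-1\}$. In both cases the contribution of $a$ equals $w(a) \varphi(p(v) - p(u))$, giving $\delta(X) = H(p)$.

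For the converse, given any $k$-potential $p$, I would set $X \coloneqq \{s\} \cup \{v^i : v \in V,\ i > p(v)\}$. Using~\eqref{cond:P1}--\eqref{cond:P3}, I would check that no arc of $\hat{A}_1 \cup \hat{A}_2$ is cut, so that $\delta(X) < M$, and then run exactly the same counting on $\hat{A}_3$ to obtain $\delta(X) = H(p)$. The main obstacle to watch out for is the careful bookkeeping of the $\hat{A}_3$ count in the two separate cases of $\varphi$, together with the verification that the capacity-$M$ sentinels on $\hat{A}_1 \cup \hat{A}_2$ truly rule out every cut that does not conform to a valid threshold pattern compatible with a $k$-potential.
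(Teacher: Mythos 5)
Your proof is correct and follows essentially the same route as the paper's: the capacity-$M$ arcs of $\hat{A}_1\cup\hat{A}_2$ enforce the threshold pattern and the conditions~\eqref{cond:P1}--\eqref{cond:P3}, the potential is read off as the threshold (your cardinality definition of $p(v)$ agrees with the paper's max-based one), and the crossing arcs of $\hat{A}_3$ are counted case-by-case for $\varphi(x)=\binom{x}{2}$ and $\varphi(x)=\max\{0,x-1\}$, with the converse using the same set $X=\{s\}\cup\{v^i\mid i>p(v)\}$. Your explicit counts $\binom{p(v)-p(u)}{2}$ and $\max\{0,p(v)-p(u)-1\}$ just spell out the "separate discussion" the paper leaves implicit.
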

\begin{proof}
Take any $X \in \mathcal{X}$ with $\delta(X) < M$.
  Define an integer assignment $p \colon V \to \mathbb{Z}_+$ by $p(v) = \max \{ i \in [0,k] \mid i = 0 \text{ or } v^{i} \notin X \}$ for each $v\in V$.
  First, we show that $p$ is a $k$-potential.
  Since $\delta(X) < M$, we have $\bot^i \notin X \ni \top^i$ for all $i \in [k]$.
  Hence $p$ satisfies~\eqref{cond:P1}.
  The condition~\eqref{cond:P2} is obviously satisfied by the definition of $p$.
  To prove~\eqref{cond:P3}, we show the contraposition.
  Assume that there exists an arc $(u,v) \in A$ with $p(u) > p(v)$.
  For such an arc, we have $u^{p(u)} \notin X \ni v^{p(u)}$ by the definition of $p$, which contradicts that $\delta(X) < M$ as $(v^{p(u)},u^{p(u)}) \in \hat{A}_2$.
  Therefore, $p$ satisfies~\eqref{cond:P3}, and thus, $p$ is a $k$-potential.
  Moreover, since arcs in $\hat{A}_1$ have capacity $M$, it follows that $v^1, v^2, \ldots, v^{p(v)} \notin X$ and $v^{p(v)+1}, v^{p(v)+2}, \ldots, v^k \in X$ for each vertex $v \in V$.
  We can then see that 
  $\delta(X)$ is equal to
  \begin{align}
    \sum_{ (u^i,v^j) \in \hat{A}_3 \cap \Delta^+(X)} w((u,v))
    &= \sum_{ a=(u,v)\in A} \left( w(a)\cdot |\{(u^i,v^j) \in \hat{A}_3 \mid (u^i, v^j) \in \Delta^+(X) \}| \right) \\
    &= \sum_{ a=(u,v) \in A} w(a) \varphi(p(v)-p(u)) \\
    &= H(p). \label{eq:potential-to-mincut-1}
  \end{align}
  Here, the second equation requires a separate discussion for each case of $\varphi(x) = \binom{x}{2}$ and $\varphi(x) = \max \{0,x-1\}$.
  Therefore, we have $H(p) = \delta(X)$.
  Clearly, $p$ can be constructed from $\Delta^+(X)$ in $O(|\hat{V}|) = O(k|V|)$ time.

  Conversely, let $p$ be an arbitrary $k$-potential.
  Define $X \subseteq \hat{V}$ as $X \coloneqq \{ v^i \mid v \in V,\ p(v) < i \leq k \} \cup \{s\}$.
  Clearly, we have $X \in \mathcal{X}$.
  For each $(u,v) \in A$, since $p(u) \leq p(v)$, 
  if $v^i \in X$ for some $i \in [k]$, then $u^i \in X$ also holds.
  This implies that $\Delta^+(X)$ contains no arcs from $\hat{A}_2$.
  Furthermore, $\Delta^+(X)$ contains no arcs from $\hat{A}_1$ by the definition of $X$.
  Hence, we have $\Delta^+(X) \subseteq \hat{A}_3$.
  By reversing the argument
  in~\eqref{eq:potential-to-mincut-1},
  we obtain
  $\delta(X) = H(p)$.
  This completes the proof.
\end{proof}

\Cref{lem:potential-to-mincut} implies that
the minimum $k$-potential problem can be solved in
$O(T_\textup{cut}(k|V|, k^2|A|,M))$ time if $\varphi(x) = \binom{x}{2}$,
and in $O(T_\textup{cut}(k|V|, k|A|,M))$ time if $\varphi(x) = \max\{0,x-1\}$.

Applying the state-of-the-art algorithm~\cite{brand2023} for the maximum $s$-$t$ flow problem to the instance of the minimum $k$-potential problem reduced from an instance of the \SumkDiverse/\CovkDiverse of {\sc Prob},
we obtain the following result.

\begin{theorem}\label{thm:framework-maxflow-time-complexity}
    The problems \SumkDiverse and \CovkDiverse of {\sc Prob} can be solved in $O(T_P(\mathbf{I}) 
+ T_r(\mathbf{I}) + (k|A_P| + k^2|E|)^{1 + o(1)})$ time
    and in $O(T_P(\mathbf{I}) + T_r(\mathbf{I}) + (k(|A_P| + |E|))^{1 + o(1)})$ time,
    respectively,
    where $\mathbf{I}$ is a given instance of {\sc Prob}.
\end{theorem}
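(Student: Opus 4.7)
The plan is to chain together the three reductions that have already been assembled: the $k$-diverse problem reduces to the minimum $k$-potential problem by \Cref{thm:sec2:k-diverse}; the minimum $k$-potential problem reduces in turn to a minimum $s$-$t$ cut instance on the digraph $\hat{G} = (\hat{V}, \hat{A})$ via \Cref{lem:potential-to-mincut}; and the resulting minimum $s$-$t$ cut instance is solved by the state-of-the-art maximum $s$-$t$ flow algorithm of~\cite{brand2023}, which runs in $T_\mathrm{cut}(n,m,C) = O(m^{1 + o(1)} \log C)$ time. Concretely, I would first construct $\hat{G}$ from $G_{\mathbf{I}}$, then compute a maximum $s$-$t$ flow $\hat{f}^*$ in $\hat{G}$, extract a minimum $s$-$t$ cut $\Delta^+(X^*)$ from the residual graph in linear time, convert $X^*$ to a minimum $k$-potential $p^*$ in $O(k|V|)$ time using \Cref{lem:potential-to-mincut}, and finally produce the $k$-tuple $\mathbf{S}_{p^*}$ in $O(|E| + kq)$ time using \Cref{lem:d=H}.

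The core work of the proof is then the careful bookkeeping of $|\hat{V}|$ and $|\hat{A}|$. Using the assumption (established at the end of \Cref{sec:reduction}) that we may take $|P| = O(|A_P| + |E|)$, we have $|\hat{V}| = k|V| + 2 = O(k(|A_P| + |E|))$, and
\begin{align}
|\hat{A}_1| = (k-1)|V| = O(k(|A_P| + |E|)), \qquad
|\hat{A}_2| = k|A| + 2k = O(k(|A_P| + |E|)),
\end{align}
since $|A| \le |A_P| + |E|$. For \CovkDiverse, where $\varphi(x) = \max\{0, x-1\}$, we have $|\hat{A}_3| = (k-1)|A_+| \le k|E|$, giving the overall bound $|\hat{A}| = O(k(|A_P| + |E|))$. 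For \SumkDiverse, where $\varphi(x) = \binom{x}{2}$, we have $|\hat{A}_3| = \binom{k}{2} |A_+| \le k^2|E|$, hence $|\hat{A}| = O(k|A_P| + k^2|E|)$.

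With these bounds, and since $M = |E|\varphi(k) + 1$ is polynomial in the input parameters so that $\log M$ is subsumed into the $(\cdot)^{o(1)}$ factor, the call to $T_\mathrm{cut}$ contributes $O((k(|A_P| + |E|))^{1+o(1)})$ time for \CovkDiverse and $O((k|A_P| + k^2|E|)^{1+o(1)})$ time for \SumkDiverse. Adding the times $T_P(\mathbf{I})$ and $T_r(\mathbf{I})$ for constructing the DAG and reduction map, the time $O(|E|)$ for building $G_{\mathbf{I}}$ from $r$, the time $O(|\hat{A}|)$ for building $\hat{G}$ and extracting the cut, and the postprocessing time $O(|E| + kq) = O(k|E|)$ yields exactly the two claimed bounds, as these latter additive terms are absorbed into the dominant flow cost.

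The main obstacle is less any single deep step and more the accounting: one must check that every auxiliary cost (construction of $\hat{G}$, extraction of a minimum cut from a maximum flow, conversion $X^* \mapsto p^* \mapsto \mathbf{S}_{p^*}$, and handling the $\log M$ factor) fits inside the advertised $(\cdot)^{1+o(1)}$ envelope, and that the separate case analysis for $\varphi(x) = \binom{x}{2}$ versus $\varphi(x) = \max\{0, x-1\}$ produces precisely the two different arc-count bounds claimed in the statement.
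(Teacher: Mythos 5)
Your proposal is correct and follows essentially the same route as the paper's proof: chain \Cref{thm:sec2:k-diverse} and \Cref{lem:potential-to-mincut}, bound $|\hat{A}_1|, |\hat{A}_2| = O(k(|A_P|+|E|))$ and $|\hat{A}_3| = O(k^2|E|)$ or $O(k|E|)$ according to the choice of $\varphi$, note $M = O(k^2|E|)$ resp.\ $O(k|E|)$ so that $\log M$ is absorbed, and apply $T_\mathrm{cut}(n,m,C) = O(m^{1+o(1)}\log C)$. Your extra bookkeeping of the cut extraction and the conversions $X^* \mapsto p^* \mapsto \mathbf{S}_{p^*}$ is consistent with what the paper delegates to the cited lemmas.
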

\begin{proof}
Recall the notation in \Cref{sec:reduction}.
We have $|P| \leq |A_P| + |E|$,
where a DAG $(P, A_P)$ represents the poset $P$,
and $\sum_{a \in A} w(a) = |E|$.
Thus, the sizes of each arc sets are $|\hat{A}_1| = O(k(|A_P| + |E|))$, $|\hat{A}_2| = O(k(|A_P| + |E|))$, $|\hat{A}_3| = O(k^2 |E|)$ when $\varphi(x) = \binom{x}{2}$, and $|\hat{A}_3| = O(k |E|)$ when $\varphi(x) = \max \{ 0, x-1 \}$.
If we set $M \coloneqq \sum_{a\in A} w(a) \varphi(k) + 1$,
then we have $M = \binom{k}{2}|E| + 1$ when $\varphi(x) = \binom{x}{2}$,
which is $O(k^2|E|)$,
and $M = (k-1)|E| + 1$ when $\varphi(x) = \max \{ 0, x-1 \}$,
which is $O(k|E|)$.
By
\Cref{thm:sec2:k-diverse}
and
$T_\mathrm{cut}(n, m, C) = O(m^{1+o(1)} \log C)$,
we obtain
$O(T_P(\mathbf{I}) + T_r(\mathbf{I}) + (k|A_P| + k^2|E|)^{1 + o(1)})$ time
for \SumkDiverse,
and $O(T_P(\mathbf{I}) + T_r(\mathbf{I}) + (k(|A_P| + |E|))^{1 + o(1)})$ time for \CovkDiverse.
\end{proof}

At this moment, even if we use the state-of-the-art algorithm for the maximum $s$-$t$ flow problem, the time bound in \Cref{thm:framework-maxflow-time-complexity} is worse than that in \Cref{thm:optimal-potential-time-complexity}, derived from the reduction to the minimum cost flow problem.
Nevertheless, we believe that
it is still worth considering our reduction to the maximum $s$-$t$ flow problem for practical use.
Recently developed algorithms for the network flow problems are not easy to implement.
    Since the reduction procedure and the reduced problem, namely, the maximum $s$-$t$ flow problem, of this subsection are simpler than those of the previous subsection,
    the reduction in this subsection may perform faster in practice 
    if we use classical (or implementable) algorithms.

\section{Applications}\label{sec:applications}
In this section,
we introduce two applications of our framework;
one is the $k$-diverse problems of \Mincut
and the other is that of \StableMatching.

In order to apply our framework to the $k$-diverse problem of a concrete combinatorial problem,
we need to construct a poset and a reduction map appearing in the property~\eqref{cond:R} for each instance.
In \Cref{subsec:strategy},
we develop a common strategy of building these components.
In fact, it is known that both families of minimum $s$-$t$ cuts and stable matchings are naturally identified with set families called \emph{ring families}.
A ring family with inclusion order forms a poset, particularly a \emph{distributive lattice}.
We will utilize these facts to construct a reduction map for \Mincut
and \StableMatching in \Cref{subsec:k-diverse-mincut,subsec:k-diverse-stable-matching}, respectively.
In \Cref{subsec:k-diverse-totalorder},
we briefly describe a framework for the $k$-diverse problem recently developed by De Berg, Mart\'{i}nez, and Spieksma in~\cite{deBerg2025},
and show that our framework can capture theirs in the case of \SumkDiverse/\CovkDiverse.

\subsection{How to construct a poset and a reduction map}\label{subsec:strategy}
We start this subsection with introducing terminology on lattice (see e.g.,~\cite{Davey2002} for details).
A poset $(L, \preceq)$ is called a \emph{lattice}
if, for any two elements $x, y \in L$,
their least upper bound and greatest lower bound exist in $L$;
the former and latter are called the \emph{join}
and \emph{meet} of $x$ and $y$,
which are denoted as $x \vee y$ and $x \wedge y$,
respectively.
A lattice $L$ is said to be \emph{distributive} if the distributive law $x \wedge (y \vee z) = (x \wedge y) \vee (x \wedge z)$ holds for any $x, y, z \in L$.

A typical example of a distributive lattice is the family $\cI(P)$ of ideals of a poset $P$ with inclusion order $\subseteq$.
In this case,
the join of two ideals $I$ and $I'$ is their union $I \cup I'$,
and the meet is their intersection $I \cap I'$;
we can easily see that the union and intersection of two ideals are also ideals.
More generally,
a \emph{ring family},
which is a nonempty family $\cR \subseteq 2^R$ of subsets of a nonempty finite set $R$
such that it is closed under the union and intersection,
endowed with inclusion order $\subseteq$
forms a distributive lattice $(\cR, \subseteq)$,
where $X \vee Y = X \cup Y$ and $X \wedge Y = X \cap Y$
for $X, Y \in \cR$.

The celebrated \emph{Birkhoff's representation theorem}~\cite{birkhoff1937} states that
every distributive lattice $(L, \preceq)$ is isomorphic to the distributive lattice $(\cI(P), \subseteq)$ over the family of ideals of some poset $P$.
Here, two lattices $(L, \preceq)$ and $(L', \preceq')$ are said to be \emph{isomorphic}
if there is a bijection $h \colon L \to L'$ such that
$x \preceq y$ if and only if $h(x) \preceq h(y)$ for any $x, y \in L$.
For a distributive lattice $(L, \preceq)$,
we refer to
a poset $P$ such that $(L, \preceq)$ and $(\cI(P), \subseteq)$ are isomorphic as a \emph{Birkhoff representation} of $(L, \preceq)$.

A Birkhoff representation of the distributive lattice $(\cR, \subseteq)$ over a ring family $\cR$ can be obtained as follows.
The ring family $\cR$ has the unique minimal set $X_{\bot} \coloneqq \bigcap_{X \in \cR} X$ and unique maximal set $X_{\top} \coloneqq \bigcup_{X \in \cR} X$.
Take any maximal chain $X_{\bot} \eqqcolon X_0 \subsetneq X_1 \subsetneq \cdots \subsetneq X_n \coloneqq X_{\top}$
from $X_{\bot}$ to $X_{\top}$ in $\cR$;
namely, there are no $X \in \cR$ and $i \in [n]$ with $X_{i-1} \subsetneq X \subsetneq X_i$.
Then, $\Pi^*(\cR) \coloneqq \{ X_i \setminus X_{i-1} \mid i \in [n] \}$ forms a partition of $X_{\top} \setminus X_{\bot}$.
We define a partial order $\preceq$ on $\Pi^*(\cR)$ by setting $\hat{X} \preceq \hat{Y}$ if and only if every $Z \in \cR$
with $Z \supseteq \hat{Y}$
also includes $\hat{X}$.
The resulting $(\Pi^*(\cR), \preceq)$ is actually a poset and is independent of the choice of a maximal chain from $X_\bot$ to $X_\top$.
It is known (see e.g.,~\cite[Chapter~2.2.2]{Murota2010})
that $(\Pi^*(\cR), \preceq)$ is a Birkhoff representation of $(\cR, \subseteq)$; more precisely,
the map $\mathcal{Y} \mapsto (\bigcup_{\hat{X} \in \mathcal{Y}} \hat{X}) \cup X_\bot$
is an isomorphism from $(\cI(\Pi^*(\cR)), \subseteq)$ to $(\cR, \subseteq)$.
In this paper, we refer to $(\Pi^*(\cR), \preceq)$ as \emph{the} Birkhoff representation of $(\cR, \subseteq)$.

We are ready to develop a strategy to construct a reduction map by using a ring family.
Let $\cR \subseteq 2^R$ be a ring family over a nonempty finite set $R$.
We may assume that the minimal set $X_\bot$ is nonempty and the maximal set $X_\top$ is a proper subset of $R$;
otherwise we add two elements $\bot$ and $\top$ to $R$
and update each subset $X \in \cR$ as $X \cup \{ \bot \}$,
which makes $\cR$ satisfy $\bot \in X_\bot$ and $\top \in \overline{X}_\top \coloneqq R \setminus X_\top$.
Let $\Pi(\cR)$ denote the partition $\Pi^*(\cR) \cup \{ X_\bot, \overline{X}_\top \}$ of $R$.
We extend the partial order $\preceq$ on $\Pi^*(\cR)$ to that on $\Pi(\cR)$ by setting
$X_\bot \prec \hat{X} \prec \overline{X}_\top$ for any $\hat{X} \in \Pi^*(\cR)$.
For a map $\hat{r} \colon E \to R^2$ given by $e \mapsto (\hat{e}^+, \hat{e}^-)$,
we define
\begin{align}
    {\sup}_{\hat{r}}(X) \coloneqq \{ e \in E \mid \hat{e}^+ \in X \not\ni \hat{e}^- \}
\end{align}
for $X \in \cR$.
We say that $\hat{r} \colon E \to R^2$ is a \emph{pre-reduction map}
if $\hat{e}^- \in X$ implies $\hat{e}^+ \in X$ for all $X \in \cR$ and $e \in E$, and
$\cS(\mathbf{I}) = \{ {\sup}_{\hat{r}}(X) \mid X \in \cR \}$.

We show that we can construct a reduction map $r \colon E \to \Pi(\cR)^2$ as long as we have a pre-reduction map $\hat{r} \colon E \to R^2$.

\begin{lemma}\label{lem:pre-reduction}
    Suppose that $\hat{r} \colon E \to R^2$ given by $e \mapsto (\hat{e}^+, \hat{e}^-)$ is a pre-reduction map.
    Then, the map $r \colon E \to \Pi(\cR)^2$ defined by
    \begin{align}
        r(e) \coloneqq (\Pi(\hat{e}^+), \Pi(\hat{e}^-)) \qquad (e \in E)
    \end{align}
    is a reduction map, where $\Pi(x)$ ($x\in R$) denotes the unique member of $\Pi(\cR)$ that contains $x$.
\end{lemma}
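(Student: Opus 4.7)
The plan is to leverage the Birkhoff isomorphism $\phi\colon \cI(\Pi^*(\cR)) \to \cR$, $\phi(I) = (\bigcup_{\hat{X}\in I} \hat{X}) \cup X_\bot$, recalled above, and verify the two requirements of a reduction map for $r$: (i) $\Pi(\hat{e}^+)\preceq \Pi(\hat{e}^-)$ in $\Pi(\cR)$ for every $e\in E$, and (ii) $\cS(\mathbf{I})=\{{\sup}_r(I)\mid I\in\cI(\Pi^*(\cR))\}$.

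For (i), the cases $\Pi(\hat{e}^+)=X_\bot$ and $\Pi(\hat{e}^-)=\overline{X}_\top$ are immediate from the way $\preceq$ is extended to $\Pi(\cR)$. When both endpoints lie in $\Pi^*(\cR)$, I would first establish the auxiliary fact that every $Z\in\cR$ either contains a block $\hat{X}\in\Pi^*(\cR)$ entirely or is disjoint from it; this follows by taking $(Z\cap X_i)\cup X_{i-1}\in\cR$ for the relevant index $i$ and using maximality of the chain to force it to equal $X_{i-1}$ or $X_i$. Granted this block-atomicity, any $Z\supseteq \Pi(\hat{e}^-)$ contains $\hat{e}^-$, hence contains $\hat{e}^+$ by the pre-reduction hypothesis, and therefore contains the whole block $\Pi(\hat{e}^+)$; this is precisely $\Pi(\hat{e}^+)\preceq \Pi(\hat{e}^-)$ by the definition of $\preceq$ on $\Pi^*(\cR)$.

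For (ii), I would prove the pointwise identity ${\sup}_r(I) = {\sup}_{\hat{r}}(\phi(I))$ for every $I\in\cI(\Pi^*(\cR))$. Unfolding definitions reduces this to the two equivalences $\Pi(\hat{e}^+)\in I\cup\{X_\bot\} \iff \hat{e}^+\in\phi(I)$ and $\Pi(\hat{e}^-)\notin I\cup\{X_\bot\}\iff \hat{e}^-\notin\phi(I)$, each verified by a short case split according to whether the relevant block equals $X_\bot$, lies in $\Pi^*(\cR)$, or equals $\overline{X}_\top$ (the last case using $\overline{X}_\top\cap X_\top=\emptyset$ together with $\phi(I)\subseteq X_\top$). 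Combining this identity with the bijectivity of $\phi$ onto $\cR$ and the pre-reduction assumption $\cS(\mathbf{I})=\{{\sup}_{\hat{r}}(X)\mid X\in\cR\}$ yields the desired equality.

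I expect the only subtle step to be (i), specifically isolating and applying the block-atomicity of $\cR$ with respect to the maximal-chain partition $\Pi^*(\cR)$; once that is in hand, step (ii) is pure bookkeeping through the Birkhoff isomorphism.
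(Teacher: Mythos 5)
Your proof is correct and follows the same overall route as the paper's: verify the two defining conditions of a reduction map, using the fact that membership $x\in X$ for $X\in\cR$ is equivalent to containment of the whole part $\Pi(x)\subseteq X$, and then transport $\cS(\mathbf{I})=\{{\sup}_{\hat{r}}(X)\mid X\in\cR\}$ through the Birkhoff bijection $I\mapsto(\bigcup_{\hat{X}\in I}\hat{X})\cup X_\bot$; your step (ii) is essentially identical to the paper's chain of equalities. The one substantive difference is in (i): the paper obtains the ``membership iff part containment'' fact as an immediate consequence of the quoted isomorphism (every $X\in\cR$ is a union of members of $\Pi(\cR)$), whereas you re-derive this part-atomicity of $\cR$ from first principles via $(Z\cap X_i)\cup X_{i-1}\in\cR$ and maximality of the chain. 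That argument is valid and self-contained, but it buys little here, since you invoke the bijectivity of the Birkhoff map in (ii) anyway and its surjectivity already yields atomicity. One cosmetic omission: your case split in (i) (namely $\Pi(\hat{e}^+)=X_\bot$, or $\Pi(\hat{e}^-)=\overline{X}_\top$, or both parts in $\Pi^*(\cR)$) is not exhaustive; the leftover configurations $\Pi(\hat{e}^-)=X_\bot$ with $\Pi(\hat{e}^+)\neq X_\bot$, and $\Pi(\hat{e}^+)=\overline{X}_\top$ with $\Pi(\hat{e}^-)\neq\overline{X}_\top$, must be ruled out, which follows in one line from the pre-reduction condition (e.g.\ $\hat{e}^-\in X_\bot$ forces $\hat{e}^+\in\bigcap_{X\in\cR}X=X_\bot$), so this is easily repaired and not a real gap.
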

\begin{proof}
It suffices to see that (i)
for all $e \in E$, we have $\Pi(\hat{e}^+) \preceq \Pi(\hat{e}^-)$
and (ii) $\cS(\mathbf{I}) = \{ {\sup}_r(I) \mid I \in \cI(\Pi^*(\cR)) \}$.
For each $I \in \cI(\Pi^*(\cR))$,
let $\bigcup I \coloneqq \bigcup_{\hat{X} \in I} \hat{X}$.
Recall that the map $I \mapsto (\bigcup I) \cup X_\bot$
is an isomorphism from $(\cI(\Pi^*(\cR)), \subseteq)$ to $(\cR, \subseteq)$.
Hence, every $X \in \cR$ is representable as the union of several members of $\Pi(\cR)$,
which implies that, for $x \in R$ and $X \in \cR$,
$x \in X$ if and only if $\Pi(x) \subseteq X$.

We first see (i). By the first condition of a pre-reduction map,
$\Pi(\hat{e}^-) \subseteq X$ implies $\Pi(\hat{e}^+) \subseteq X$
for all $X \in \cR$ and $e \in E$.
Hence, we obtain $\Pi(\hat{e}^+) \preceq \Pi(\hat{e}^-)$ by the definition of the partial order $\preceq$ on $\Pi(\cR)$.

Next, (ii) follows because 
\begin{align}
    \{{\sup}_r(I) \mid I \in \cI(\Pi^*(\cR)) \} &= \left\{ \{e \in E \mid \Pi(\hat{e}^+) \in I \cup \{ X_\bot \} \not\ni \Pi(\hat{e}^-) \} \ \middle|\  I \in \cI(\Pi^*(\cR)) \right\}\\
    &= \left\{ \left\{e \in E \ \middle|\  \Pi(\hat{e}^+) \subseteq \left(\bigcup I\right) \cup X_\bot \not\supseteq \Pi(\hat{e}^-) \right\} \ \middle|\ I \in \cI(\Pi^*(\cR)) \right\}\\
    &= \left\{ \{e \in E \mid \hat{e}^+ \in X \not\ni \hat{e}^- \} \mid X \in \cR \right\}\\
    &= \{ {\sup}_{\hat{r}}(X) \mid X \in \cR \}\\
    &= \cS(\mathbf{I}).
\end{align}
Here, the third equality holds because the map $I \mapsto (\bigcup I) \cup X_\bot$ is a bijection between $\cI(\Pi^*(\cR))$ and $\cR$, and 
$x \in X$ if and only if $\Pi(x) \subseteq X$.
The last equality follows from the second condition of a pre-reduction map.
\end{proof}

Thus, we obtain the following:
\begin{proposition}\label{prop:pre-reduction}
    Let $R$ be a nonempty finite set,
    and let $\cR$ be a ring family over $R$.
    Suppose we are given a pre-reduction map $\hat{r} \colon E \to R^2$
    and the partition $\Pi(\cR)$ of $R$.
    Then, we can construct a reduction map $r \colon E \to \Pi(\cR)^2$ in constant time.
\end{proposition}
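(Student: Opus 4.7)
The plan is essentially to invoke \Cref{lem:pre-reduction} directly and then verify the running-time claim, since \Cref{lem:pre-reduction} has already done all of the mathematical work: it exhibits the explicit formula $r(e) \coloneqq (\Pi(\hat{e}^+), \Pi(\hat{e}^-))$ and proves that this $r$ satisfies both conditions required of a reduction map (namely $\Pi(\hat{e}^+) \preceq \Pi(\hat{e}^-)$ for every $e \in E$, and $\cS(\mathbf{I}) = \{{\sup}_r(I) \mid I \in \cI(\Pi^*(\cR))\}$). So the only remaining content of the proposition is the implementation argument.

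For that, I would first observe that since $\Pi(\cR)$ is part of the input, we may assume it is stored in a form that allows the natural map $\Pi \colon R \to \Pi(\cR)$, sending $x$ to the unique block containing $x$, to be evaluated in $O(1)$ time per query (for instance, as an array indexed by $R$; if the input were merely a list of blocks, an $O(|R|)$ preprocessing step builds such a lookup, but this is subsumed in the assumption that $\Pi(\cR)$ is ``given''). Given this, for any $e \in E$ we can read off $\hat{r}(e) = (\hat{e}^+, \hat{e}^-)$ from the pre-reduction map and return $(\Pi(\hat{e}^+), \Pi(\hat{e}^-))$, using only two lookups; hence each value $r(e)$ is computed in $O(1)$ time.

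The only point that requires interpretation, and hence the main (non-)obstacle, is what is meant by constructing $r$ ``in constant time'': the natural reading, and the one consistent with how the proposition will be invoked in \Cref{subsec:k-diverse-mincut,subsec:k-diverse-stable-matching}, is that the \emph{description} of $r$ has constant-size overhead on top of the inputs already supplied, and each individual query $r(e)$ costs $O(1)$. In particular, no further preprocessing, no enumeration of $\cR$, and no use of Birkhoff's theorem beyond what was already invoked in defining $\Pi(\cR)$ is required, so the proof reduces to citing \Cref{lem:pre-reduction} for correctness and noting the constant-time lookup for the complexity bound.
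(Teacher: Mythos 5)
Your proposal is correct and follows the same route as the paper: the paper derives \Cref{prop:pre-reduction} directly from \Cref{lem:pre-reduction}, with the constant-time claim meaning exactly what you describe, namely that $r$ is obtained from the given $\hat{r}$ and $\Pi(\cR)$ by composition with the block-lookup map, requiring no additional preprocessing. Your extra remarks on how $\Pi(\cR)$ is stored and queried only make explicit what the paper leaves implicit.
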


In \Cref{subsec:k-diverse-mincut,subsec:k-diverse-stable-matching},
we actually construct pre-reduction maps
by using ring families that are closely related to their solution sets.

The property~\eqref{cond:R} requires that, for each instance $\mathbf{I}$ of {\sc Prob},
the family $\cS(\mathbf{I})$ of its solutions is closely related to the family $\cI(P^*)$ of ideals of the poset $P^*$.
Hence, it is natural to expect that $\cS(\mathbf{I})$ admits the distributive lattice structure.
We conclude this subsection by showing that this expectation is correct,
which might be viewed as
a necessary condition of a combinatorial problem for our framework to be applicable.

Recall the notation in the property~\eqref{cond:R}.
We may assume that $e^+ \prec e^-$ for each $e \in E$,
since if $e^+=e^-$, then no member $S$ in $\cS(\mathbf{I})$ contains $e$, i.e., $\cS(\mathbf{I}) \subseteq \binom{E \setminus \{e\}}{q}$, and 
hence we can remove $e$ from $E$.
Let $\leq$ denote a partial order on $E$ defined by setting
$e \leq e'$ if and only if $e = e'$ or $e^- \preceq e'^+$.
This $\leq$ can be extended to a partial order on $\cS(\mathbf{I})$ by setting
$S \leq T$ if and only if
there exists a bijection $\pi \colon S \to T$ such that
$e \leq \pi(e)$ for all $e \in S$.
Then, the following holds.

\begin{theorem}
    Suppose that a combinatorial problem {\sc Prob} has the properties~\eqref{cond:S} and~\eqref{cond:R}.
    For any instance $\mathbf{I}$ of {\sc Prob}, 
    the poset $(\cS(\mathbf{I}), \leq)$ forms a distributive lattice.
\end{theorem}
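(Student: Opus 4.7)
The plan is to transfer the distributive lattice structure of $(\cI(P^*), \subseteq)$ to $(\cS(\mathbf{I}), \leq)$ along the surjection ${\sup}_r$, using the ``balance'' supplied by~\eqref{cond:S} --- every ideal of $P^*$ maps to a solution of the common size $q$ --- to ensure compatibility with the order.

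Two preliminary lemmas carry the bulk of the work.
\emph{(a) Interval lemma:} For each $S \in \cS(\mathbf{I})$, the preimage ${\sup}_r^{-1}(S)$ is a nonempty interval $[I_S^{\min}, I_S^{\max}]$ of $(\cI(P^*), \subseteq)$. Given $I, J \in {\sup}_r^{-1}(S)$, an element-wise check shows $S \subseteq {\sup}_r(I \cup J)$ and $S \subseteq {\sup}_r(I \cap J)$; combined with $|{\sup}_r(\cdot)| = q$ enforced by~\eqref{cond:S}, both inclusions are equalities, and the same size argument gives convexity of the preimage. \emph{(b) Monotonicity (sweep) lemma:} $I \subseteq J$ implies ${\sup}_r(I) \leq {\sup}_r(J)$. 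Fix a linear extension of $J \setminus I$ and grow $I$ to $J$ one element at a time. When $v$ is added to the current ideal $K$, the sets of \emph{leavers} $\{e : e^- = v,\ e^+ \in K\}$ and \emph{enterers} $\{e : e^+ = v,\ e^- \notin K \cup \{v\}\}$ have equal cardinality by~\eqref{cond:S}, and any leaver $e$ and enterer $e'$ satisfy $e^- = v = e'^+$, giving $e \leq e'$ in $E$. Pairing leavers with enterers by an arbitrary bijection at each step and following each element of ${\sup}_r(I)$ through successive replacements produces a bijection $\pi \colon {\sup}_r(I) \to {\sup}_r(J)$ with $e \leq \pi(e)$ --- injective because reversing the chain uniquely recovers the starting element.

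With these in hand, I define $S \vee T \coloneqq {\sup}_r(I_S^{\min} \cup I_T^{\min})$ and $S \wedge T \coloneqq {\sup}_r(I_S^{\max} \cap I_T^{\max})$. By monotonicity these are upper (resp.~lower) bounds of $\{S, T\}$. For the least-upper-bound property --- the main obstacle --- I need a \emph{lifting lemma}: if $S \leq U$ in $(\cS(\mathbf{I}), \leq)$, then $I_S^{\max} \subseteq I_U^{\max}$. Granting this, any upper bound $U$ of $S, T$ satisfies $I_U^{\max} \supseteq I_S^{\min} \cup I_T^{\min}$ with ${\sup}_r(I_U^{\max}) = U$, and monotonicity yields $S \vee T \leq U$; the meet argument is dual, and distributivity then follows because ${\sup}_r(I) \vee {\sup}_r(J) = {\sup}_r(I \cup J)$ and ${\sup}_r(I) \wedge {\sup}_r(J) = {\sup}_r(I \cap J)$, so the distributive identity in $(\cI(P^*), \cup, \cap)$ transfers to $(\cS(\mathbf{I}), \vee, \wedge)$. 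The hard part is the lifting lemma itself: my plan is to invert the sweep by processing the non-fixed pairs $(e, \pi(e))$ of a bijection witnessing $S \leq U$ in a topological order of $P$ and inductively extending any $I \in {\sup}_r^{-1}(S)$ along the defining relation $e^- \preceq \pi(e)^+$, using~\eqref{cond:S} at each step to certify that the ${\sup}_r$-image traces a monotone path of solutions from $S$ to $U$, so that the constructed ideal indeed lies in ${\sup}_r^{-1}(U)$.
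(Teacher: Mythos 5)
Your preliminary steps are fine: the interval lemma (a) and the monotonicity lemma (b) are both correct, and (b) is essentially the same single-element sweep argument the paper uses (equal numbers of leavers and enterers at each added vertex, forced by~\eqref{cond:S}, paired by a bijection with $e^-=v=e'^+$). The genuine gap is the \emph{lifting lemma}, which you yourself identify as the hard part and only sketch: the claim that $S \leq U$ implies $I_S^{\max} \subseteq I_U^{\max}$ is exactly the order-reflection direction of the statement, and your plan for it does not work as described. Extending an ideal $I \in {\sup}_r^{-1}(S)$ ``along the defining relation $e^- \preceq \pi(e)^+$'' means adding the down-set of $\pi(e)^+$ to the current ideal, but this can simultaneously bring in elements $f$ with $f^+ \preceq \pi(e)^+$ that do not belong to $U$, and can absorb $\pi(e')^-$ for another pair (since nothing prevents $\pi(e')^- \preceq \pi(e)^+$), thereby expelling $\pi(e')$ from the image. ``Using~\eqref{cond:S} at each step'' only certifies that every intermediate image is \emph{some} solution of size $q$ and, by your lemma (b), that the images increase in $\leq$; it does not certify that the process terminates at $U$ rather than at a solution incomparable to or strictly above $U$. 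Without the lifting lemma (and its dual, which you need for the meet and only wave at), your candidate join ${\sup}_r(I_S^{\min} \cup I_T^{\min})$ is merely an upper bound, and neither the lattice property nor distributivity (via ${\sup}_r$ being a lattice homomorphism) is established; even antisymmetry of $\leq$ on $\cS(\mathbf{I})$ silently relies on this same reflection property.

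For comparison, the paper closes precisely this gap by a counting argument that your proposal never invokes: property~\eqref{cond:S} forces $|\Theta^+(x)| = |\Theta^-(x)|$ for every $x \in P^*$, where $\Theta^\pm(x) = \{e \in E \mid e^\pm = x\}$; after harmlessly deleting poset elements with $\Theta^+(x) \cup \Theta^-(x) = \emptyset$, every remaining $x$ has $\Theta^+(x) \neq \emptyset$. Then, for $I \not\subseteq I'$, a maximal element $x$ of $I \setminus I'$ yields a nonempty set $\Theta^+(x) \subseteq {\sup}_r(I)$ none of whose elements is $\leq$ any element of ${\sup}_r(I')$, so ${\sup}_r(I) \not\leq {\sup}_r(I')$; this gives injectivity of ${\sup}_r$ on the pruned poset and makes it a full order isomorphism onto $(\cS(\mathbf{I}), \leq)$, from which everything you want (your lifting lemma included) follows at once. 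Some argument of this kind --- locating concrete unmatched elements in ${\sup}_r(I)$ when $I \not\subseteq I'$ --- is what your proposal is missing, and your quotient-by-preimage-intervals framework would become a correct (if longer) alternative route only once you supply it.
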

\begin{proof}
Let $(P,\preceq)$ be the poset associated with an instance $\mathbf{I}$ indicated in the property~\eqref{cond:R}.
For each $x \in P$,
we define
\begin{align}
    \Theta^+(x) \coloneqq \{ e \in E \mid e^+ = x \}, \quad \Theta^-(x) \coloneqq \{ e \in E \mid e^- = x \}, \quad \theta(x) \coloneqq |\Theta^+(x)| - |\Theta^-(x)|.
\end{align}
We first observe the following claim.
\begin{claim}\label{cl:obs}
    \begin{enumerate}[{label={\textup{(\arabic*)}}}]
        \item
        $\theta(\bot) = q$, $\theta(\top) = -q$, and $\theta(x) = 0$ for $x \in P^*$.
        \item ${\sup}_r(I) = \bigcup_{x \in I \cup \{\bot\}} \Theta^+(x) \setminus \bigcup_{x \in I \cup \{\bot\}} \Theta^-(x)$
        for each $I \in \cI(P^*)$.
        \item Let $I \in \cI(P^*)$ and $x \in I \cup \{ \bot \}$ be a maximal element in $I \cup \{ \bot \}$.
        Then we have ${\sup}_r(I) \supseteq \Theta^+(x)$.
    \end{enumerate}
\end{claim}
\begin{proof}[Proof of \Cref{cl:obs}]
    (1).
    By the assumption that $e^+ \prec e^-$ for $e \in E$,
    there is no $e \in E$ such that $e^- = \bot$ or $e^+ = \top$.
    Hence, $\theta(\bot) = |\Theta^+(\bot)| = |{\sup}_r(\{\bot\})| = q$
    and $\theta(\top) = - |\Theta^-(\top)| = - |{\sup}_r(P \setminus \{ \top \})| = -q$.
    Suppose, to the contrary, that there is $x \in P^*$ with $\theta(x) \neq 0$.
    Let $I \in \cI(P^*)$ be an ideal of $P^*$ such that $x \notin I$ and $I \cup \{x\}$ is also an ideal.
    Then we have $|{\sup}_r(I \cup \{x\})| = |{\sup}_r(I)| + \theta(x)$,
    which implies $|{\sup}_r(I \cup \{x\})| \neq |{\sup}_r(I)|$.
    This contradicts the property~\eqref{cond:S}.

    (2). This assertion follows from
    ${\sup}_r(I) = \{ e \in E \mid e^+ \in I \cup \{\bot\} \not\ni e^- \}
    = \{ e \in E \mid e^+ \in I \cup \{\bot\} \} \setminus \{ e \in E \mid e^- \in I \cup \{\bot\} \}$.

    (3). Since $x$ is maximal in $I \cup \{ \bot \}$ and $e^+ \prec e^-$ by the assumption,
    for any $e \in E$ with $e^+ = x$
    we have $e^- \notin I \cup \{\bot\}$.
    Hence, we obtain ${\sup}_r(I) \supseteq \Theta^+(x)$ by~(2).
\end{proof}

Here we may assume that
$\Theta^+(x) \cup \Theta^-(x) \neq \emptyset$
for each $x \in P$,
since otherwise we can replace $P$ with the subposet of $P$ induced by $\{ x \in P \mid \Theta^+(x) \cup \Theta^-(x) \neq \emptyset \}$.
In the following, we prove that ${\sup}_r \colon \cI(P) \to \cS(\mathbf{I})$ is an isomorphism between $(\cS(\mathbf{I}), \leq)$ and $(\cI(P^*), \subseteq)$,
which implies that $(\cS(\mathbf{I}), \leq)$ forms a distributive lattice.

We first see that ${\sup}_r$ is bijective and that ${\sup}_r(I) \not\leq {\sup}_r(I')$ holds for each $I, I' \in \cI(P^*)$ with $I \not\subseteq I'$.
By the property~\eqref{cond:R}, ${\sup}_r$ is surjective.
Take any distinct $I, I' \in \cI(P^*)$ such that $I \not\subseteq I'$, i.e., $I \setminus I' \neq \emptyset$.
Take any maximal element $x$ in $I \setminus I'$; note that this $x$ is also maximal in $I$.
Then we have ${\sup}_r(I) \supseteq \Theta^+(x) \neq \emptyset$ by \Cref{cl:obs}~(3).
Furthermore, for each $e \in \Theta^+(x)$ and $e' \in {\sup}_r(I')$,
we have $e \not\leq e'$,
since otherwise we have $I' \ni e'^+ \succeq e^+ = x$, which contradicts that $x \notin I'$.
Hence, there is no bijection $\pi \colon {\sup}_r(I) \to {\sup}_r(I')$ satisfying that
$e \leq \pi(e)$ for each $e \in E$.
This means that ${\sup}_r(I) \not\leq {\sup}_r(I')$,
which also implies the injectivity of ${\sup}_r$.

We finally prove that for each $I, I' \in \cI(P^*)$ with $I \subseteq I'$,
we have ${\sup}_r(I) \leq {\sup}_r(I')$.
It suffices to consider the case where $I' = I \cup \{x\}$ for some $x \notin I$.
In this case,
we have ${\sup}_r(I') = ({\sup}_r(I)\setminus \Theta^-(x)) \cup \Theta^+(x)$ by \Cref{cl:obs}~(2).
Since $|\Theta^-(x)| = |\Theta^+(x)|$ by \Cref{cl:obs}~(1),
there is a bijection $\pi' \colon\Theta^-(x) \to \Theta^+(x)$.
For any $e \in \Theta^-(x)$ and $e' \in \Theta^+(x)$,
since $e^+ = x = e'^-$,
we have $e \leq e'$.
Thus, the map $\pi \colon {\sup}_r(I) \to {\sup}_r(I')$ defined by
\begin{align}
\pi(e) \coloneqq
\begin{cases}
    e & \text{if $e \notin \Theta^-(x)$},\\
    \pi'(e) & \text{if $e \in \Theta^-(x)$}
\end{cases}
\end{align}
for each $e\in {\sup}_r(I)$
is a bijection satisfying
$e \leq \pi(e)$ for all $e \in {\sup}_r(I)$.
Therefore, we obtain ${\sup}_r(I) \leq {\sup}_r(I')$.
\end{proof}

\subsection{The \texorpdfstring{$k$}{k}-diverse unweighted minimum \texorpdfstring{$s$-$t$}{s-t} cut problem}\label{subsec:k-diverse-mincut}
In this subsection,
we consider the $k$-diverse problem of \Mincut.
Here, \Mincut is the minimum $s$-$t$ cut problem for a digraph $G = (V, A)$ with unit arc capacity $c \colon A \to \Z_+$, i.e., $c(a) = 1$ for all $a \in A$.
In other words,
we are asked to find an $s$-$t$ cut $C \subseteq A$ with minimum size $|C|$.
Clearly, \Mincut has the property~\eqref{cond:S} by setting the ground set $E$ as the arc set $A$ and the integer $q$ as the size of minimum $s$-$t$ cuts.
Our aim is to construct,
for some ring family $\cR$,
a DAG $D$ representing the Birkhoff representation $(\Pi^*(\cR), \preceq)$,
the partition $\Pi(\cR)$,
and
a pre-reduction map $\hat{r}$ based on $\cR$,
which implies that \Mincut has the property~\eqref{cond:R} by \Cref{lem:pre-reduction}.

Let $G = (V, A)$ with $s, t \in V$ be a digraph that is an instance of \Mincut.
Then, the family $\cS(G) \subseteq 2^A$ of solutions of the instance $G$ is the set of all minimum $s$-$t$ cuts,
which is a subset of the family $\{ \Delta^+(X) \mid \text{$X \subseteq V$ with $s \in X \not\ni t$} \}$.
Let $\cR_{st}$ denote the family of vertex subsets $X \subseteq V$ with $s \in X \not\ni t$ such that its outgoing arcs form a minimum $s$-$t$ cut of $G$,
i.e.,
\begin{align}
    \cR_{st} \coloneqq \{ X \subseteq V \mid \text{$s \in X \not\ni t$, $\Delta^+(X)$ is a minimum $s$-$t$ cut} \}.
\end{align}
It is well-known that $\cR_{st}$ forms a ring family, i.e., for any $X, Y \in \cR_{st}$,
we have $X \cup Y, X \cap Y \in \cR_{st}$,
which directly follows from \Cref{lem:mincut-characterization} below.
Moreover,
since the unique minimal set $X_{\bot}$ in $\cR_{st}$ contains $s$ and the unique maximal set $X_{\top}$ in $\cR_{st}$ excludes $t$,
we have $X_{\bot} \neq \emptyset\neq \overline{X}_{\top} \coloneq V \setminus X_{\top}$.
We will construct a pre-reduction map based on this $\cR_{st}$.

We utilize the representation introduced by Picard and Queyranne~\cite{picard1980}, which is used for enumerating all minimum $s$-$t$ cuts of a digraph with arbitrary positive capacities.
In our case, we consider the unit capacity case, and use this representation to construct a DAG representing the Birkhoff representation $(\Pi^*(\cR_{st}), \preceq)$ of the distributive lattice $(\cR_{st}, \subseteq)$, as well as the partition $\Pi(\cR_{st})$ of $V$.

To begin, we briefly describe the representation proposed by Picard and Queyranne.
Let $f$ be an arbitrary maximum $s$-$t$ flow of $G$, where each arc has unit capacity.
Let $G_f$ denote the residual graph of $f$.
Then, the following characterization is known.

\begin{lemma}[\cite{picard1980}]\label{lem:mincut-characterization}
  An $s$-$t$ cut $\Delta^+(X)$ is a minimum $s$-$t$ cut in $G$ if and only if $X$ is a closed set under reachability in $G_f$, containing $s$ but not $t$.
\end{lemma}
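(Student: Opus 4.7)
The plan is to derive the lemma from the max-flow min-cut duality and a complementary slackness argument, exploiting the residual graph $G_f$ of the maximum $s$-$t$ flow $f$. Recall that by max-flow min-cut, the value $\partial f(s)$ equals the minimum $s$-$t$ cut value, which under the unit capacities $c \equiv 1$ equals $\min_{X}|\Delta^+(X)|$, where $X$ ranges over vertex subsets with $s \in X \not\ni t$. The starting identity I will use repeatedly is
\begin{align}
\partial f(s) = \sum_{a \in \Delta^+(X)} f(a) - \sum_{a \in \Delta^-(X)} f(a),
\end{align}
which follows by summing $\partial f(v)$ over $v \in X$ (using $s \in X \not\ni t$ and flow conservation on $V \setminus \{s,t\}$).

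For the ``$\Leftarrow$'' direction, assume $s \in X \not\ni t$ and $X$ is closed under reachability in $G_f$. By the definition of the residual graph, closedness means: (i) no arc $a=(u,v)\in A$ with $u\in X$, $v\notin X$ satisfies $f(a)<1$; and (ii) no arc $a=(v,u)\in A$ with $v\notin X$, $u\in X$ satisfies $f(a)>0$. Hence every arc in $\Delta^+(X)$ is saturated and every arc in $\Delta^-(X)$ carries zero flow, so the identity above yields $\partial f(s) = |\Delta^+(X)|$. Since $\partial f(s)$ equals the minimum cut value, $\Delta^+(X)$ is itself a minimum $s$-$t$ cut.

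For the ``$\Rightarrow$'' direction, assume $\Delta^+(X)$ is a minimum $s$-$t$ cut with $s\in X\not\ni t$. Then
\begin{align}
|\Delta^+(X)| = \partial f(s) = \sum_{a \in \Delta^+(X)} f(a) - \sum_{a \in \Delta^-(X)} f(a) \leq |\Delta^+(X)|,
\end{align}
using $0 \le f(a) \le 1$. Equality throughout forces $f(a)=1$ for every $a \in \Delta^+(X)$ and $f(a)=0$ for every $a \in \Delta^-(X)$. Translating back through the definition of $G_f$, this means no forward arc and no reverse arc of $G_f$ leaves $X$, so $X$ is closed under reachability in $G_f$.

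The main (and essentially only) conceptual point is keeping the residual arcs straight: forward residual arcs correspond to unsaturated arcs of $A$, and reverse residual arcs correspond to arcs of $A$ carrying positive flow. Once the two types of potential ``escape edges'' out of $X$ in $G_f$ are matched with saturation and zero-flow conditions on $\Delta^+(X)$ and $\Delta^-(X)$ respectively, both directions collapse to the tight-cut equality in the max-flow min-cut identity. No induction, augmenting-path argument, or case analysis on $|X|$ should be needed; the entire proof is a two-line manipulation of the boundary identity together with the definition of $G_f$.
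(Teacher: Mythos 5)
Your proof is correct. Note that the paper does not prove this lemma at all -- it is quoted from Picard and Queyranne~\cite{picard1980} as a known characterization -- so there is no in-paper argument to compare against; your write-up is the standard self-contained derivation. The two directions are handled properly: the identity $\partial f(s) = \sum_{a \in \Delta^+(X)} f(a) - \sum_{a \in \Delta^-(X)} f(a)$ for any $X$ with $s \in X \not\ni t$, combined with max-flow min-cut (for ``$\Rightarrow$''; weak duality already suffices for ``$\Leftarrow$''), forces exactly the complementary slackness conditions $f \equiv 1$ on $\Delta^+(X)$ and $f \equiv 0$ on $\Delta^-(X)$, and you translate these correctly into the absence of forward and reverse residual arcs leaving $X$, which is precisely the closedness-under-reachability condition as the paper later uses it (e.g.\ in the proof of \cref{lem:mincut-pre-reduction}). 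One cosmetic remark: in the ``$\Rightarrow$'' direction it is worth saying explicitly that the argument applies to \emph{any} vertex set $X$ with $s \in X \not\ni t$ whose out-arc set attains the minimum cut value, since a minimum cut could a priori be written as $\Delta^+(X)$ for more than one such $X$; your computation in fact covers this because it never uses anything about $X$ beyond $s \in X \not\ni t$ and $|\Delta^+(X)|$ being minimum.
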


Using this characterization, Picard and Queyranne constructed a DAG $D=(V',A')$ from $G$ through the following steps:
(1) Contract each strongly connected component in $G_f$ into a single vertex,
(2) remove the component containing $s$ along with all vertices reachable from $s$, and
(3) remove the component containing $t$ along with all vertices that can reach $t$.
For each vertex $v'\in V'$, let $R(v')$ denote the set of vertices in $G_f$ that are contracted into $v'$.
Similarly, let $R(s')$ and $R(t')$ denote the sets of vertices in $G_f$ that are removed in steps (2) and (3), respectively.
By \Cref{lem:mincut-characterization}, the DAG $D$ represents all minimum $s$-$t$ cuts in $G$.
Specifically, for any $X \subseteq V'$ that is closed under reachability in $D$, the set of outgoing arcs from $R(X) \coloneqq (\bigcup_{v'\in X} R(v')) \cup R(s')$ in $G$ forms a minimum $s$-$t$ cut in $G$.
Conversely, for any minimum $s$-$t$ cut $\Delta^+(X)$ in $G$, the set of components corresponding to $X \setminus R(s')$ is closed under reachability in $D$.

In fact, $D$ serves as the desired DAG that represents the Birkhoff representation $(\Pi^*(\cR_{st}), \preceq)$ of $\cR_{st}$.
Let $(V', \preceq)$ be a poset where reachability in $D$ defines the partial order i.e., for any $u,v \in V'$, $u \preceq v$ if and only if $u$ is reachable from $v$.
From the above discussion, the map $R \colon \cI(V') \to \cR_{st}$ is an isomorphism between $(\cI(V'), \subseteq)$ and $(\cR_{st}, \subseteq)$, because for all $X, X' \in \cI(V')$, we have $X \subseteq X'$ if and only if $R(X) \subseteq R(X')$.
Moreover, choose an arbitrary maximal chain $\emptyset = X_0 \subsetneq X_1 \subsetneq \cdots \subsetneq X_n = V'$ in $\cI(V')$.
Since $R(X_0) \subsetneq R(X_1) \subsetneq \cdots \subsetneq R(X_n)$ forms a maximal chain from $X_\bot$ to $X_\top$ in $\cR_{st}$, and 
$|X_i \setminus X_{i-1}| = 1$ for all $i \in [n]$, 
we obtain an isomorphism that maps each $v \in V'$ to $R(v)$, establishing the equivalence between $(V', \preceq)$ and $(\Pi^*(\cR_{st}), \preceq)$.

Thus, as all strongly connected components in $G$ can be found in linear time by~\cite{sharir1981}, the partition $\Pi(\cR_{st})$ and a DAG representing $(\Pi^*(\cR_{st}), \preceq)$ can be constructed from $G_f$ in $O(|V| + |A|)$ time.
Furthermore,
the number of vertices and arcs representing $(\Pi^*(\cR_{st}), \preceq)$ are at most $|V|$ and $|A|$,
respectively.

We then define a map $\hat{r} \colon A \to V^2$ by
\begin{align}\label{eq:mincut-hatr}
    \hat{r}(a) \coloneqq
    \begin{cases}
        a & \text{if $f(a) = 1$},\\
        (t, t) & \text{if $f(a) = 0$}.
    \end{cases}
\end{align}
We confirm that $\hat{r}$ is a pre-reduction map.
This together with \Cref{prop:pre-reduction} and the above argument implies that $T_P(G) + T_r(G) = O(|V| + |A|)$.
\begin{lemma}\label{lem:mincut-pre-reduction}
    The map $\hat{r} \colon A \to V^2$ defined as~\eqref{eq:mincut-hatr} is a pre-reduction map.
\end{lemma}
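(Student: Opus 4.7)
The plan is to verify the two defining conditions of a pre-reduction map directly, with both verifications resting on a single structural observation about maximum flows and minimum cuts in the unit-capacity setting.

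The key fact I will extract first is: for the fixed maximum $s$-$t$ flow $f$ and any $X \in \cR_{st}$, every arc in $\Delta^+(X)$ carries $f$-value $1$ and every arc in $\Delta^-(X)$ carries $f$-value $0$. This follows from max-flow min-cut duality: since $|f| = \partial f(s) = \sum_{a \in \Delta^+(X)} f(a) - \sum_{a \in \Delta^-(X)} f(a)$ equals the capacity $|\Delta^+(X)|$ of the minimum cut $\Delta^+(X)$, and each $f(a) \in \{0,1\}$, equality forces $f \equiv 1$ on $\Delta^+(X)$ and $f \equiv 0$ on $\Delta^-(X)$.

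Using this, the first condition ($\hat{a}^- \in X \Rightarrow \hat{a}^+ \in X$ for all $X \in \cR_{st}$) splits by cases. If $f(a) = 0$, then $\hat{a}^- = t$, and since $t \notin X$ for every $X \in \cR_{st}$, the implication is vacuous. If $f(a) = 1$ and $a = (u,v)$, then $\hat{a}^- = v$ and $\hat{a}^+ = u$; assuming $v \in X$ and $u \notin X$ would place $a$ in $\Delta^-(X)$, forcing $f(a) = 0$ by the key fact, a contradiction.

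For the second condition, I will show that ${\sup}_{\hat{r}}(X) = \Delta^+(X)$ for each $X \in \cR_{st}$. An arc $a$ with $f(a) = 0$ satisfies $\hat{a}^- = t \notin X$, so it is excluded from ${\sup}_{\hat{r}}(X)$; an arc $a = (u,v)$ with $f(a) = 1$ lies in ${\sup}_{\hat{r}}(X)$ exactly when $u \in X \not\ni v$, i.e., when $a \in \Delta^+(X)$. Combined with the key fact that every arc of $\Delta^+(X)$ has $f$-value $1$, this yields ${\sup}_{\hat{r}}(X) = \Delta^+(X) \in \cS(G)$. Conversely, any minimum $s$-$t$ cut can be written as $\Delta^+(X)$ for some vertex subset $X$ with $s \in X \not\ni t$ (this is the standard fact recalled in \Cref{subsec:maxflow}), and by definition such an $X$ lies in $\cR_{st}$, giving $\cS(G) \subseteq \{{\sup}_{\hat{r}}(X) \mid X \in \cR_{st}\}$.

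There is essentially no hard step here; the only conceptual point is the max-flow min-cut consequence that pins down $f$ on $\Delta^+(X) \cup \Delta^-(X)$, and everything else is direct unpacking of definitions. The collapsing of all $f(a) = 0$ arcs to $(t,t)$ is the design choice that makes condition~1 vacuous in the irrelevant case and simultaneously removes those arcs from every ${\sup}_{\hat{r}}(X)$ in condition~2.
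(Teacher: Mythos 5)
Your proof is correct, and it establishes exactly the two conditions the paper verifies, but via a slightly different justification of the key facts. The paper derives condition (i) from \Cref{lem:mincut-characterization}: since $f(a)=1$ puts the reverse arc $(v,u)$ into the residual graph $G_f$, closedness of $X$ under reachability in $G_f$ gives $v\in X\Rightarrow u\in X$; and for (ii) it cites the saturation of $\Delta^+(X)$ as a consequence of the same lemma. You instead extract both facts at once from the max-flow/min-cut value equality across the cut (every arc of $\Delta^+(X)$ is saturated, every arc of $\Delta^-(X)$ carries zero flow), which makes the argument self-contained in the unit-capacity setting without invoking residual reachability; the paper's route, on the other hand, reuses the Picard--Queyranne machinery it has already set up for building the Birkhoff representation, so nothing extra needs to be proved there. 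One small slip: for an arc with $f(a)=0$ you exclude it from ${\sup}_{\hat r}(X)$ on the grounds that $\hat a^-=t\notin X$, but membership in ${\sup}_{\hat r}(X)$ requires $\hat a^+\in X$ \emph{and} $\hat a^-\notin X$, so $\hat a^-\notin X$ alone does not exclude anything; the correct (and immediate) reason is that $\hat a^+=t\notin X$. Since $\hat r(a)=(t,t)$ makes the two coordinates coincide, this is only a wording error, not a gap.
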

\begin{proof}
We denote $(\hat{a}^+, \hat{a}^-)=\hat{r}(a)$ for each $a\in A$.
It suffices to see that
(i) for each $X \in \cR_{st}$ and arc $a \in A$,
if $\hat{a}^- \in X$ then $\hat{a}^+ \in X$,
and (ii) $\cS(G) = \{ \sup_{\hat{r}}(X) \mid X \in \cR_{st}\}$.

To see (i), take any $X \in \cR_{st}$ and arc $a = (u,v) \in A$ with $\hat{a}^- \in X$.
Since $t \notin X_{\top}$ and $X \subseteq X_{\top}$,
we have $\hat{a}^- \neq t$,
which implies that $(\hat{a}^+, \hat{a}^-) = a = (u, v)$ and $f(a) = 1$.
Therefore, the reverse arc $(v, u)$ of $a$ exists in $G_f$.
Since $X$ contains $\hat{a}^- = v$ and $X$ is closed under the reachability in $G_f$ by \Cref{lem:mincut-characterization},
we obtain $\hat{a}^+ = u \in X$.

(ii) follows from the definition of $\cS(G)$, namely $\cS(G) = \{ \Delta^+(X) \mid X \in \cR_{st}
 \}$, and a basic fact, which is also directly implied by~\Cref{lem:mincut-characterization}, that for any $X \in \cR_{st}$, all outgoing arcs $a \in \Delta^+(X)$ from $X$ satisfy $f(a) = 1$.
\end{proof}

Thus, by \Cref{thm:optimal-potential-time-complexity},
we obtain the following, which implies \Cref{thm:mincut-stablematching}~(1).

\begin{theorem}
    The $k$-diverse problem with respect to $d_\varphi$ of \Mincut can be solved in 
    $O(n + (|B_k(\varphi)|m)^{1+o(1)} \log (\varphi(k)) \log k + kq)$ time,
    where $n$ denotes the number of vertices in the input digraph $G$,
    $m$ the number of arcs,
    and
    $q$ the size of any minimum $s$-$t$ cut in $G$.
    In particular,
    \SumkDiverse and \CovkDiverse of \Mincut can be solved in
    $O(n + (km)^{1 + o(1)})$ time
    and in
    $O(n + m^{1 + o(1)} \log^2 k + kq)$ time,
    respectively.
\end{theorem}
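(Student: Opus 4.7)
The plan is simply to apply \Cref{thm:optimal-potential-time-complexity} with the parameters specific to \Mincut and then specialize to each of the two diversity measures. First, I would record that \Mincut has property~\eqref{cond:S} with $E = A$ and $q$ equal to the size of a minimum $s$-$t$ cut, and that property~\eqref{cond:R} follows by combining \Cref{lem:mincut-pre-reduction} with \Cref{prop:pre-reduction}, the poset being the Picard--Queyranne representation $(\Pi^*(\cR_{st}),\preceq)$ with DAG $D$.

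Next I would bound the quantities entering the generic estimate. Given a maximum $s$-$t$ flow $f$ of $G$, both $D$ and the pre-reduction map are obtained in $O(n+m)$ time by contracting the strongly connected components of $G_f$ via~\cite{sharir1981} and reading $\hat{r}$ off $f$, while computing $f$ itself takes $O(m^{1+o(1)})$ time using~\cite{brand2023}. Hence $T_P(G) + T_r(G) = O(n + m^{1+o(1)})$. Since $D$ has at most $n$ vertices and $m$ arcs, $|A_P| \leq m$, and $|E| = m$. Substituting into \Cref{thm:optimal-potential-time-complexity} yields $O\bigl(n + m^{1+o(1)} + (m + |B_k(\varphi)|m)^{1+o(1)} \log(\varphi(k))\log k + kq\bigr)$, and the bare $m^{1+o(1)}$ summand is absorbed by the $(|B_k(\varphi)|m)^{1+o(1)}$ term (since $|B_k(\varphi)| \geq 2$), giving the general bound in the statement.

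The two specializations are then immediate. For \SumkDiverse, $\varphi(x) = x^2$ satisfies $\varphi(x+1)+\varphi(x-1)-2\varphi(x)=2>0$ for every $x$, so every integer is a breakpoint and $|B_k(\varphi)| = k+1$ with $\log\varphi(k) = O(\log k)$, yielding $O(n + (km)^{1+o(1)})$. For \CovkDiverse, $\varphi(x)=\max\{0,x-1\}$ has a single breakpoint at $x=1$, so $|B_k(\varphi)| = 3$ with $\log\varphi(k) = O(\log k)$, yielding $O(n + m^{1+o(1)}\log^2 k + kq)$. There is no real obstacle beyond this bookkeeping; the only point worth pausing on is that the preliminary maximum flow used to set up the Picard--Queyranne representation fits within the $(|B_k(\varphi)|m)^{1+o(1)}$ term and does not contribute a separate dominant cost, which is clear since $m^{1+o(1)} \leq (|B_k(\varphi)|m)^{1+o(1)}$.
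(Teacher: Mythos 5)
Your proposal is correct and follows essentially the same route as the paper: it instantiates \Cref{thm:optimal-potential-time-complexity} with $|E| = m$, $|A_P| = O(m)$, and $T_P(G)+T_r(G)$ linear in the graph size (via the Picard--Queyranne construction and \Cref{lem:mincut-pre-reduction} with \Cref{prop:pre-reduction}), then specializes $|B_k(\varphi)|$ and $\varphi(k)$ for the two measures. Your extra remark that the preliminary maximum-flow computation is absorbed by the $(|B_k(\varphi)|m)^{1+o(1)}$ term is a harmless refinement of the paper's bookkeeping.
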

\begin{proof}
Clearly, we have $|E| = m$.
Since
$T_P(G) + T_r(G) = O(n + m)$
and $|A_P| = O(m)$ hold as we have seen above,
we obtain
$O(n + (|B_k(\varphi)|m)^{1+o(1)} \log (\varphi(k)) \log k + kq)$ time
for the $k$-diverse with respect to $d_{\varphi}$,
$O(n + (km)^{1 + o(1)})$ time
for \SumkDiverse,
and $O(n + m^{1 + o(1)} \log^2 k + kq)$ time
for \CovkDiverse
by \Cref{thm:optimal-potential-time-complexity}.
\end{proof}

\subsection{The \texorpdfstring{$k$}{k}-diverse stable matching problem}\label{subsec:k-diverse-stable-matching}
In this subsection,
we consider the $k$-diverse problem of \StableMatching.
Let us first introduce the problem \StableMatching (see e.g.,~\cite{gusfield1989} for details).
Let $U$ and $V$ be disjoint finite sets with the same size $n$
endowed with total orders $\leq_u$ on $V$ ($u \in U$) and $\leq_v$ on $U$ ($v \in V$).
Intuitively, the total order $\leq_u$ (resp.\ $\leq_v$) represents the preference of $u$ on $V$ (resp.\ $v$ on $U$);
$v <_u v'$ means that ``$u$ prefer $v$ to $v'$.''
A subset $M = \{(u_1, v_1), (u_2, v_2), \dots, (u_n, v_n)\} \subseteq U \times V$ is called a \emph{matching}
if $M$ provides the one-to-one correspondence between $U$ and $V$,
i.e., all $u_i, v_j$ are different, $U = \{ u_1, u_2, \dots, u_n \}$, and $V = \{ v_1, v_2, \dots, v_n \}$.
For a matching $M$,
we denote by $p_M(u)$ (resp.~$p_M(v)$) the partner of $u \in U$ (resp.~$v \in V$) in $M$,
i.e.,
$(u, p_M(u)) \in M$ (resp. $(p_M(v), v) \in M$).
A matching $M$ is said to be \emph{stable} in $(U, V; (\leq_u)_{u \in U}, (\leq_v)_{v \in V})$
if there is no pair $(u,v) \in U \times V$ such that $v <_u p_M(u)$ and $u <_v p_M(v)$.
In the problem \StableMatching,
we are given a tuple $(U, V; (\leq_u)_{u \in U}, (\leq_v)_{v \in V})$ of finite sets $U, V$ with the same size $n$ and the total orders $(\leq_u)_{u \in U}$ on $V$ and $(\leq_v)_{v \in V}$ on $U$,
and asked to find a stable matching $M$ of $(U, V; (\leq_u)_{u \in U}, (\leq_v)_{v \in V})$.
A stable matching always exists and can be found in $O(n^2)$ time by the \emph{Gale--Shapley algorithm}~\cite{gale1962}.

Clearly, \StableMatching has the property~\eqref{cond:S}, in which $q = |U| = |V|$.
Similarly to \Cref{subsec:k-diverse-mincut},
our aim is to construct,
for some ring family $\cR$,
a DAG representing the Birkhoff representation $(\Pi^*(\cR), \preceq)$,
the partition $\Pi(\cR)$,
and
a pre-reduction map $\hat{r}$ based on $\cR$.

Let $\mathbf{I} \coloneqq (U, V; (\leq_u)_{u \in U}, (\leq_v)_{v \in V})$ be an instance of \StableMatching, and let $n \coloneqq |U| = |V|$.
Then, the family $\cS(\mathbf{I})$ of solutions of the instance $\mathbf{I}$ is the set of all stable matchings of $\mathbf{I}$.
It is known~\cite[Theorem~1.3.2]{gusfield1989} that $\cS(\mathbf{I})$ forms a distributive lattice with the partial order $\preceq$ defined by setting
$M \preceq M'$ if and only if $p_M(u) \leq_u p_{M'}(u)$ for all $u\in U$.
For each stable matching $M \in \cS(\mathbf{I})$, the \emph{P-set} of $M$, denoted by $P(M)$, is the set of pairs $(u,v)$ such that $v$ is at least as preferred as $p_M(u)$ by $u \in U$, i.e., 
$P(M) \coloneqq \{ (u, v) \in U \times V \mid v \leq_u p_M(u) \}$.
Let $\cR_{\mathbf{I}}$ denote the family of all P-sets for $\mathbf{I}$,
i.e.,
\begin{align}
    \cR_{\mathbf{I}} \coloneqq \{ P(M) \mid \text{$M$ is a stable matching} \}.
\end{align}
The map that sends a stable matching to its P-set is an isomorphism between $(S(\mathbf{I}), \preceq)$ and $(\cR_{\mathbf{I}}, \subseteq)$.
Hence, $\cR_{\mathbf{I}}$ forms a ring family, which serves the base of our pre-reduction map.
We note that the unique minimal set $X_\bot$ of $\cR_{\mathbf{I}}$ is nonempty.
Since the unique maximal set $X_{\top}$ could be the whole set $U \times V$,
we add a new element $\top$ to $V$.
In the following,
we regard $U \times (V \cup \{\top\})$ as the ground set so that $X_{\top}$ is a proper subset of the ground set.

We can construct the Birkhoff representation $(\Pi^*(\cR_{\mathbf{I}}), \preceq)$ of the distributive lattice $(\cR_{\mathbf{I}}, \subseteq)$ following the work of Gusfield and Irving~\cite[Chapter~3]{gusfield1989}.
For each stable matching $M$ and each $u\in U$, let $s_M(u)$ denote the most preferable element $v \in V$ for $u$ such that $u <_v p_M(v)$, if such a $v$ exists.
Gusfield and Irving introduced the notion of a rotation, which is an ordered list $\rho = ((u_0,v_0), (u_1,v_1), \ldots, (u_{c-1},v_{c-1}))$ of pairs in some stable matching $M$, satisfying $s_M(u_i) = v_{i+1}$ for all $i \in [0,c-1]$ (where $i+1$ is taken modulo $c$).
Let $\Lambda_\mathbf{I}$ be the set of all rotations of $\mathbf{I}$.
It is shown that the map $d$ defined by
\begin{equation}
    d(\rho) = \{ (u_i,v) \in U \times V \mid i \in [0,c-1], v_i <_{u_i} v \leq_{u_i} v_{i+1} \}
\end{equation}
is a bijection between $\Lambda_\mathbf{I}$ and $\Pi^*(\cR_\mathbf{I})$.
A partial order on $\Lambda_\mathbf{I}$ is induced by $\Pi^*(\cR_\mathbf{I})$ through $d$.
Gusfield and Irving present an $O(n^2)$-time algorithm that constructs $\Lambda_\mathbf{I}$ along with a DAG representing the poset over $\Lambda_\mathbf{I}$ having $O(n^2)$ arcs~\cite[Lemma~3.3.2 and Corollary~3.3.1]{gusfield1989}.
In particular, their algorithm computes the unique minimal and maximal stable matchings.

Since $\Pi^*(\cR_\mathbf{I})$ is a partition of $X_\top \setminus X_\bot$, we can construct $\Pi^*(\cR_\mathbf{I})$ from $\Lambda_\mathbf{I}$ in $O(n^2)$ time using $d$.
Clearly, $X_\bot$ is the P-set of the unique minimal stable matching.
Therefore, we can construct the partition $\Pi(\cR_\mathbf{I})$ and a DAG with $O(n^2)$ arcs representing $(\Pi^*(\cR_\mathbf{I}), \preceq)$ in $O(n^2)$ time.

We then construct a pre-reduction map.
For each $u \in U$ and $v \in V$,
we denote by $v_u^-$ the element in $V$ that is the cover of $v$ with respect to $\leq_u$,
i.e., $v <_u v_u^-$ and there is no element $v'$ in $V$ with $v <_u v' <_u v_u^-$.
If $v$ is the maximum element with respect to $\leq_u$, then we define $v_u^- \coloneqq \top$.
We define a map $\hat{r} \colon U \times V \to \left(U \times (V \cup \{ \top \})\right)^2$
by
\begin{align}\label{eq:stablematching-hatr}
    \hat{r}(u, v) \coloneqq \left((u, v), (u, v_u^-)\right).
\end{align}
We show that this is a pre-reduction map.
This together with \Cref{prop:pre-reduction} and the above argument implies that we have $T_P(\mathbf{I}) + T_r(\mathbf{I}) = O(n^2)$.

\begin{lemma}\label{lem:stablematching-pre-reduction}
    The map $\hat{r} \colon U \times V \to \left(U \times (V \cup \{ \top \})\right)^2$ defined as~\eqref{eq:stablematching-hatr} is a pre-reduction map.
\end{lemma}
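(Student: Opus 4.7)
The plan is to verify both defining conditions of a pre-reduction map directly from the definition of $\hat{r}$ together with the characterization of elements of $\cR_\mathbf{I}$ as P-sets of stable matchings.

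First I would check the down-closure condition: for every $X \in \cR_\mathbf{I}$ and every $(u,v) \in U \times V$, if $\hat{r}(u,v)^- = (u, v_u^-)$ lies in $X$, then $\hat{r}(u,v)^+ = (u,v)$ lies in $X$ as well. Write $X = P(M)$ for some stable matching $M$. Since every P-set is contained in $U \times V$, the membership $(u, v_u^-) \in X$ forces $v_u^- \in V$, and the definition of the P-set gives $v_u^- \leq_u p_M(u)$. Combining this with the cover relation $v <_u v_u^-$ yields $v <_u v_u^- \leq_u p_M(u)$, hence $(u,v) \in P(M) = X$. In the remaining case $v_u^- = \top$, the element $(u, v_u^-)$ cannot belong to any P-set (since P-sets are subsets of $U \times V$), so the implication is vacuous.

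Next I would verify that $\cS(\mathbf{I}) = \{{\sup}_{\hat{r}}(X) \mid X \in \cR_\mathbf{I}\}$ by showing the key identity ${\sup}_{\hat{r}}(P(M)) = M$ for every stable matching $M$. Unfolding the definitions, an element $(u,v)$ belongs to ${\sup}_{\hat{r}}(P(M))$ if and only if $(u,v) \in P(M)$ and $(u, v_u^-) \notin P(M)$. The first condition says $v \leq_u p_M(u)$, and the second says either $v_u^- = \top$ or $p_M(u) <_u v_u^-$ (note that $(u,\top)$ is never in a P-set, so the second condition automatically holds when $v_u^- = \top$). Because $v_u^-$ is by definition the immediate successor of $v$ in $\leq_u$, the sandwich $v \leq_u p_M(u) <_u v_u^-$ (or $v \leq_u p_M(u)$ with $v$ maximal in $\leq_u$) forces $p_M(u) = v$, i.e., $(u,v) \in M$. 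Conversely, if $(u,v) \in M$, then $p_M(u) = v$, giving $(u,v) \in P(M)$ and $(u, v_u^-) \notin P(M)$, so $(u,v) \in {\sup}_{\hat{r}}(P(M))$.

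Having established ${\sup}_{\hat{r}}(P(M)) = M$, both inclusions of the desired equality follow: each stable matching $M$ is recovered as ${\sup}_{\hat{r}}(P(M))$ with $P(M) \in \cR_\mathbf{I}$, and conversely every $X \in \cR_\mathbf{I}$ has the form $P(M)$ for some stable matching $M$, so ${\sup}_{\hat{r}}(X) = M \in \cS(\mathbf{I})$. There is no serious obstacle here; the argument is essentially a careful unpacking of the definitions, and the only subtlety is the bookkeeping of the adjoined element $\top$, which is handled uniformly by observing that $(u, \top)$ never belongs to a P-set.
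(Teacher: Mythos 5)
Your proposal is correct and follows essentially the same route as the paper's proof: the down-closure condition is verified via the cover relation $v <_u v_u^-$ inside $P(M)$, and the second condition via the observation that $(u,v) \in P(M) \not\ni (u,v_u^-)$ holds exactly when $v = p_M(u)$, i.e., ${\sup}_{\hat{r}}(P(M)) = M$. Your explicit handling of the case $v_u^- = \top$ is just a slightly more spelled-out version of the paper's inequality $v_u^- \leq_u p_M(u) <_u \top$.
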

\begin{proof}
It suffices to see that
(i) for each $X \in \cR_{\mathbf{I}}$ and pair $(u,v) \in U \times V$,
if $(u, v_u^-) \in X$ then $(u,v) \in X$
and (ii) $\cS(\mathbf{I}) = \{ \sup_{\hat{r}}(X) \mid X \in \cR_{\mathbf{I}}\}$.

(i). Take any $X \in \cR_{\mathbf{I}}$ and $(u, v) \in U \times V$ with $(u, v_u^-) \in X$.
Let $M \in \cS(\mathbf{I})$ be a stable matching such that its P-set $P(M)$ is $X$.
Since $(u, v_u^-) \in X$,
we have $v_u^- \leq_u p_M(u) <_u \top$.
By the above and $v <_u v_u^-$,
we obtain $v <_u p_M(u)$,
which implies $(u, v) \in P(M) = X$.

(ii).
For each $X = P(M) \in \cR_{\mathbf{I}}$ and $u \in U$,
the pair $(u, v) \in U \times V$ belongs to $X$ if and only if $v \leq_u p_M(u)$
by the definition of $P(M)$.
Since $v_u^-$ is the cover of $v$ for $(u,v) \in U \times V$,
we have that
$(u,v) \in X \not\ni (u, v_u^-)$ if and only if $v = p_M(u)$.
Hence, we obtain
$\{ {\sup}_{\hat{r}}(X) \mid X \in \cR_{\mathbf{I}} \} = \left\{\{ (u,v) \in U \times V \mid (u,v) \in X \not\ni (u, v_u^-) \} \mid X \in \cR_{\mathbf{I}} \right\}
= \left\{\{ M \mid X = P(M) \} \mid X \in \cR_{\mathbf{I}} \right\}
= \cS(\mathbf{I})$.
\end{proof}

Thus, by \Cref{thm:optimal-potential-time-complexity}, we obtain the following, which implies \Cref{thm:mincut-stablematching}~(2).

\begin{theorem}
    The $k$-diverse problem with respect to $d_\varphi$ of \StableMatching can be solved
    in $O((|B_k(\varphi)|n^2)^{1 + o(1)}\log (\varphi(k)) \log k + kn)$ time.
    In particular,
    \SumkDiverse of \StableMatching and \CovkDiverse of \StableMatching can be solved
    in $O((kn^2)^{1 + o(1)})$ time
    and
    in $O(n^{2 + o(1)}\log^2 k + kn)$ time,
    respectively.
\end{theorem}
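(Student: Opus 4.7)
The plan is to obtain this theorem as a direct instantiation of the general framework \Cref{thm:optimal-potential-time-complexity}, specialized to \StableMatching using the reduction map constructed in this subsection. All that remains is to identify the problem-specific parameters $|E|$, $|A_P|$, $T_P(\mathbf{I})$, $T_r(\mathbf{I})$, and $q$, and then plug them into the stated running-time bound.

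First, I would record the parameter values for an instance $\mathbf{I} = (U, V; (\leq_u)_{u\in U}, (\leq_v)_{v\in V})$ with $|U|=|V|=n$. The ground set is $E = U \times V$, so $|E| = n^2$. By the property~\eqref{cond:S} applied to \StableMatching, each stable matching has size $q = n$. The discussion preceding \Cref{lem:stablematching-pre-reduction}, using the rotation-based construction of Gusfield and Irving, shows that the partition $\Pi(\cR_{\mathbf{I}})$ together with a DAG that represents the Birkhoff representation $(\Pi^*(\cR_{\mathbf{I}}), \preceq)$ and has $O(n^2)$ arcs can be produced in $O(n^2)$ time; thus $|A_P| = O(n^2)$ and $T_P(\mathbf{I}) = O(n^2)$. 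Finally, \Cref{lem:stablematching-pre-reduction} together with \Cref{prop:pre-reduction} shows that once $\Pi(\cR_{\mathbf{I}})$ is available, the reduction map $r$ in~\eqref{cond:R} can be obtained in constant time per element, giving $T_r(\mathbf{I}) = O(n^2)$.

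Plugging these values into \Cref{thm:optimal-potential-time-complexity} yields
\begin{align}
O\bigl(T_P(\mathbf{I}) + T_r(\mathbf{I}) + (|A_P| + |B_k(\varphi)||E|)^{1+o(1)}\log(\varphi(k))\log k + kq\bigr) = O\bigl((|B_k(\varphi)|n^2)^{1+o(1)}\log(\varphi(k))\log k + kn\bigr),
\end{align}
which is the first claim of the theorem. For the two specializations, I would just recall the values of $|B_k(\varphi)|$ and $\varphi(k)$ already computed in the proof of \Cref{thm:optimal-potential-time-complexity}: $\varphi_{\mathrm{sum}}(x) = x^2$ gives $|B_k(\varphi_{\mathrm{sum}})| = k+1$ and $\varphi_{\mathrm{sum}}(k) = k^2$, hence a running-time of $O((kn^2)^{1+o(1)})$; while $\varphi_{\mathrm{cov}}(x) = \max\{0,x-1\}$ gives $|B_k(\varphi_{\mathrm{cov}})| = O(1)$ and $\varphi_{\mathrm{cov}}(k) = k-1$, hence $O(n^{2+o(1)}\log^2 k + kn)$.

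There is no real obstacle in this proof, since the structural work has already been carried out in \Cref{lem:stablematching-pre-reduction} and the cited Gusfield--Irving machinery; the only thing to be careful about is to correctly absorb the $T_P(\mathbf{I}) + T_r(\mathbf{I}) = O(n^2)$ and $kq = kn$ additive terms into the dominant $(|B_k(\varphi)|n^2)^{1+o(1)}$ factor (which they clearly are in both specialized cases), and to make sure that the $|P| = O(|A_P| + |E|)$ assumption used in \Cref{sec:reduction} is satisfied, which holds here since $|\Pi(\cR_{\mathbf{I}})| = O(n^2)$.
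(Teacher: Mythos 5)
Your proposal is correct and follows essentially the same route as the paper: both instantiate \Cref{thm:optimal-potential-time-complexity} with $|E| = n^2$, $q = n$, $|A_P| = O(n^2)$, and $T_P(\mathbf{I}) + T_r(\mathbf{I}) = O(n^2)$ obtained from the Gusfield--Irving rotation construction together with \Cref{lem:stablematching-pre-reduction} and \Cref{prop:pre-reduction}, then specialize $|B_k(\varphi)|$ and $\varphi(k)$ for $\varphi_{\mathrm{sum}}$ and $\varphi_{\mathrm{cov}}$ exactly as in the proof of \Cref{thm:optimal-potential-time-complexity}. Your extra checks (absorbing the additive $O(n^2)$ and $kn$ terms, and verifying the $|P| = O(|A_P| + |E|)$ assumption) are consistent with, and slightly more explicit than, the paper's argument.
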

\begin{proof}
    Clearly, we have $|E| = n^2$.
    Since we can construct the partition $\Pi(\cR_\mathbf{I})$ and a DAG with $O(n^2)$ arcs representing $(\Pi^*(\cR_\mathbf{I}), \preceq)$ in $O(n^2)$ time,
    we have $T_P(\mathbf{I}) + T_r(\mathbf{I}) = O(n^2)$ by \Cref{lem:pre-reduction}
    and
    $|A_P| = O(n^2)$.
    Thus, by \Cref{thm:optimal-potential-time-complexity},
    we obtain
    $O((|B_k(\varphi)|n^2)^{1 + o(1)}\log (\varphi(k)) \log k + kn)$ time
    for the $k$-diverse problem with respect to $d_\varphi$,
    $O((kn^2)^{1 + o(1)})$ time for \SumkDiverse,
    and $O(n^{2 + o(1)}\log^2 k + kn)$ time for \CovkDiverse.
\end{proof}

\subsection{The \texorpdfstring{$k$}{k}-diverse problem on the product of total orders}\label{subsec:k-diverse-totalorder}
Very recently, De Berg, Mart\'{i}nez, and Spieksma~\cite{deBerg2025} introduce a framework for solving the $k$-diverse problem on the product of total orders,
which leads to the polynomial-time solvability of \SumkDiverse/\CovkDiverse of \Mincut/\StableMatching.
In this subsection,
we briefly introduce their framework
and see that our framework can capture theirs for \SumkDiverse and \CovkDiverse.
In the following, we do \emph{not} assume that {\sc Prob} has the properties~\eqref{cond:S} and~\eqref{cond:R}.

We first introduce some terminology.
For a distributive lattice $(\cL, \preceq)$,
an element $x \in \cL$ is said to be \emph{join-irreducible}
if $x \neq y \vee z$ for any $y,z \in \cL \setminus \{x\}$.
Let $\cL_{\textup{ir}}$ denote the set of join-irreducible elements in $\cL$.
Then,
it is known~\cite{birkhoff1937} that
the subposet $(\cL_{\textup{ir}}, \preceq)$ of $\cL$ induced by $\cL_{\textup{ir}}$
forms a Birkhoff representation of $\cL$;
the map $I \mapsto \bigvee_{x \in I} x$
is an isomorphism from $(\cI(\cL_{\textup{ir}}), \subseteq)$ to $(\cL, \preceq)$.  
We refer to this Birkhoff representation $(\cL_{\textup{ir}}, \preceq)$ as
the \emph{join-irreducible representation} of $\cL$.
For total orders $(E_1, \leq_1), (E_2, \leq_2), \dots, (E_q, \leq_q)$,
their \emph{product} $\cE = (\cE, \leq)$ is the poset such that
its ground set $\cE$ is the product $E_1 \times E_2 \times \dots \times E_q$ of $E_1, E_2, \dots, E_q$
and the partial order $\leq$ is defined by setting 
$(e_1, e_2, \dots, e_q) \leq (e_1', e_2', \dots, e_q')$ if and only if $e_i \leq_i e_i'$ for all $i \in [n]$.
Actually, this $\cE$ forms a lattice; the meet $(e_1, e_2, \dots, e_q) \wedge (e_1', e_2', \dots, e_q')$ is given by $(\min \{ e_1, e_1' \}, \min \{ e_2, e_2' \}, \dots, \min \{ e_q, e_q' \})$
and the join $(e_1, e_2, \dots, e_q) \vee (e_1', e_2', \dots, e_q')$ is given by $(\max \{ e_1, e_1' \}, \max \{ e_2, e_2' \}, \dots, \max \{ e_q, e_q' \})$,
where
\begin{align}
    \min \{ e_i, e_i' \} \coloneqq
    \begin{cases}
        e_i & \text{if $e_i \leq_i e_i'$},\\
        e_i' & \text{if $e_i' <_i e_i$},
    \end{cases}
    \qquad
    \max \{ e_i, e_i' \} \coloneqq
    \begin{cases}
        e_i' & \text{if $e_i \leq_i e_i'$},\\
        e_i & \text{if $e_i' <_i e_i$}
    \end{cases}
\end{align}
for each $i \in [n]$.
We say that
a subset $\cL \subseteq \cE$ is a \emph{sublattice}
if $\mathcal{L}$ is closed under the meet $\wedge$ and join $\vee$,
i.e., $x, y \in \cL$ implies $x \wedge y, x \vee y \in \cL$.
We can observe that a sublattice of the product of total orders is distributive (see the paragraph after \Cref{thm:k-diverse-totalorder}).

We are ready to introduce the framework of De Berg, Mart\'{i}nez, and Spieksma in~\cite{deBerg2025}.
They impose the following property on a combinatorial problem {\sc Prob}:
\begin{enumerate}[label=(\condT),ref=\condT]
\item \label{cond:T} For any instance $\mathbf{I}$ of {\sc Prob}, there are $q$ total orders $(E_1, \leq_1), (E_2, \leq_2), \dots, (E_q, \leq_q)$ such that the family $\cS(\mathbf{I})$ of solutions of $\mathbf{I}$ is a sublattice of the product of those total orders.
\end{enumerate}
Then they show that,
if {\sc Prob} has the property~\eqref{cond:T}
and we can construct (a DAG representing) the join-irreducible representation $(\cS(\mathbf{I})_{\textup{ir}}, \leq)$ of $\cS(\mathbf{I})$ in polynomial time,
then we can solve \SumkDiverse/\CovkDiverse of {\sc Prob} in polynomial time by using an algorithm for SFM over the distributive lattice $\cI(\cS(\mathbf{I})_{\textup{ir}})$.
The problem \StableMatching has the property~\eqref{cond:T}.
Indeed, for an instance $\mathbf{I} \coloneqq (U, V; (\leq_u)_{u \in U}, (\leq_v)_{v \in V})$ of \StableMatching,
we define $E_u \coloneqq \{ (u,v) \mid v \in V \}$ for each $u\in U$ and
naturally extend the total order $\leq_u$ on $V$ to that on $E_u$ by setting $(u,v) \leq_u (u,v')$ if and only if $v \leq_u v'$.
Then, $\cS(\mathbf{I})$ forms a sublattice of the product of $(E_u, \leq_u)$ for all $u \in U$.
Similarly, \Mincut also has~\eqref{cond:T} by introducing the \emph{left-right order} to $q$ arc disjoint paths,
where $q$ denotes the size of a minimum $s$-$t$ cut (or the maximum number of arc disjoint $s$-$t$ paths); see~\cite{deBerg2023,deBerg2025} for details.

Our aim of this subsection is to show the following:

\begin{theorem}\label{thm:k-diverse-totalorder}
    If a combinatorial problem {\sc Prob} has the property~\eqref{cond:T}, then it also has the properties~\eqref{cond:S} and~\eqref{cond:R}.
\end{theorem}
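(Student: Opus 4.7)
The plan is to start from property~\eqref{cond:T} and construct an explicit ground set, ring family, and pre-reduction map, after which \Cref{prop:pre-reduction} will deliver the reduction map required by~\eqref{cond:R}. Given an instance $\mathbf{I}$ with total orders $(E_1,\leq_1),\ldots,(E_q,\leq_q)$, I would first replace the $E_i$ by disjoint copies if necessary and take $E \coloneqq E_1 \cup \cdots \cup E_q$. The map $(s_1,\ldots,s_q)\mapsto \{s_1,\ldots,s_q\}$ is then injective (by disjointness) and sends every solution to a subset of $E$ of size exactly $q$, which immediately yields property~\eqref{cond:S}.

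Next, to encode the sublattice structure as a ring family (so that the framework of \Cref{subsec:strategy} applies), I would introduce a fresh element $\top \notin E$ and set $R \coloneqq E \cup \{\top\}$. For each $s_i \in E_i$, let $X_i(s_i) \coloneqq \{e \in E_i \mid e \leq_i s_i\}$ be the principal down-set in $(E_i,\leq_i)$, and for each tuple $s=(s_1,\ldots,s_q) \in \cS(\mathbf{I})$ set $X(s) \coloneqq X_1(s_1) \cup \cdots \cup X_q(s_q) \subseteq E$. Because the $E_i$ are disjoint and the meet (resp.\ join) in the product is componentwise minimum (resp.\ maximum), the map $s \mapsto X(s)$ turns meets into intersections and joins into unions; hence $\cR \coloneqq \{X(s) \mid s \in \cS(\mathbf{I})\}$ is a ring family on $R$. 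Its maximal set $X_\top$ is contained in $E$, so $\top \in R \setminus X_\top$, and $X_\bot$ is nonempty since every $X(s)$ already contains each $s_i$.

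Finally, I would define $\hat{r}\colon E \to R^2$ by $\hat{r}(e) \coloneqq (e, e^+)$, where $e^+$ is the immediate $\leq_i$-successor of $e$ in the component $E_i$ containing $e$, and $e^+ \coloneqq \top$ when $e$ is the $\leq_i$-maximum. The condition that $\hat{e}^- \in X$ implies $\hat{e}^+ \in X$ is immediate: if $e^+ \in X(s)$, then $e^+ \neq \top$, so $e^+ \in E_i$ with $e^+ \leq_i s_i$, and hence $e <_i e^+ \leq_i s_i$, giving $e \in X(s)$. For the identity $\{\sup_{\hat{r}}(X) \mid X \in \cR\} = \cS(\mathbf{I})$, I would observe that for each $e \in E_i$, the membership $e \in \sup_{\hat{r}}(X(s))$ requires $e \leq_i s_i$ while $e^+ \notin X(s)$; because $e^+$ covers $e$ in $(E_i,\leq_i)$, this forces $s_i = e$, so $\sup_{\hat{r}}(X(s)) = \{s_1,\ldots,s_q\}$, matching the subset-representation of $s$. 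Applying \Cref{prop:pre-reduction} to $\hat{r}$ then yields the reduction map $r\colon E \to \Pi(\cR)^2$ required by~\eqref{cond:R}. I do not foresee a substantial obstacle; the only delicate points are the disjointness assumption on the $E_i$ (handled by taking tagged copies, which preserves the sublattice structure componentwise) and the treatment of $\leq_i$-maximal elements via the artificial sink $\top$, which also guarantees that $\overline{X}_\top$ is nonempty.
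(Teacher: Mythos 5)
Your proposal is correct and follows essentially the same route as the paper's own (sketched) proof: view each solution as its coordinate set in the disjoint union $E$ for~\eqref{cond:S}, form the ring family of P-sets (unions of principal down-sets), add an artificial $\top$, take the cover-based pre-reduction map, and invoke \Cref{prop:pre-reduction} for~\eqref{cond:R}. The only differences are cosmetic (you call the cover $e^+$ where the paper writes $e^-$, and you spell out details the paper delegates to the \StableMatching argument in \Cref{lem:stablematching-pre-reduction}).
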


Since the argument is almost the same as that in \Cref{subsec:k-diverse-stable-matching},
we only give a sketch of the proof of \Cref{thm:k-diverse-totalorder}.
Here,
for a solution $X = (e_1, e_2, \dots, e_q) \in \cS(\mathbf{I})$,
we define its \emph{P-set} $P(X)$ by $P(X) \coloneqq \{ (e_1', e_2', \dots, e_q') \mid e_i' \leq_i e_i \ (i \in [q]) \}$
and denote by $\cR_{\mathbf{I}}$ the family of all P-sets of $\mathbf{I}$,
as in \Cref{subsec:k-diverse-stable-matching}.
Then we can easily see that $(\cS(\mathbf{I}), \leq)$ and $(\cR_\mathbf{I}, \subseteq)$ are isomorphic, both of which are distributive lattices.
Let $E$ be the disjoint union (or direct sum) of $E_1, E_2, \dots, E_q$.
\begin{proof}[Sketch of proof]
    Suppose that {\sc Prob} has the property~\eqref{cond:T}.
Regarding a solution $(e_1, e_2, \dots, e_q) \in \cS(\mathbf{I})$ of $\mathbf{I}$
as a subset $\{ e_1, e_2, \dots, e_q \}$ of $E$,
we see $\cS(\mathbf{I}) \subseteq \binom{E}{q}$,
which implies that {\sc Prob} has the property~\eqref{cond:S}.
We then see the property~\eqref{cond:R}.
We add a new element $\top$ to the ground set $E$ so that the maximal set in $\cR_\mathbf{I}$ is proper in $E$,
and extend the definition of $\leq_i$ by setting $e_i < \top$ for all $e_i \in E_i$.
Now we regard $E \cup \{ \top \}$ as the ground set.
In the following, we construct a pre-reduction map.
For each $e_i \in E_i$,
we denote by $e_i^-$ the element in $E_i \cup \{ \top \}$ that is the cover of $v$ with respect to $\leq_i$,
i.e., $e_i <_i e_i^-$ and there is no element $e_i'$ in $E_i \cup \{ \top \}$ with $e_i <_i e_i' <_i e_i^-$.
We define a map $\hat{r} \colon E \to \left(E \cup \{ \top \}\right)^2$
by $\hat{r}(e) \coloneqq \left(e, e^-\right)$.
Then, by the same argument as in the proof of \Cref{lem:stablematching-pre-reduction},
this $\hat{r}$ is a pre-reduction map.
Hence,
{\sc Prob} has the property~\eqref{cond:R} by \Cref{prop:pre-reduction}.
\end{proof}

Let us see the result of De Berg, Mart\'{i}nez, and Spieksma in~\cite{deBerg2025}
in detail
to compare the time complexity of their algorithm with ours.
We denote by $T_{\textup{ir}}(\mathbf{I})$
the time required to construct a DAG representing the join-irreducible representation $(\cS(\mathbf{I})_{\textup{ir}}, \leq)$,
and by $T_{\textup{SFM}}(n, \textup{EO})$
the time required to minimize a submodular function $f$ such that $f$ has $n$ variables and one value evaluation of $f$ takes \textup{EO} time.
Then,
De Berg, Mart\'{i}nez, and Spieksma
show that,
we can solve
\SumkDiverse/\CovkDiverse of {\sc Prob} having the property~\eqref{cond:T} in $O(T_{\textup{ir}}(\mathbf{I}) + T_{\textup{SFM}}(k|E|, k^2 |E| q))$ time.
Using the state-of-the-art algorithm for SFM given in~\cite{Jiang2021},
we have $T_{\textup{SFM}}(n, \textup{EO}) = O(n^3 \textup{EO})$.
Hence, the running-time of the algorithms of De Berg, Mart\'{i}nez, and Spieksma is
$O(T_{\textup{ir}}(\mathbf{I}) + k^5|E|^4q)$.

Our framework provides much faster algorithms for \SumkDiverse and \CovkDiverse of {\sc Prob} as follows.

\begin{theorem}
    Suppose that a combinatorial problem {\sc Prob} has the property~\eqref{cond:T}.
    Then we can solve \SumkDiverse and \CovkDiverse of {\sc Prob}
    in $O(T_{\textup{ir}}(\mathbf{I}) + |E|q + (|A| + k|E|)^{1+o(1)})$ time
    and
    in $O(T_{\textup{ir}}(\mathbf{I}) + (|E| + k)q + (|A| + |E|)^{1+o(1)} \log^2 k)$ time
    for an instance $\mathbf{I}$ of {\sc Prob},
    respectively,
    where $A$ denotes the arc set of a constructed DAG representing $(\cS(\mathbf{I})_{\textup{ir}}, \leq)$.
\end{theorem}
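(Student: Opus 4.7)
The plan is to derive this result by combining \Cref{thm:k-diverse-totalorder} with \Cref{thm:optimal-potential-time-complexity}: the former guarantees that property~\eqref{cond:T} implies~\eqref{cond:S} and~\eqref{cond:R}, and furthermore its proof sketch explicitly constructs a pre-reduction map based on the ring family $\cR_{\mathbf{I}}$ of P-sets. Thus the only task left is to bound $T_P(\mathbf{I})$, $T_r(\mathbf{I})$, and $|A_P|$ appearing in \Cref{thm:optimal-potential-time-complexity} in terms of $T_{\textup{ir}}(\mathbf{I})$, $|E|$, $q$, and $|A|$.

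For $T_P(\mathbf{I})$, I would use that the map $X \mapsto P(X)$ is a lattice isomorphism between $(\cS(\mathbf{I}), \leq)$ and $(\cR_{\mathbf{I}}, \subseteq)$. Since lattice isomorphisms preserve join-irreducibility, the Birkhoff representation $(\Pi^*(\cR_{\mathbf{I}}), \preceq)$ used by our framework in \Cref{subsec:strategy} is isomorphic, as a poset, to the join-irreducible representation $(\cS(\mathbf{I})_{\textup{ir}}, \leq)$. A DAG representing the former can therefore be extracted from the DAG for the latter in time linear in its size, giving $T_P(\mathbf{I}) = O(T_{\textup{ir}}(\mathbf{I}))$ and $|A_P| = |A|$. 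For $T_r(\mathbf{I})$, the pre-reduction map $\hat r(e) = (e, e^-)$ is obtained in $O(|E|)$ time by scanning each sorted total order $(E_i, \leq_i)$ to record covers. It then remains to recover the partition $\Pi(\cR_{\mathbf{I}})$ of $E \cup \{\top\}$; each non-trivial block equals $P(x) \setminus P(x^*)$ for the join-irreducible $x$ it corresponds to and its unique lower cover $x^*$ in the lattice. Since $x, x^*$ are $q$-tuples in a product of total orders, the difference $P(x) \setminus P(x^*)$ can be enumerated in $O(q)$ time per join-irreducible, hence in $O(|E|q)$ total. Combining with \Cref{prop:pre-reduction} gives a reduction map in time $O(|E|q)$. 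Plugging these bounds into \Cref{thm:optimal-potential-time-complexity} yields exactly the stated running times for \SumkDiverse and \CovkDiverse of {\sc Prob}.

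The main obstacle will be making the $O(|E|q)$ accounting for $T_r(\mathbf{I})$ precise. The worry is that a P-set $P(x)$ can itself be as large as $|E|$, so the naive subtraction $P(x) \setminus P(x^*)$ is too slow. I would address this by exploiting the fact that in a sublattice of a product of total orders, a join-irreducible and its unique lower cover differ in only a few coordinates, so the set-theoretic difference of their P-sets is a union of short intervals in the component total orders and can be enumerated in $O(q)$ time without scanning all of $P(x)$. The total charge over all join-irreducibles is then $O(|E|q)$ because each element of $E \cup \{\top\}$ belongs to exactly one block of $\Pi(\cR_{\mathbf{I}})$, making $\sum_x |P(x) \setminus P(x^*)| \leq |E|+1$.
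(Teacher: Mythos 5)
Your overall architecture matches the paper's: reuse the given DAG for $(\cS(\mathbf{I})_{\textup{ir}}, \leq)$ as a DAG representing $(\Pi^*(\cR_{\mathbf{I}}), \preceq)$ (so $T_P(\mathbf{I}) = O(T_{\textup{ir}}(\mathbf{I}))$ and $|A_P| = |A|$), build the pre-reduction map $e \mapsto (e, e^-)$ as in the proof of \Cref{thm:k-diverse-totalorder}, recover the partition $\Pi(\cR_{\mathbf{I}})$, and invoke \Cref{prop:pre-reduction} and \Cref{thm:optimal-potential-time-complexity}. The gap is in how you recover the partition. You describe each block as $P(x) \setminus P(x^*)$ for a join-irreducible $x$ and its unique lower cover $x^*$ \emph{in the lattice} $\cS(\mathbf{I})$ --- correct as a characterization --- but $x^*$ is not available from the input: the construction yields only the join-irreducibles and a DAG on them, while $x^*$ is in general not join-irreducible and must be computed, e.g., as the join of the minimum solution with all join-irreducibles strictly below $x$ (equivalently, with the immediate predecessors of $x$ in the DAG). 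Doing this for every $x$ costs on the order of $q\,|A|$ coordinate comparisons in the worst case, which is not covered by the claimed $O(|E|q + (|A|+k|E|)^{1+o(1)})$ budget: with $|A| = \Theta(|E|^2)$ (the dense worst case the paper itself mentions) and $q = \Theta(|E|^{1/2})$, one gets $q|A| = \Theta(|E|^{5/2})$ against claimed terms of $O(|E|^{2+o(1)})$. Your supporting claim that a join-irreducible and its lower cover ``differ in only a few coordinates'' is also false in general (they can differ in all $q$ coordinates, e.g., in a two-element sublattice of a product of $q$ two-element chains), although your accounting does not actually need it.

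The paper closes exactly this hole by a different computation: take a topological order $X_1 \prec \cdots \prec X_m$ of the DAG and form the single maximal chain $Y_j = Y_{j-1} \vee X_j$ by one coordinate-wise join per step ($O(q)$ each, $O(|E|q)$ total since $m = O(|E|)$); the successive differences $P(Y_{j+1}) \setminus P(Y_j)$ are precisely the blocks of $\Pi^*(\cR_{\mathbf{I}})$ (the Birkhoff partition is independent of the chosen maximal chain), and their sizes sum to at most $|E|$, so the whole partition, and hence the reduction map, is obtained within $O(|E|q)$ time. If you replace your per-join-irreducible lower-cover step with this chain-based computation (or otherwise produce all lower covers in $O(|E|q)$ total time), the rest of your argument --- including the charge of the block sizes to $|E|+1$ and the final appeal to \Cref{thm:optimal-potential-time-complexity} --- goes through and gives the stated bounds.
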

\begin{proof}
Let $D = (\cS(\mathbf{I})_{\textup{ir}}, A)$ be a DAG representing the join-irreducible representation $(\cS(\mathbf{I})_{\textup{ir}}, \leq)$ of $(\cS(\mathbf{I}), \leq)$ (that can be constructed in $T_{\textup{ir}}(\mathbf{I})$ time).
Since $(\cS(\mathbf{I}), \leq)$ is isomorphic to $(\cR_{\mathbf{I}}, \subseteq)$,
this $D$ also represents \emph{a} Birkhoff representation of $(\cR_{\mathbf{I}}, \subseteq)$. 
In order to apply our framework,
it suffices to obtain a DAG that is \emph{the} Birkhoff representation $(\Pi^*(\cR_{\mathbf{I}}), \preceq)$ of $(\cR_{\mathbf{I}}, \subseteq)$,
or equivalently, obtain the isomorphism between $(\cS(\mathbf{I})_{\textup{ir}}, \leq)$ and $(\Pi^*(\cR_{\mathbf{I}}), \preceq)$,
which enable us to construct a pre-reduction map as in the proof (sketch) of \Cref{thm:k-diverse-totalorder}.

To this end,
we use an arbitrary topological order $\preceq$ of $D$.
Suppose that $X_1 \prec X_2 \prec \cdots \prec X_m$,
where $\{X_1, X_2, \dots, X_m\} = \cS(\mathbf{I})_{\textup{ir}}$.
This can be obtained in linear time in the size of $D$,
which is upper-bounded by $O(T_{\textup{ir}}(\mathbf{I}))$.
Let $Y_j \coloneqq X_1 \vee \cdots \vee X_j$ for each $j \in [m]$.
Since each subset $\{X_1, X_2, \dots, X_j\}$ forms an ideal of $\cS(\mathbf{I})_{\textup{ir}}$,
and the map $I \mapsto \bigvee_{X \in I} X$ gives an isomorphism between $(\cI(\cS(\mathbf{I})_{\textup{ir}}), \subseteq)$
and
$(\cS(\mathbf{I}), \leq)$,
the sequence
$Y_1 < Y_2 < \cdots < Y_m$ forms a maximal chain in $(\cS(\mathbf{I}), \leq)$.
Furthermore,
by the isomorphism $X \mapsto P(X)$ between $(\cS(\mathbf{I}), \leq)$ and $(\cR_{\mathbf{I}}, \subseteq)$,
the sequence
$P(Y_1) \subsetneq P(Y_2) \subsetneq \cdots \subsetneq P(Y_m)$ forms a maximal chain in $(\cR_{\mathbf{I}}, \subseteq)$.
We denote each $X_j$ and $Y_j$ as tuples $X_j = (e_1^j, e_2^j, \dots, e_q^j)$ and $Y_j = (f_1^j, f_2^j, \dots, f_q^j)$, respectively, where $e_i^j$ and $f_i^j$ represent the $i$-th elements of $X_j$ and $Y_j$.
Since $Y_{j+1} = (f_1^{j+1}, \dots, f_q^{j+1}) = (\max\{ f_1^{j}, e_1^{j+1} \}, \dots, \max\{ f_q^{j}, e_q^{j+1} \})$,
we can compute $Y_{j+1}$ from $Y_j$ in $q$ time.
Hence, we can obtain $Y_1, Y_2, \dots, Y_m$ in $mq = O(|E|q)$ time.
Therefore,
we can compute in $O(|E|q) + |E| = O(|E|q)$ time
$P(Y_{j+1}) \setminus P(Y_j) = \{ (e_1, e_2, \dots, e_q) \mid f_i^{j-1} < e_i \leq f_i^j \ (i \in [q]) \} \in \Pi^*(\cR_{\mathbf{I}})$
for all $j \in [m-1]$.
Thus, we have $T_P(\mathbf{I}) + T_r(\mathbf{I}) = T_{\textup{ir}}(\mathbf{I}) + O(|E|q)$.
By \Cref{thm:optimal-potential-time-complexity},
the problems
\SumkDiverse of {\sc Prob} and \CovkDiverse of {\sc Prob} can be solved in
$O(T_{\textup{ir}}(\mathbf{I}) + |E|q + (|A| + k|E|)^{1+o(1)})$ time
and
in $O(T_{\textup{ir}}(\mathbf{I}) + (|E| + k)q + (|A| + |E|)^{1+o(1)} \log^2 k)$ time,
respectively.
\end{proof}

Since $|\cS(\mathbf{I})_{\textup{ir}}|$
is upper-bounded by the length of a maximal chain in the product of total orders $(E_1, \leq_1), (E_2, \leq_2), \dots, (E_q, \leq_q)$,
we have $|\cS(\mathbf{I})_{\textup{ir}}| = O(|E|)$.
Thus, the size $|A|$ of arc set is upper-bounded by $|\cS(\mathbf{I})_{\textup{ir}}|(|\cS(\mathbf{I})_{\textup{ir}}|-1)$,
which is $O(|E|^2)$.
Even in the worst case $|A| = \Theta(|E|^2)$,
the running-time of our algorithms for \SumkDiverse and \CovkDiverse of {\sc Prob}
are 
$O(T_{\textup{ir}}(\mathbf{I}) + (|E|^2 +k|E|)^{1+o(1)})$ time
and $O(T_{\textup{ir}}(\mathbf{I}) + kq + |E|^{2+o(1)} \log^2 k)$ time,
respectively,
which are much faster than the previous ones that take $O(T_{\textup{ir}}(\mathbf{I}) + k^5|E|^4q)$ time.

We also mention that
De Berg, Mart\'{i}nez, and Spieksma~\cite{deBerg2025} introduce another diversity measure $d_{\textup{abs}}$ for {\sc Prob} having the property~\eqref{cond:T}, which is defined by
\begin{align}
    d_{\textup{abs}}(X_1, X_2, \dots, X_k) \coloneqq \sum_{1 \leq i < j \leq k} \|X_i, X_j\|,
\end{align}
where $\|(e_1, e_2, \dots, e_q), (e_1', e_2', \dots, e_q')\|$ denotes the sum of the length of the (unique) maximal chain between $e_\ell$ and $e_\ell'$ in $E_\ell$
over $\ell = 1,2, \dots, q$.
They show the polynomial-time solvability of
the $k$-diverse problem with respect to $d_{\textup{abs}}$ of 
\Mincut/\StableMatching.
Our framework cannot deal with the measure $d_{\textup{abs}}$,
since the value of $d_{\textup{abs}}$ depends not only on the multiplicity $\mu_e(\mathbf{S})$ of solutions $\mathbf{S}$.
It could be an interesting future work to generalize our framework so that it can also treat the diversity measure $d_{\textup{abs}}$.

\section*{Acknowledgments}
We are grateful to Yasunori Kinoshita for his insightful suggestion on the start of this work.
We also thank Yasuaki Kobayashi and Yutaro Yamaguchi for bibliographic information.
This work was partially supported by 
JST ERATO Grant Number JPMJER2301, 
JST ASPIRE Grant Number JPMJAP2302, 
and 
JSPS KAKENHI Grant Numbers 
JP21K17708, 
JP21H03397, 
JP22K17854, 
JP24K02901, 
JP24K21315, 
JP25K00137. 

\bibliographystyle{plain}
\bibliography{arXiv}

\end{document}